\documentclass[11pt]{article}

\usepackage[margin=1in]{geometry}
\usepackage{amsmath,amssymb,amsthm,mathtools}
\usepackage{microtype}
\usepackage{tikz}
\usepackage{booktabs}
\usepackage{array}
\usepackage{enumitem}
\usepackage{hyperref}
\usepackage{xcolor}
\usepackage{caption}
\usepackage{cite}
\usepackage{longtable}
\usepackage{listings}
\usetikzlibrary{arrows.meta,positioning,calc,decorations.pathmorphing,patterns,shapes.geometric}

\hypersetup{
  colorlinks=true,
  linkcolor=blue!50!black,
  citecolor=blue!50!black,
  urlcolor=blue!50!black
}

\newtheorem{definition}{Definition}[section]
\newtheorem{problem}[definition]{Problem}
\newtheorem{theorem}[definition]{Theorem}
\newtheorem{proposition}[definition]{Proposition}
\newtheorem{lemma}[definition]{Lemma}
\newtheorem{corollary}[definition]{Corollary}
\newtheorem{remark}[definition]{Remark}
\newtheorem{example}[definition]{Example}

\newcommand{\Z}{\mathbb{Z}}
\newcommand{\F}{\mathbb{F}}
\newcommand{\N}{\mathbb{N}}
\newcommand{\R}{\mathbb{R}}
\newcommand{\OmegaPaths}{\Omega_{n,T,q}}
\newcommand{\gammaStar}{\gamma^{\star}}
\newcommand{\eps}{\varepsilon}
\newcommand{\Prob}{\mathbb{P}}
\newcommand{\E}{\mathbb{E}}
\newcommand{\Hmin}{H_{\infty}}

\newcommand{\Enc}{\operatorname{Enc}}
\newcommand{\im}{\operatorname{im}}
\newcommand{\rank}{\operatorname{rank}}
\newcommand{\Ker}{\operatorname{ker}}
\newcommand{\Supp}{\operatorname{supp}}

\newcommand{\Adv}{\operatorname{Adv}}

\newcommand{\Setup}{\mathsf{Setup}}
\newcommand{\Sample}{\mathsf{Sample}}

\newcommand{\pp}{\mathsf{pp}}
\newcommand{\pk}{\mathsf{pk}}
\newcommand{\sk}{\mathsf{sk}}

\newcommand{\negl}{\operatorname{negl}}

\title{Exact Hidden Paths in Noisy High Dimensional Path Spaces\thanks{Mathematical and cryptographic framework draft for exact hidden path recovery. This paper proposes a candidate exact recovery relation and attack oriented research program, not a deployable cryptosystem.}}
\author{Victor Duarte Melo\\Independent Researcher\\Erd\"os number 4}
\date{\today}

\begin{document}
\maketitle

\begin{abstract}
We introduce a mathematical and cryptographic framework for studying exact path recovery in noisy high dimensional discrete path spaces inspired by the path integral formulation of quantum mechanics. In the Feynman path integral view, a transition quantity is obtained by summing contributions over all possible trajectories between boundary conditions. Standard physical methods often rely on saddlepoint, semiclassical, or variational approximations, since the goal is usually to estimate a global amplitude or dominant macroscopic behavior. This work considers a different inverse problem with cryptographic motivation: recovering one exact hidden trajectory from incomplete, noisy, projected, and aggregated path observables.

Let
\[
\gammaStar=(x_0,x_1,\ldots,x_T), \qquad x_i\in \Z_q^n,
\]
be a hidden path in a high dimensional finite space. Each transition is modified by macro increments, discrete micro perturbations, and noise,
\[
x_{i+1}=x_i+\Delta_i+\eps_i+\eta_i \pmod q,
\]
where \(\eps_i\) represents microscopic perturbations and \(\eta_i\) may be sampled from a discrete Gaussian distribution. The public data consists not of the path itself, and not of a small hash digest, but of a large vector of projected, quantized, nonlinear, or aggregated observables derived from the hidden trajectory.

We formalize the Exact Noisy Hidden Path Recovery Problem and its cryptographic search version. The objective is to reconstruct the precise path and its perturbation sequence from the public observables. The main distinction is between approximate reconstruction and exact recovery. Saddlepoint or variational methods may reveal dominant regions, coarse geometry, or average behavior, but they do not determine the exact microscopic sequence defining the hidden path. In this setting, a near solution is not a solution: even one incorrect transition or perturbation changes the recovered object.

We analyze how the admissible path space grows with dimension, path length, branching factor, perturbation entropy, and noise structure. We prove elementary information bounds showing that the effective recovery problem is limited by the amount of information contained in the published observables. Consequently, a large path space should not be compressed into a short digest if the mathematical goal is to study exact recovery. Large observable vectors are treated as meaningful public keys that preserve enough structure for verification and analysis while still obscuring the exact trajectory.

The cryptographic part of the paper develops a candidate one way relation, formal recovery games, entropy conditions, generic classical and quantum attack estimates, and a conservative roadmap toward future public key constructions. The proposed framework is not a complete encryption scheme and makes no deployment claim. Its purpose is to define a precise candidate hard problem for post quantum cryptographic investigation, together with the mathematical constraints that such a candidate must satisfy.
\end{abstract}

\tableofcontents

\section{Introduction}

The path integral formulation of quantum mechanics, introduced by Feynman \cite{feynman1948} and developed in many mathematical and physical directions \cite{feynmanhibbs1965,schulman1981,kleinert2009}, replaces the idea of a single classical trajectory by a sum over all admissible trajectories. Informally, the transition amplitude from a boundary condition \(a\) to a boundary condition \(b\) is written as
\begin{equation}
K(b,a)=\int \mathcal{D}\gamma\,\exp\left(\frac{i}{\hbar}S[\gamma]\right),
\label{eq:pathintegral}
\end{equation}
where \(S[\gamma]\) is the action of the path \(\gamma\). This expression is not merely a computational formula. It is a conceptual change: instead of asking which path is real, the formalism assigns a contribution to every path and combines the contributions by phase.

Exact path integrals are rarely available in closed form. One often studies saddlepoints, semiclassical expansions, variational approximations, perturbation series, Euclidean continuations, lattice approximations, Monte Carlo methods, or stochastic representations \cite{feynmanhibbs1965,schulman1981,kleinert2009}. These techniques are powerful because many physical questions concern global quantities. A physicist may want an amplitude, a partition function, a correlation function, or the dominant contribution in a limit. For such questions, approximation is often meaningful.

This paper studies a different mathematical question. Suppose that among a very large family of discrete paths there is a particular hidden path. The goal is not to approximate the global path sum. The goal is to recover that exact path, including microscopic perturbations and noise variables. In that setting, a saddlepoint approximation may give useful macroscopic information while still failing completely to recover the object of interest.

The guiding distinction of the paper is
\begin{equation}
\text{approximating a path sum} \quad \ne \quad \text{recovering one exact path.}
\label{eq:distinction}
\end{equation}
This distinction is simple, but it changes the nature of the inverse problem. If a candidate solution is judged by exact equality of a discrete encoding, then a path that is geometrically close may still be wrong. A single incorrect microscopic perturbation can produce a different discrete trajectory, a different observable vector, or a different final exact object. In this sense, approximate reconstruction and exact recovery become separate tasks.

We define a discrete high dimensional path model. A path is a sequence
\[
\gamma=(x_0,x_1,\ldots,x_T), \qquad x_i\in\Z_q^n,
\]
and each step is governed by a macro increment, a micro perturbation, and a noise vector. The public information is a vector of observables
\[
Y=\Phi(\gamma,\eps,\eta),
\]
where \(\Phi\) may involve projections, aggregated sums, quantization, nonlinear transformations, and randomized observation maps. The central inverse problem asks whether one can recover \((\gamma,\eps,\eta)\) exactly from \(Y\) and public parameters.

A key design choice is that the observable vector is allowed to be large. A small digest, for example a 256 bit hash, would compress the public information so aggressively that the effective inverse problem would be dominated by the information content of the digest rather than by the path structure. The framework in this paper explicitly separates a large mathematical observable vector from optional identifiers or checksums.

\subsection{Contributions}

The contributions of this paper are as follows.

\begin{enumerate}[label=(\roman*)]
\item We formulate a discrete high dimensional path space inspired by path sums and path integrals.
\item We introduce micro perturbed noisy paths, where exact recovery requires reconstructing both the trajectory and the microscopic variables that define it.
\item We define the Exact Noisy Hidden Path Recovery Problem as a mathematical inverse problem.
\item We prove elementary counting and information bounds for admissible path spaces, projected observables, compressed public data, and observable fibers.
\item We formalize the separation between approximate reconstruction and exact recovery in a discrete setting using exact distance, fiber membership, and microscopic sensitivity.
\item We discuss how dimension, branching, perturbation entropy, Gaussian noise, projection, nonlinearity, and quantization interact.
\item We propose a hierarchy of observable systems, from linear projections to nonlinear quantized path summaries, and identify structural leakage risks.
\item We formulate a cryptographic search assumption based on exact noisy hidden path recovery and define one way recovery games.
\item We analyze why large observable vectors are necessary if the public object is intended to encode more than a short digest.
\item We discuss generic classical and quantum attacks, including exhaustive recovery and Grover type search, and separate these from structural attacks such as linearization, lattice reduction, satisfiability solving, dynamic programming, and statistical leakage.
\item We outline a conservative path from mathematical relation to future cryptographic primitives, while explicitly avoiding claims that a complete public key encryption or key encapsulation mechanism has already been obtained.
\end{enumerate}

The present goal is twofold. First, we define the mathematical object and the exact recovery problem. Second, we formulate the associated cryptographic assumption carefully enough that it can be studied, attacked, and falsified.

\subsection{Relation to inverse problems and discrete noise}

The framework is connected at a high level to inverse problems, compressed sensing, noisy linear systems, lattice problems, post quantum cryptography, and discrete Gaussian techniques. Compressed sensing studies recovery of structured signals from limited measurements \cite{candes2006,donoho2006}. Lattice based mathematics and cryptography study noisy linear relations, discrete Gaussian distributions, and worst case or average case problem formulations \cite{micciancio2009,regev2005}. Modern public key cryptography often begins by identifying a search or decision problem for which no efficient classical or quantum algorithm is known \cite{goldreich2001,katzlindell2014}. The present model differs in emphasis. The unknown object is not merely a vector. It is a time ordered microscopic path with local transition variables, noise, and possibly nonlinear observables. The goal is not approximate estimation under a norm, but exact recovery of the hidden microscopic object.

\section{From path integrals to discrete path sums}

\subsection{The continuous intuition}

The formal expression \eqref{eq:pathintegral} contains a symbolic measure \(\mathcal{D}\gamma\) over paths. Although the rigorous construction of such objects depends on the setting, the guiding idea is that the transition quantity is assembled from contributions of all paths satisfying boundary conditions. In Euclidean settings one often writes a weighted expression of the form
\begin{equation}
Z=\int \mathcal{D}\gamma\,\exp(-S[\gamma]),
\label{eq:euclidean}
\end{equation}
which resembles a partition function. The difference between oscillatory and decaying weights is important in physics, but the present paper only needs the broader structural idea: a path sum is a global aggregate over an enormous family of paths.

For mathematical inverse problems it is often useful to replace a continuous path space by a finite discrete one. This eliminates analytic issues that are not central here and allows exact counting. We therefore work over finite modules and finite path families. The resulting model is not a claim about quantum dynamics. It is a discrete mathematical abstraction inspired by path integral reasoning.

\begin{figure}[t]
\centering
\begin{tikzpicture}[scale=0.95, every node/.style={font=\small}]
  \draw[->,thick] (-0.2,0) -- (5.2,0) node[right] {time};
  \draw[->,thick] (0,-1.4) -- (0,1.7) node[above] {state};
  \filldraw[blue!70!black] (0.3,-0.8) circle (2pt) node[left] {$a$};
  \filldraw[blue!70!black] (4.8,1.0) circle (2pt) node[right] {$b$};
  \draw[gray!55, thick, decorate, decoration={snake, amplitude=0.7mm, segment length=4mm}] (0.3,-0.8) .. controls (1.0,0.6) and (2.0,-1.1) .. (4.8,1.0);
  \draw[gray!55, thick, decorate, decoration={snake, amplitude=0.6mm, segment length=3mm}] (0.3,-0.8) .. controls (1.1,-1.3) and (3.0,1.5) .. (4.8,1.0);
  \draw[gray!55, thick, decorate, decoration={snake, amplitude=0.5mm, segment length=5mm}] (0.3,-0.8) .. controls (1.3,0.2) and (3.3,-0.4) .. (4.8,1.0);
  \draw[red!70!black, very thick] (0.3,-0.8) .. controls (1.0,-0.3) and (2.7,0.3) .. (4.8,1.0);
  \node[align=center] at (2.7,-1.55) {A path sum aggregates many paths.\\Exact recovery asks for one precise hidden path.};
\end{tikzpicture}
\caption{The path sum intuition. Many trajectories contribute to a global quantity, but the inverse problem studied here asks for exact recovery of one hidden trajectory.}
\label{fig:pathsum}
\end{figure}

\subsection{Finite path sums}

Let \(\Omega\) be a finite set of admissible paths. A finite path sum has the form
\begin{equation}
Z=\sum_{\gamma\in\Omega} w(\gamma),
\label{eq:finitepathsum}
\end{equation}
where \(w:\Omega\to R\) is a weight taking values in a ring, field, or numerical domain. A common physical intuition is that some families of paths dominate the sum while others cancel or contribute negligibly. This motivates approximation.

The inverse problem in this paper is not to evaluate \(Z\). Instead, a hidden path \(\gammaStar\in\Omega\) is used to generate observables. The public data may resemble summaries that could have arisen from path sum calculations, but the hidden object is one precise trajectory. This shifts attention from evaluating \eqref{eq:finitepathsum} to recovering a preimage under an observation map.

\begin{definition}[Observation map]
Let \(\mathcal{X}\) be a finite set of microscopic path objects. An observation map is a function
\[
\Phi:\mathcal{X}\to \mathcal{Y},
\]
where \(\mathcal{Y}\) is the observable space. For \(Y\in\mathcal{Y}\), the fiber over \(Y\) is
\[
\Phi^{-1}(Y)=\{X\in\mathcal{X}:\Phi(X)=Y\}.
\]
\end{definition}

In the present model, an element of \(\mathcal{X}\) contains a path, perturbations, and noise. Exact recovery means finding the specific hidden element, not merely any element in a large fiber unless the problem is defined only up to equivalence.

\section{Discrete high dimensional paths}

\subsection{State space and boundary constraints}

Let \(q\ge 2\) be an integer, let \(n\ge 1\) be a dimension, and let \(T\ge 1\) be a path length. We write \(\Z_q=\Z/q\Z\). A state is a vector in \(\Z_q^n\). A discrete path of length \(T\) is a sequence of \(T+1\) states. Unless stated otherwise, \(\Z_q^n\) is used only as a finite abelian group. Whenever rank, kernel dimension, linear independence, or Gaussian elimination over the base ring is used, we explicitly assume that \(q\) is prime and identify \(\Z_q\) with the finite field \(\F_q\).

\begin{definition}[Discrete path space]
The full discrete path space is
\begin{equation}
\OmegaPaths=\{(x_0,x_1,\ldots,x_T): x_i\in\Z_q^n\}.
\end{equation}
If boundary conditions \(x_0=a\) and \(x_T=b\) are fixed, the boundary constrained path space is denoted by
\begin{equation}
\Omega_{n,T,q}(a,b)=\{\gamma\in\OmegaPaths:x_0=a,\ x_T=b\}.
\end{equation}
\end{definition}

The full path space has cardinality
\[
|\OmegaPaths|=q^{n(T+1)}.
\]
The boundary constrained path space has cardinality
\[
|\Omega_{n,T,q}(a,b)|=q^{n(T-1)},
\]
provided \(T\ge 1\). This already grows exponentially in both dimension and length.

Often one does not allow arbitrary jumps. A path may be generated by choosing increments from a set of allowed directions.

\begin{definition}[Branching model]
Let \(\mathcal{D}\subseteq \Z_q^n\) be a set of allowed macro increments, with \(|\mathcal{D}|=b\). A macro path is generated by
\begin{equation}
x_{i+1}=x_i+\Delta_i\pmod q, \qquad \Delta_i\in\mathcal{D}.
\end{equation}
The number \(b\) is called the branching factor.
\end{definition}

If no endpoint is fixed, the number of possible increment sequences is \(b^T\). If endpoints are fixed, different increment sequences may collide at the same endpoint, but the search space of possible histories remains at most \(b^T\) and can be close to that size when collisions are rare for the chosen parameters.

\begin{theorem}[Elementary growth of path histories]
\label{thm:growth}
Let \(\mathcal{D}\) be an allowed increment set of size \(b\). The number of macro increment histories of length \(T\) is exactly \(b^T\). If, in addition, each step has an independent micro perturbation chosen from a set \(\mathcal{E}\) of size \(r\), then the number of macro and micro histories is exactly \((br)^T\).
\end{theorem}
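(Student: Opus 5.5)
The plan is to identify each history with an element of a Cartesian product and then count that product with the multiplication principle. The key point is to count \emph{histories}, meaning sequences of choices, and not the resulting state sequences. A macro increment history of length \(T\) is by definition a tuple \((\Delta_0,\Delta_1,\ldots,\Delta_{T-1})\) with each \(\Delta_i\in\mathcal{D}\). The set of such histories is therefore exactly \(\mathcal{D}^T\), and \(|\mathcal{D}^T|=|\mathcal{D}|^T=b^T\). For a rigorous version I would argue by induction on \(T\). For \(T=1\) there are \(b\) choices. For the inductive step I would use the bijection \(\mathcal{D}^{T+1}\cong\mathcal{D}^T\times\mathcal{D}\), so that \(|\mathcal{D}^{T+1}|=b^T\cdot b\).

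For the second claim, a macro and micro history is a tuple \(((\Delta_0,\eps_0),\ldots,(\Delta_{T-1},\eps_{T-1}))\). Independence of the micro perturbation here means that the admissible choice of \(\eps_i\) is drawn from the same set \(\mathcal{E}\) regardless of \(\Delta_i\) and of the earlier steps. Under that reading, the set of histories is \((\mathcal{D}\times\mathcal{E})^T\). Since \(|\mathcal{D}\times\mathcal{E}|=br\), the same product count gives exactly \((br)^T\). Equivalently, one can reorder coordinates, which gives a bijection with \(\mathcal{D}^T\times\mathcal{E}^T\) and the count \(b^T r^T\).

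The only step needing care is the word ``exactly'', and it should be addressed explicitly. The count concerns sequences of choices, so collisions in the generated states do not reduce it. The trajectory map \((\Delta_i)\mapsto(x_0,\ldots,x_T)\) need not be injective once \(x_0\) is fixed: for example, \(\Delta_i+\eps_i\) may coincide for distinct pairs, or wraparound modulo \(q\) may identify different histories. This matches the paper's preceding remark that endpoint collisions affect paths but not the history count.

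I would also state the assumption that the choice sets are fixed and do not depend on the time step or on previous choices. If the sets varied with the step, the count would instead be \(\prod_i b_i r_i\) by the same argument.
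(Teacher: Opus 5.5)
Your proof is correct and is essentially the paper's own argument: both count $b$ (resp.\ $br$) independent choices at each of the $T$ steps and apply the multiplication principle, and your Cartesian-product and induction formulation just makes this explicit. One small correction to your side remark: with $x_0$ fixed and $\mathcal{D}\subseteq\Z_q^n$, the map from macro histories alone to state paths is actually injective, since $\Delta_i=x_{i+1}-x_i$. Non-injectivity arises only for combined $(\Delta,\eps)$ histories or under the endpoint map, and this does not affect your count.
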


\begin{proof}
At each of the \(T\) steps there are \(b\) independent choices for \(\Delta_i\). Thus the number of macro histories is the product \(b\cdot b\cdots b=b^T\). If each step also has \(r\) choices for \(\eps_i\), then each step has \(br\) combined choices, hence \((br)^T\) histories.
\end{proof}

\begin{example}
If \(b=256\) and \(T=256\), then the number of macro histories is
\[
256^{256}=2^{2048}.
\]
If each step also has \(r=256\) micro perturbations, the number becomes
\[
(256\cdot 256)^{256}=2^{4096}.
\]
These are counting statements, not security statements. They show that exhaustive enumeration of histories becomes infeasible very quickly, but they do not rule out structural attacks.
\end{example}

\subsection{Endpoint constrained histories}

The elementary count \(b^T\) counts histories before imposing an endpoint. If the endpoint is fixed, the count depends on additive cancellations among increments. This distinction should be made explicit because endpoint constraints can reduce the number of admissible histories without making the inverse problem trivial.

Assume in this subsection that \(q\) is prime and that paths evolve over the additive group \(\F_q^n\). Let \(\mathcal{D}\subseteq \F_q^n\) be an allowed increment set. For \(a,b\in \F_q^n\), define
\[
N_T^{\mathcal{D}}(a,b)=\#\left\{(\Delta_0,\ldots,\Delta_{T-1})\in\mathcal{D}^T:\sum_{i=0}^{T-1}\Delta_i=b-a\right\}.
\]

\begin{theorem}[Endpoint constrained increment count]
\label{thm:character_count}
Let \(\widehat{\F_q^n}\) denote the additive character group of \(\F_q^n\). Then
\[
N_T^{\mathcal{D}}(a,b)=\frac{1}{q^n}\sum_{\chi\in\widehat{\F_q^n}}\chi(a-b)\left(\sum_{d\in\mathcal{D}}\chi(d)\right)^T.
\]
In particular, if \(\mathcal{D}=\F_q^n\), then \(N_T^{\mathcal{D}}(a,b)=q^{n(T-1)}\) for every \(a,b\).
\end{theorem}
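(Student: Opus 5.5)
The plan is to prove the formula by orthogonality of additive characters on the finite abelian group \(G=\F_q^n\). The one fact needed is that for every \(g\in G\),
\[
\frac{1}{q^n}\sum_{\chi\in\widehat{G}}\chi(g)=\mathbf{1}[g=0].
\]
I will justify this first. When \(g=0\) each term equals \(1\) and \(|\widehat{G}|=q^n\). When \(g\ne 0\) some character \(\psi\) has \(\psi(g)\ne 1\); for instance take \(\chi_u(x)=\omega^{\mathrm{Tr}(u\cdot x)}\) with a suitable \(u\), where \(\omega\) is a primitive \(p\)-th root of unity. Multiplying the sum by \(\psi(g)\) permutes the summands, so the sum must vanish.

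Next I rewrite the count as a sum of indicators. Set \(s(\Delta)=\sum_i\Delta_i\). Then
\[
N_T^{\mathcal{D}}(a,b)=\sum_{(\Delta_0,\ldots,\Delta_{T-1})\in\mathcal{D}^T}\mathbf{1}\bigl[a-b+s(\Delta)=0\bigr],
\]
since \(s(\Delta)=b-a\) holds exactly when \(a-b+s(\Delta)=0\). I then substitute the orthogonality identity and interchange the two finite sums. Each \(\chi\) is a homomorphism, so
\[
\chi\bigl(a-b+\textstyle\sum_i\Delta_i\bigr)=\chi(a-b)\prod_i\chi(\Delta_i).
\]
The sum over \(\mathcal{D}^T\) of this product factors as \(\bigl(\sum_{d\in\mathcal{D}}\chi(d)\bigr)^T\), which gives the stated formula. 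The only care needed here is the sign convention: the formula contains \(\chi(a-b)\) rather than \(\chi(b-a)\), and the indicator above has been set up so that this sign comes out correctly.

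For the special case \(\mathcal{D}=\F_q^n\), the inner sum \(\sum_{d\in G}\chi(d)\) equals \(q^n\) for the trivial character and \(0\) for every other character. This is the dual orthogonality relation, proved by the same shifting trick. Only the trivial character therefore survives, and it contributes
\[
q^{-n}\cdot 1\cdot (q^n)^T=q^{n(T-1)}.
\]
This matches the direct count: \(\Delta_0,\ldots,\Delta_{T-2}\) are free, and \(\Delta_{T-1}\) is then determined. That agreement serves as a consistency check.

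The main obstacle is really just bookkeeping, namely making the orthogonality relations rigorous for \(\F_q^n\). The cleanest route is to use the explicit description \(\widehat{G}=\{\chi_u\}\), which also gives \(|\widehat{G}|=q^n\). I would note that primality of \(q\) is not essential, because any finite abelian group works. The remaining steps are routine exchanges of finite sums.
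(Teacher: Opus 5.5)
Your proposal is correct and follows the paper's proof essentially step for step. Both arguments expand the count via orthogonality of additive characters, interchange the finite sums, and factor the result using the homomorphism property; when $\mathcal{D}=\F_q^n$, only the trivial character survives. Your explicit justification of the orthogonality relations, which the paper simply cites, and your direct-count consistency check are sound additions.
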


\begin{proof}
Use the character orthogonality identity on the finite abelian group \(\F_q^n\):
\[
\frac{1}{q^n}\sum_{\chi\in\widehat{\F_q^n}}\chi(u)=
\begin{cases}
1,&u=0,\\
0,&u\ne 0.
\end{cases}
\]
Therefore
\[
N_T^{\mathcal{D}}(a,b)
=\sum_{\Delta_0,\ldots,\Delta_{T-1}\in\mathcal{D}}
\frac{1}{q^n}\sum_{\chi}\chi\left(a-b+\sum_{i=0}^{T-1}\Delta_i\right).
\]
Interchanging the sums gives
\[
N_T^{\mathcal{D}}(a,b)=\frac{1}{q^n}\sum_{\chi}\chi(a-b)
\prod_{i=0}^{T-1}\left(\sum_{d\in\mathcal{D}}\chi(d)\right),
\]
which is the claimed formula. If \(\mathcal{D}=\F_q^n\), then \(\sum_{d\in\F_q^n}\chi(d)=0\) for every nontrivial character and equals \(q^n\) for the trivial character, so the count is \(q^{-n}(q^n)^T=q^{n(T-1)}\).
\end{proof}

\begin{remark}
Theorem~\ref{thm:character_count} is useful because it separates the size of the history space from the size of the endpoint constrained history space. A path generator can have \(b^T\) possible histories, while the number leading to a particular endpoint depends on the additive spectrum of \(\mathcal{D}\). This is one place where structural leakage can enter: highly symmetric increment sets may create counts that are easier to analyze than intended.
\end{remark}

\begin{corollary}[Endpoint count with independent transition alphabets]
\label{cor:endpoint_combined_alphabet}
Assume \(q\) is prime and let \(\mathcal{A}\subseteq\F_q^n\) be the set of all possible total transition increments
\[
u_i=\Delta_i+\eps_i+\eta_i.
\]
If the effective transition choice at each step ranges over \(\mathcal{A}\), then the number of transition histories leading from \(a\) to \(b\) in \(T\) steps is
\[
\frac{1}{q^n}\sum_{\chi\in\widehat{\F_q^n}}\chi(a-b)\left(\sum_{u\in\mathcal{A}}m(u)\chi(u)\right)^T,
\]
where \(m(u)\) is the number of triples \((\Delta,\eps,\eta)\) producing the effective increment \(u\). If every effective increment has the same multiplicity and \(\mathcal{A}=\F_q^n\), the count is that multiplicity to the power \(T\) times \(q^{n(T-1)}\).
\end{corollary}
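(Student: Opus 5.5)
The plan is to repeat the character-sum argument from Theorem~\ref{thm:character_count}, now with a weighted alphabet. I will treat a transition history as a sequence of triples \((\Delta_i,\eps_i,\eta_i)_{i=0}^{T-1}\), each ranging over the product of the per-step alphabets. Each triple determines an effective increment \(u_i=\Delta_i+\eps_i+\eta_i\in\mathcal{A}\), and the history reaches \(b\) from \(a\) exactly when \(\sum_i u_i=b-a\). The count to compute is therefore the number of triple sequences satisfying this linear condition.

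First I will reduce to a weighted count over \(\mathcal{A}^T\). Grouping triple sequences by their effective increment sequence \((u_0,\ldots,u_{T-1})\), and using that the triples at different steps are chosen independently, the number of triple sequences with a given \(u\)-sequence is \(\prod_i m(u_i)\). The count becomes
\[
\sum_{u_0,\ldots,u_{T-1}\in\mathcal{A}}\Bigl(\prod_{i}m(u_i)\Bigr)\,\mathbf{1}\Bigl[\textstyle\sum_i u_i=b-a\Bigr].
\]

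Next I will insert the orthogonality identity \(\frac{1}{q^n}\sum_\chi\chi(v)=\mathbf{1}[v=0]\) with \(v=a-b+\sum_i u_i\). Interchanging the finite sums and using multiplicativity \(\chi(a-b+\sum u_i)=\chi(a-b)\prod_i\chi(u_i)\), the sum over each \(u_i\) factors. This yields \(\frac{1}{q^n}\sum_\chi\chi(a-b)\bigl(\sum_{u\in\mathcal{A}}m(u)\chi(u)\bigr)^T\). This is just Theorem~\ref{thm:character_count} with the indicator of \(\mathcal{D}\) replaced by the weight \(m\), and one could equally phrase it as convolution of the measure \(m\).

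For the special case, set \(m(u)\equiv M\) and \(\mathcal{A}=\F_q^n\). Then \(\sum_u M\chi(u)\) vanishes for nontrivial \(\chi\) and equals \(Mq^n\) for the trivial one, giving \(q^{-n}(Mq^n)^T=M^Tq^{n(T-1)}\).

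The only real obstacle is interpretive: making precise that "transition history" means the sequence of triples rather than of effective increments. Only with that reading does the multiplicity \(m\) enter, and it requires the per-step choices to form a fixed product set independent across steps. I will state this assumption explicitly at the start.
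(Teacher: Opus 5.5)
Your proposal is correct and follows the paper's proof: rerun the character-orthogonality argument of Theorem~\ref{thm:character_count} with the indicator of \(\mathcal{D}\) replaced by the multiplicity weight \(m\), then note that only the trivial character survives in the uniform full-alphabet case. Two of your steps make explicit what the paper's one-line proof leaves implicit: grouping triple sequences by effective-increment sequence (essentially Proposition~\ref{prop:transition_decomposition_multiplicity}), and stating that histories are counted as sequences of independently chosen triples.
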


\begin{proof}
Repeat the character orthogonality proof of Theorem~\ref{thm:character_count}, replacing the unweighted increment set by the weighted multiset of effective increments. The multiplicity function \(m(u)\) records how many microscopic transition triples produce the same total increment. In the uniform full alphabet case, only the trivial character survives, giving the stated expression.
\end{proof}

\begin{proposition}[Microscopic decomposition multiplicity]
\label{prop:transition_decomposition_multiplicity}
Let \(\mathcal{D},\mathcal{E},\mathcal{N}\subseteq \Z_q^n\) be finite transition alphabets and define the effective transition map
\[
\mu:\mathcal{D}\times\mathcal{E}\times\mathcal{N}\to \Z_q^n,
\qquad
\mu(\Delta,\eps,\eta)=\Delta+\eps+\eta \pmod q.
\]
For \(u\in\Z_q^n\), let
\[
m(u)=|\mu^{-1}(u)|.
\]
If an effective increment sequence \((u_0,\ldots,u_{T-1})\) is fixed, then the number of microscopic transition histories \((\Delta_i,\eps_i,\eta_i)_{i=0}^{T-1}\) that realize it is
\[
\prod_{i=0}^{T-1}m(u_i).
\]
Consequently, observing the effective state path alone does not identify the microscopic transition history unless \(m(u_i)=1\) for every realized increment.
\end{proposition}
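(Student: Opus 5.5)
The plan is to view the set of microscopic histories realizing a fixed effective sequence as a Cartesian product of per-step fibers of \(\mu\), and then read off its size. Fix \((u_0,\ldots,u_{T-1})\). A microscopic history \((\Delta_i,\eps_i,\eta_i)_{i=0}^{T-1}\in(\mathcal{D}\times\mathcal{E}\times\mathcal{N})^T\) realizes this sequence exactly when \(\mu(\Delta_i,\eps_i,\eta_i)=u_i\) for every \(i\). Each condition involves only the \(i\)-th triple. Hence the realizing set is
\[
\mu^{-1}(u_0)\times\mu^{-1}(u_1)\times\cdots\times\mu^{-1}(u_{T-1}).
\]
To verify this, I will check both inclusions directly from the definition of \(\mu\).

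Cardinality of a finite product is multiplicative, so the count is \(\prod_{i=0}^{T-1}|\mu^{-1}(u_i)|=\prod_{i=0}^{T-1}m(u_i)\). This is the same step-by-step independence argument used in Theorem~\ref{thm:growth}, with the per-step alphabet \(\mathcal{D}\times\mathcal{E}\) replaced by the fiber \(\mu^{-1}(u_i)\). If some \(u_i\) is not in the image of \(\mu\), then \(m(u_i)=0\) and the formula correctly gives zero.

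For the consequence, I will first observe that the effective state path \((x_0,\ldots,x_T)\) determines the effective increments \(u_i=x_{i+1}-x_i \pmod q\). Conversely, \(x_0\) together with the increments determines the path. So the data visible from the state path is exactly \(x_0\) and \((u_i)\). The microscopic history is then identified if and only if the product above equals \(1\). The sequence is realized, so each \(m(u_i)\ge 1\), and the product equals \(1\) exactly when every \(m(u_i)=1\). Whenever some \(m(u_i)\ge 2\), there are at least two distinct microscopic histories producing the same state path.

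No step is a real obstacle. The only care needed is in the consequence: I must note that realized increments have \(m(u_i)\ge1\) before concluding that the product is \(1\) if and only if each factor is \(1\).
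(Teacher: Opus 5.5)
Your proposal is correct and follows essentially the same route as the paper, which also counts realizing histories by applying the multiplication rule to the independent per-step fibers \(\mu^{-1}(u_i)\) and derives non-identifiability from any realized increment with \(m(u_i)>1\). Your explicit Cartesian-product description and your remark that realized increments satisfy \(m(u_i)\ge 1\) make the same argument slightly more precise.
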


\begin{proof}
For each time index \(i\), the number of triples producing the fixed effective increment \(u_i\) is exactly \(m(u_i)\) by definition. The choices at different time indices are independent once the effective increment sequence is fixed, so the multiplication rule gives \(\prod_i m(u_i)\). If some realized increment has multiplicity greater than one, at least two different microscopic transition histories produce the same effective state transition sequence, so the microscopic history is not identified by the state path alone.
\end{proof}

\subsection{Path encodings and exact equality}

A path object must be encoded in a finite representation before exact equality can be discussed. Let
\[
X=(\gamma,\Delta,\eps,\eta)
\]
be a complete microscopic object. An encoding function
\[
\Enc:\mathcal{X}\to\{0,1\}^L
\]
assigns a bit string to each object. In mathematical sections, exact recovery means recovering \(X\). In computational sections, exact recovery means recovering \(\Enc(X)\).

\begin{definition}[Exact recovery]
Given public parameters \(P\), an observable \(Y=\Phi_P(X^{\star})\), and a candidate \(\widetilde X\), the candidate is an exact object recovery if
\[
\widetilde X=X^{\star}.
\]
If the recovery target is the encoded object, the candidate is an exact encoded recovery if
\[
\Enc(\widetilde X)=\Enc(X^{\star}).
\]
These two notions coincide whenever \(\Enc\) is injective on the admissible domain. If \(\Enc\) is not injective, encoded recovery is a quotient recovery problem rather than recovery of the full microscopic object. A candidate that is close under a metric but not equal to the required target is called an approximate reconstruction.
\end{definition}

The distinction seems elementary, but it is central. A path may be close under Euclidean, Hamming, or edit distance and still fail exact equality.

\section{High dimension and projected views}

High dimension plays two roles. First, it increases the state space. Second, it makes partial observations less informative. A low dimensional projection of a high dimensional vector discards many degrees of freedom. This section records exact versions of that principle over finite fields.

When linear algebra is used, we assume that \(q\) is prime, so \(\Z_q=\F_q\) is a finite field.

\begin{definition}[Coordinate projection]
Let \(P:\F_q^n\to \F_q^k\) be a linear map with rank \(k\le n\). For a path \(\gamma=(x_0,\ldots,x_T)\), the projected path is
\[
P\gamma=(Px_0,Px_1,\ldots,Px_T).
\]
\end{definition}

\begin{theorem}[Projection ambiguity]
\label{thm:projection}
Let \(P:\F_q^n\to \F_q^k\) be linear of rank \(k\). For every \(y\in\im(P)\), the set \(P^{-1}(y)\) has exactly \(q^{n-k}\) elements. Consequently, for a projected path \((y_0,\ldots,y_T)\in(\im(P))^{T+1}\), the number of full paths mapping to it is exactly
\[
q^{(n-k)(T+1)}.
\]
If the full endpoints are fixed to \(x_0=a\) and \(x_T=b\), and if the projected endpoints satisfy \(y_0=P a\) and \(y_T=P b\), then the number of full paths with these fixed endpoints and projected intermediate values \((y_1,\ldots,y_{T-1})\) is exactly
\[
q^{(n-k)(T-1)}.
\]
If either endpoint compatibility condition fails, the number is zero.
\end{theorem}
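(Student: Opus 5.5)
The plan is to reduce everything to a fiber count for a single linear map and then use the product structure of path space. First, fix \(y\in\im(P)\) and choose some \(x_y\) with \(Px_y=y\). Translation by \(x_y\) is a bijection from \(\Ker(P)\) onto \(P^{-1}(y)\): if \(Px=y\), then \(x-x_y\in\Ker(P)\), and conversely \(x_y+z\) maps to \(y\) for every \(z\in\Ker(P)\). So \(|P^{-1}(y)|=|\Ker(P)|\). Since \(q\) is prime, \(\F_q^n\) is a vector space and rank–nullity gives \(\dim\Ker(P)=n-k\). Hence \(|\Ker(P)|=q^{n-k}\). This step needs the field assumption stated at the start of the section, and it is the only genuinely algebraic input.

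Second, the path map \(\gamma\mapsto P\gamma\) acts coordinatewise on \(\OmegaPaths=(\F_q^n)^{T+1}\). The preimage of \((y_0,\ldots,y_T)\) is therefore the Cartesian product \(\prod_{i=0}^T P^{-1}(y_i)\). When every \(y_i\in\im(P)\), each factor has \(q^{n-k}\) elements, so the product has \(q^{(n-k)(T+1)}\) elements.

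Third, for the endpoint-constrained statement, the preimage is the set of tuples with \(x_0=a\), \(x_T=b\), \(Px_0=y_0\), \(Px_T=y_T\), and \(Px_i=y_i\) for \(0<i<T\). This set again splits as a product of per-coordinate sets:
\begin{itemize}
\item coordinate \(0\) contributes \(\{a\}\cap P^{-1}(y_0)\), which has size \(1\) if \(y_0=Pa\) and size \(0\) otherwise;
\item coordinate \(T\) contributes \(\{b\}\cap P^{-1}(y_T)\), handled the same way;
\item each interior coordinate contributes \(P^{-1}(y_i)\), which has size \(q^{n-k}\).
\end{itemize}
Multiplying gives \(q^{(n-k)(T-1)}\) when both compatibility conditions hold, and \(0\) if either fails.

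The interior count implicitly requires \(y_1,\ldots,y_{T-1}\in\im(P)\). Since \(P\) has rank \(k\) into \(\F_q^k\), it is surjective, so \(\im(P)=\F_q^k\) and this is automatic.

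Two edge cases need care. When \(T=1\) there are no interior coordinates, and the empty product gives the count \(1=q^0\), which matches the formula. The statement presupposes \(T\ge1\), so \(x_0\) and \(x_T\) are distinct coordinates and do not collide.

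I expect no real obstacle. The main point needing care is the kernel-coset bijection together with the rank–nullity step over \(\F_q\), and stating clearly that the rank hypothesis makes \(P\) surjective.
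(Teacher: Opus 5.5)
Your proposal is correct and follows the paper's proof essentially step for step: you count each fiber as a coset of \(\Ker(P)\) via rank--nullity, treat the \(T+1\) states independently, and note that fixed endpoints contribute no free kernel choices, or no paths at all when incompatible. Your additional remarks are small clarifications the paper leaves implicit: rank \(k\) makes \(P\) surjective, so the interior values automatically lie in \(\im(P)\), and the \(T=1\) case is the empty product \(1=q^0\).
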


\begin{proof}
Since \(P\) has rank \(k\), the rank nullity theorem gives \(\dim \Ker(P)=n-k\). For any \(y\in\im(P)\), choose one \(x_y\) with \(Px_y=y\). Then every solution is of the form \(x_y+z\) with \(z\in\Ker(P)\), and there are \(|\Ker(P)|=q^{n-k}\) such vectors. Applying this independently to each of the \(T+1\) projected states gives \(q^{(n-k)(T+1)}\) full paths. If the full endpoints are fixed, then the first and last states no longer contribute free kernel choices. The remaining \(T-1\) intermediate states contribute independently, giving \(q^{(n-k)(T-1)}\). If \(y_0\ne P a\) or \(y_T\ne P b\), no compatible full path exists.
\end{proof}

\begin{figure}[t]
\centering
\begin{tikzpicture}[scale=1.0, every node/.style={font=\small}]
  \draw[rounded corners, thick, fill=blue!4] (-3.4,-1.6) rectangle (0.8,1.6);
  \node at (-1.3,1.35) {high dimensional state};
  \foreach \x/\y in {-2.8/-1.0,-2.4/0.4,-1.9/-0.1,-1.3/0.9,-0.8/-0.6,-0.2/0.7,0.3/-0.2} {
    \filldraw[blue!70!black] (\x,\y) circle (1.4pt);
  }
  \draw[red!70!black, very thick] (-2.8,-1.0) -- (-2.4,0.4) -- (-1.9,-0.1) -- (-1.3,0.9) -- (-0.8,-0.6) -- (-0.2,0.7) -- (0.3,-0.2);
  \draw[->,thick] (1.0,0) -- (2.2,0) node[midway,above] {$P$};
  \draw[rounded corners, thick, fill=green!4] (2.4,-1.6) rectangle (5.4,1.6);
  \node at (3.9,1.35) {low dimensional view};
  \draw[gray!50, thick] (2.8,-1.0) -- (5.0,1.0);
  \foreach \t in {0,1,2,3,4,5,6} {
    \pgfmathsetmacro{\xx}{2.9+0.32*\t}
    \pgfmathsetmacro{\yy}{-0.8+0.23*\t}
    \filldraw[red!70!black] (\xx,\yy) circle (1.5pt);
  }
  \node[align=center] at (1.0,-2.05) {Projection can preserve coarse motion while discarding exact degrees of freedom.};
\end{tikzpicture}
\caption{A low dimensional observable can preserve a visible trend while losing a large number of microscopic degrees of freedom. Theorem~\ref{thm:projection} gives an exact finite field count.}
\label{fig:projection}
\end{figure}

Theorem~\ref{thm:projection} is not a claim that projections always make recovery difficult. It states that projection alone creates information loss. Whether a recovery algorithm can use additional structure depends on the observable map. This is why later sections emphasize the design of observables rather than dimension alone.

\subsection{Distance concentration intuition}

In high dimensional probability, many geometric quantities concentrate around typical values \cite{vershynin2018}. Although our model is finite and discrete, the same intuition is useful. If points are sampled from \(\Z_q^n\) with large \(n\), many projected or aggregated statistics can become similar for many unrelated paths. This can help hide microscopic choices, but it can also make the inverse problem underdetermined.

The mathematical challenge is therefore not simply to increase \(n\). The challenge is to choose observable maps that create a meaningful exact recovery problem. Too much information reveals the path. Too little information leaves many equally valid paths. The regime of interest lies between these extremes.

\subsection{A finite concentration statement}

The following elementary theorem gives a precise version of the high dimensional intuition for Hamming distance over \(\F_q^n\). It is not a hardness theorem. It says that many random high dimensional objects look typical under coarse distance statistics.

\begin{theorem}[Hamming distance concentration]
\label{thm:hamming_concentration}
Let \(U,V\) be independent uniform random variables in \(\F_q^n\). Let
\[
D_H(U,V)=|\{j:U_j\ne V_j\}|.
\]
Then
\[
D_H(U,V)\sim \operatorname{Binomial}\left(n,1-\frac{1}{q}\right),
\]
so
\[
\E[D_H(U,V)]=n\left(1-\frac{1}{q}\right).
\]
Moreover, for every \(t>0\),
\[
\Prob\left[\left|D_H(U,V)-n\left(1-\frac{1}{q}\right)\right|\ge t\right]
\le 2\exp\left(-\frac{2t^2}{n}\right).
\]
\end{theorem}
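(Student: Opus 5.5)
The plan is to write the Hamming distance as a sum of independent indicator variables and then apply Hoeffding's inequality. For each coordinate \(j\in\{1,\ldots,n\}\), define \(I_j=\mathbf{1}[U_j\ne V_j]\), so that \(D_H(U,V)=\sum_{j=1}^n I_j\).

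The first step is to check that the \(I_j\) are i.i.d. Bernoulli variables with parameter \(1-1/q\).

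\begin{itemize}
\item A uniform vector in \(\F_q^n\) has independent, uniform coordinates, because the uniform measure on a product set is the product of the uniform measures. Since \(U\) and \(V\) are independent, the pairs \((U_j,V_j)\) for \(j=1,\ldots,n\) are mutually independent.
\item Each pair \((U_j,V_j)\) is uniform on \(\F_q^2\). Exactly \(q\) of the \(q^2\) pairs lie on the diagonal, so \(\Prob[U_j=V_j]=q/q^2=1/q\).
\end{itemize}

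Hence each \(I_j\) is Bernoulli\((1-1/q)\), and the \(I_j\) are independent because each is a function of a different pair. A sum of \(n\) i.i.d. Bernoulli variables is Binomial, which gives the distributional claim. The expectation \(n(1-1/q)\) then follows from linearity of expectation.

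For the tail bound, I will invoke Hoeffding's inequality for independent variables taking values in \([0,1]\). For any \(t>0\),
\[
\Prob\left[\left|\sum_j I_j-\E\sum_j I_j\right|\ge t\right]\le 2\exp\left(-\frac{2t^2}{\sum_j (1-0)^2}\right)=2\exp\left(-\frac{2t^2}{n}\right).
\]
This is exactly the stated bound, so that inequality is the only external tool I will cite. If a self-contained argument is preferred, I would instead use the Chernoff method: apply Markov's inequality to \(e^{\lambda(D_H-\E D_H)}\), bound the moment generating function of each centered indicator by \(e^{\lambda^2/8}\) via Hoeffding's lemma, and optimize at \(\lambda=4t/n\). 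Running the same argument on \(-D_H\) and taking a union bound over the two tails gives the factor \(2\).

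No step is a genuine obstacle. The one point that needs care is justifying independence across coordinates, since it is what licenses both the Binomial identification and the application of Hoeffding's inequality. It follows from the product structure of the uniform distribution on \(\F_q^n\) together with the independence of \(U\) and \(V\).
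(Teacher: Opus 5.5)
Your proposal is correct and follows the paper's proof: both write \(D_H(U,V)=\sum_j I_j\) as a sum of independent Bernoulli\((1-1/q)\) indicators and then apply Hoeffding's inequality for variables in \([0,1]\). Your more explicit justification of independence across coordinates, and the optional self-contained Chernoff derivation with \(\lambda=4t/n\), are both sound additions.
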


\begin{proof}
For each coordinate \(j\), define \(I_j=1\) if \(U_j\ne V_j\), and \(I_j=0\) otherwise. Since \(U_j\) and \(V_j\) are independent uniform elements of \(\F_q\), we have \(\Prob[I_j=1]=1-1/q\). The variables \(I_1,\ldots,I_n\) are independent Bernoulli variables, and \(D_H(U,V)=\sum_{j=1}^n I_j\). The expectation and binomial law follow immediately. The tail bound is Hoeffding's inequality for independent variables in \([0,1]\).
\end{proof}

\begin{theorem}[Hamming sphere size]
\label{thm:hamming_sphere}
Fix \(u\in\F_q^n\). For \(0\le k\le n\), the number of vectors \(v\in\F_q^n\) satisfying \(D_H(u,v)=k\) is
\[
\binom{n}{k}(q-1)^k.
\]
Consequently, for every integer \(r\) with \(0\le r\le n\), the number of vectors within radius \(r\) of \(u\) is
\[
\sum_{k=0}^{r}\binom{n}{k}(q-1)^k.
\]
\end{theorem}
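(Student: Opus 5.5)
The count is a direct combinatorial construction: specify a vector \(v\) at distance exactly \(k\) from \(u\) by choosing where it differs from \(u\) and what values it takes there, then check that this specification is a bijection. Concretely, I will set up a map from pairs \((S,w)\), with \(S\subseteq\{1,\ldots,n\}\) of size \(k\) and \(w\in\prod_{j\in S}(\F_q\setminus\{u_j\})\), to vectors \(v\in\F_q^n\). The vector \(v\) is defined by \(v_j=w_j\) for \(j\in S\) and \(v_j=u_j\) for \(j\notin S\).

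The first step is to verify that this map lands in the sphere \(\{v: D_H(u,v)=k\}\). Indeed, \(v\) differs from \(u\) exactly on \(S\), since \(w_j\ne u_j\) for \(j\in S\) and \(v_j=u_j\) off \(S\).

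The second step is to show the map is a bijection onto the sphere. The inverse sends \(v\) to its disagreement set \(S=\{j:v_j\ne u_j\}\), which has size \(D_H(u,v)=k\), together with the restriction \(w=v|_S\). One checks directly that the two compositions are the identity.

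The third step is to count the domain. There are \(\binom{n}{k}\) choices of \(S\), and for each \(S\) there are \((q-1)^k\) choices of \(w\), because each coordinate of \(w\) ranges over the \(q-1\) field elements different from \(u_j\). This gives \(\binom{n}{k}(q-1)^k\).

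For the ball of radius \(r\), I will note that the spheres of radii \(k=0,1,\ldots,r\) are pairwise disjoint, since \(D_H(u,v)\) takes a single value for each \(v\). Their union is exactly the set of \(v\) with \(D_H(u,v)\le r\), so summing the sphere counts gives the stated formula. There is no real obstacle here; the only point needing care is the bijection in the second step, namely that the disagreement set \(S\) is determined by \(v\), so no vector is counted twice.

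As a consistency check, summing over \(k=0,\ldots,n\) gives \((1+(q-1))^n=q^n\) by the binomial theorem. This matches \(|\F_q^n|\) and is also consistent with the binomial law in Theorem~\ref{thm:hamming_concentration}, since dividing the sphere count by \(q^n\) gives \(\binom{n}{k}(1-1/q)^k(1/q)^{n-k}\).
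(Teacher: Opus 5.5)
Your proof is correct and follows the same argument as the paper: choose the $k$ coordinates where $v$ differs from $u$ in $\binom{n}{k}$ ways, pick one of $q-1$ values on each, and sum over the disjoint spheres to get the ball. Your explicit bijection via the disagreement set, and the check $\sum_{k=0}^{n}\binom{n}{k}(q-1)^k=q^n$, only make the paper's multiplication-rule step more formal.
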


\begin{proof}
To construct a vector \(v\) at distance \(k\) from \(u\), first choose the \(k\) coordinates in which \(v\) differs from \(u\). This gives \(\binom{n}{k}\) choices. On each chosen coordinate there are exactly \(q-1\) values different from the corresponding coordinate of \(u\). On all other coordinates the value is forced to equal \(u\). Multiplying gives \(\binom{n}{k}(q-1)^k\). Summing over \(0\le k\le r\) gives the ball size.
\end{proof}

\begin{corollary}[Coarse distance degeneracy]
A single Hamming distance value from a fixed reference state cannot identify a microscopic state unless the corresponding sphere has size one. In particular, for any \(1\le k\le n\), the number of states with that distance is \(\binom{n}{k}(q-1)^k\), which is exponential in \(n\) whenever \(k=\alpha n\) with fixed \(0<\alpha<1\) and the usual nondegenerate edge cases are excluded.
\end{corollary}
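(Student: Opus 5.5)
The plan is to read the first claim off Theorem~\ref{thm:hamming_sphere} and then do one growth estimate for the second. Fix the reference state \(u\) and suppose the only observable is \(k=D_H(u,x^\star)\) for a hidden state \(x^\star\). The set of states consistent with this observation is the fiber of the map \(v\mapsto D_H(u,v)\) over \(k\), which is exactly the Hamming sphere of radius \(k\) about \(u\). If this fiber has at least two elements, then two distinct states produce the same observable, and no procedure can output \(x^\star\) with certainty from \(k\) alone. The identification claim is therefore just fiber membership: identification forces the sphere to have size one. The converse is immediate, since a singleton fiber determines its element. By Theorem~\ref{thm:hamming_sphere} the sphere has exactly \(\binom{n}{k}(q-1)^k\) elements, which gives the stated count.

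For the exponential claim, I would take \(k=\alpha n\) with \(0<\alpha<1\) fixed, \(k\) an integer, and \(q\ge 3\) (or at least \(q\ge 2\)). I would bound the count below by \(\binom{n}{k}\), using \((q-1)^k\ge 1\). I would then use the standard estimate
\[
\binom{n}{k}\ge \frac{1}{n+1}2^{nh(\alpha)},
\]
where \(h\) is the binary entropy function. A self-contained route is to note that \(\binom{n}{k}\) is the largest term of the binomial expansion of \(\bigl(\alpha+(1-\alpha)\bigr)^n=1\) once weighted by \(\alpha^k(1-\alpha)^{n-k}\). Since \(h(\alpha)>0\) for \(0<\alpha<1\), the count grows exponentially in \(n\). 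For \(q\ge3\) the factor \((q-1)^{\alpha n}\) only strengthens this.

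The main obstacle is pinning down what ``the usual nondegenerate edge cases'' means, because the statement is imprecise there. I would make them explicit as exceptions. When \(k=0\), the sphere is \(\{u\}\). When \(q=2\) and \(k=n\), the sphere is again a singleton, namely the complement of \(u\). For \(n=1\) everything is constant. I would also add the integrality requirement on \(\alpha n\), or replace \(k\) by \(\lfloor\alpha n\rfloor\), which changes the entropy bound only by lower order terms. With these stated, the corollary is a direct consequence of Theorem~\ref{thm:hamming_sphere} plus the entropy lower bound.
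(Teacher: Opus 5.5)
Your proposal is correct and follows the same route as the paper: identify the fiber of the distance observable with the Hamming sphere, read off its size from Theorem~\ref{thm:hamming_sphere}, and then show that \(\binom{n}{\alpha n}\) grows exponentially. The only difference is that you replace the paper's appeal to Stirling's formula with the elementary maximal-term bound \(\binom{n}{\alpha n}\ge 2^{n h_2(\alpha)}/(n+1)\), which is non-asymptotic and self-contained, and your explicit handling of the edge cases and of the integrality of \(\alpha n\) tightens what the paper leaves informal.
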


\begin{proof}
The first statement follows because all states on the same Hamming sphere produce the same distance observable. The exact number of such states is Theorem~\ref{thm:hamming_sphere}. For \(k=\alpha n\) with fixed \(0<\alpha<1\), Stirling's formula gives exponential growth in \(n\) of \(\binom{n}{\alpha n}\). Multiplication by \((q-1)^{\alpha n}\) preserves exponential growth. The only degenerate boundary cases are radii such as \(k=0\), and for \(q=2\) also \(k=n\), where the corresponding sphere may have size one.
\end{proof}

\section{Micro perturbations and noisy transitions}

\subsection{Transition model}

The basic hidden path model uses a transition rule
\begin{equation}
x_{i+1}=x_i+\Delta_i+\eps_i+\eta_i \pmod q,
\label{eq:transition}
\end{equation}
where \(\Delta_i\in\mathcal{D}\) is a macro increment, \(\eps_i\in\mathcal{E}\) is a micro perturbation, and \(\eta_i\in\Z_q^n\) is a noise vector. The sequence
\[
(\Delta_0,\ldots,\Delta_{T-1})
\]
controls the coarse trajectory, while the sequence
\[
(\eps_0,\ldots,\eps_{T-1})
\]
controls the exact microscopic deviations. The noise sequence
\[
(\eta_0,\ldots,\eta_{T-1})
\]
can be fixed as part of the hidden object or can be used as part of the observable generation process.

\begin{definition}[Complete microscopic path object]
A complete microscopic path object is a tuple
\[
X=(x_0,\Delta,\eps,\eta)
\]
where \(x_0\in\Z_q^n\), \(\Delta\in\mathcal{D}^T\), \(\eps\in\mathcal{E}^T\), and \(\eta\in\mathcal{N}^T\). The path \(\gamma(X)=(x_0,\ldots,x_T)\) is obtained by iterating \eqref{eq:transition}.
\end{definition}

This definition makes the hidden object larger than the visible path. Two different triples \((\Delta,\eps,\eta)\) can sometimes generate the same state sequence. Depending on the problem version, exact recovery may require the path states only or the full transition decomposition. Both versions are meaningful, but they differ.

\begin{definition}[State recovery and microscopic recovery]
State recovery asks for the exact state sequence \(\gammaStar=(x_0,\ldots,x_T)\). Microscopic recovery asks for the full tuple \((x_0,\Delta,\eps,\eta)\). Microscopic recovery is at least as strong as state recovery.
\end{definition}

\subsection{Discrete Gaussian noise}

A discrete Gaussian distribution is a standard way to place concentrated but nonuniform noise on a lattice or finite module. For \(\sigma>0\), the integer discrete Gaussian on \(\Z\) assigns probability proportional to
\[
\rho_\sigma(z)=\exp\left(-\frac{\pi z^2}{\sigma^2}\right).
\]
A vector noise distribution can be obtained by sampling coordinates independently and reducing modulo \(q\), or by sampling from a spherical discrete Gaussian on \(\Z^n\) before reduction. Discrete Gaussian distributions are widely used in lattice based mathematics and cryptography \cite{micciancio2009}.

In this paper the role of Gaussian noise is conceptual. It increases microscopic ambiguity and separates coarse reconstruction from exact recovery. If a recovery method estimates the macro path but fails to recover the exact noise and perturbation variables, it does not solve the microscopic recovery problem.

There is a small but important technical point. A discrete Gaussian on \(\Z\), reduced modulo \(q\), may have full support on \(\Z_q\), even when most of the probability mass is concentrated near zero before reduction. Therefore counting by support size can be misleading unless the distribution is explicitly truncated. When the distribution is not truncated, entropy, tail bounds, or effective support should be used instead of literal support size.

\begin{proposition}[Canonical recovery of bounded integer noise]
\label{prop:bounded_noise_aliasing}
Let \(B\in\N\) and suppose an integer noise coordinate \(e\) is sampled from \([-B,B]\cap\Z\) and then reduced modulo \(q\). The reduction map is injective on \([-B,B]\cap\Z\) if and only if \(q>2B\). In this injective case, the residue class determines the original integer value. If \(q\le 2B\), there exist two distinct integers in the interval with the same residue modulo \(q\), so recovering a residue does not necessarily recover the original integer sample.
\end{proposition}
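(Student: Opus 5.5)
The argument is a direct pigeonhole-and-difference analysis of the reduction map $\rho:[-B,B]\cap\Z\to\Z_q$, $\rho(e)=e \bmod q$. The key observation is that two integers have the same residue exactly when $q$ divides their difference. So I will first reduce injectivity of $\rho$ to a statement about the set of differences. For $e,e'\in[-B,B]\cap\Z$ the difference $e-e'$ ranges over exactly the integers in $[-2B,2B]$, and every such integer occurs. Hence $\rho$ fails to be injective if and only if some nonzero multiple of $q$ lies in $[-2B,2B]$, that is, if and only if $q\le 2B$ (since $q\ge 2>0$, the smallest positive multiple is $q$ itself).

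For the forward direction ($q>2B$ implies injective), I take $e\ne e'$ with $\rho(e)=\rho(e')$. Then $q\mid e-e'$ with $0<|e-e'|\le 2B<q$, which is impossible. In this injective case the original integer is recovered canonically: it is the unique representative of the residue class lying in $[-B,B]$. Concretely, one takes the centered representative in $(-q/2,q/2]$, which contains $[-B,B]$ because $B<q/2$. This gives the claimed statement that the residue class determines the integer value.

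For the converse ($q\le 2B$), I exhibit an explicit collision: the pair $e=-B$ and $e'=-B+q$. Both lie in the interval because $-B\le -B+q\le B$ exactly when $q\le 2B$. They are distinct because $q\ge 2$, and they are congruent modulo $q$. This simultaneously proves non-injectivity and the final sentence of the proposition, that recovering the residue does not determine the original sample.

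There is no real obstacle here. The only point needing care is to choose the collision pair so that both endpoints visibly stay inside $[-B,B]$; anchoring at $-B$ handles this cleanly. It is also worth noting that when $B=0$ the condition $q>0$ holds automatically, which is consistent with the statement.
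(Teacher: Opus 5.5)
Your proof is correct and follows the paper's argument: for $q>2B$ you use that $q$ divides $e-e'$ while $0<|e-e'|\le 2B<q$, and for $q\le 2B$ you use the same collision pair $-B$ and $-B+q$. Your remark that the centered representative in $(-q/2,q/2]$ gives the canonical decoder is a small explicit addition that the paper's proof leaves implicit.
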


\begin{proof}
If \(e_1,e_2\in[-B,B]\cap\Z\) have the same residue modulo \(q\), then \(q\) divides \(e_1-e_2\). Also \(|e_1-e_2|\le 2B\). When \(q>2B\), the only multiple of \(q\) in \([-2B,2B]\) is zero, hence \(e_1=e_2\). Conversely, if \(q\le 2B\), then \(-B\) and \(-B+q\) both belong to \([-B,B]\cap\Z\), are distinct, and are congruent modulo \(q\). Thus the reduction map is not injective.
\end{proof}

\begin{definition}[Noisy micro path distribution]
Fix \(x_0\), allowed increments \(\mathcal{D}\), micro perturbations \(\mathcal{E}\), and a noise distribution \(D_\sigma\) on \(\Z_q^n\). A noisy micro path is generated by independently sampling
\[
\Delta_i\leftarrow\mathcal{D},\qquad \eps_i\leftarrow\mathcal{E},\qquad \eta_i\leftarrow D_\sigma
\]
and applying \eqref{eq:transition} for \(i=0,\ldots,T-1\).
\end{definition}

\begin{figure}[t]
\centering
\begin{tikzpicture}[scale=0.95, every node/.style={font=\small}]
  \draw[->,thick] (-0.2,0) -- (7.1,0) node[right] {time};
  \foreach \i/\x/\y in {0/0.4/0.0,1/1.4/0.45,2/2.5/-0.15,3/3.6/0.65,4/4.7/0.05,5/5.9/0.9} {
    \filldraw[red!75!black] (\x,\y) circle (2pt) node[above] {$x_\i$};
  }
  \draw[red!70!black, very thick] (0.4,0.0) -- (1.4,0.45) -- (2.5,-0.15) -- (3.6,0.65) -- (4.7,0.05) -- (5.9,0.9);
  \foreach \x/\y/\dx/\dy in {0.4/0.0/0.55/0.10,1.4/0.45/0.48/-0.25,2.5/-0.15/0.58/0.30,3.6/0.65/0.52/-0.18,4.7/0.05/0.55/0.25} {
    \draw[blue!70!black,->,thick] (\x,\y) -- ++(\dx,\dy);
    \draw[green!50!black,->,thick] (\x+\dx,\y+\dy) -- ++(0.18,0.12);
    \draw[purple!70!black,->,thick] (\x+\dx+0.18,\y+\dy+0.12) -- ++(0.12,-0.08);
  }
  \node[blue!70!black] at (1.2,-0.9) {$\Delta_i$ macro};
  \node[green!50!black] at (3.2,-0.9) {$\eps_i$ micro};
  \node[purple!70!black] at (5.1,-0.9) {$\eta_i$ noise};
  \node[align=center] at (3.5,-1.45) {The visible step is a sum of coarse motion, microscopic perturbation, and noise.};
\end{tikzpicture}
\caption{A transition decomposed into macro increment, micro perturbation, and noise. Approximate methods may estimate the coarse trend while missing microscopic terms.}
\label{fig:micro}
\end{figure}

\section{Observable systems}

\subsection{Public observables}

The observable vector is the central mathematical object. It should not be treated as a short digest. Instead, it is a structured vector derived from the hidden path object.

\begin{definition}[Observable system]
An observable system is a tuple
\[
\mathcal{O}=(\mathcal{X},\mathcal{Y},\Phi),
\]
where \(\mathcal{X}\) is a microscopic path object space, \(\mathcal{Y}\) is an observable space, and \(\Phi:\mathcal{X}\to\mathcal{Y}\) is an observation map. If \(\mathcal{Y}=\mathcal{Y}_1\times\cdots\times\mathcal{Y}_m\), then
\[
\Phi(X)=(Q_1(X),\ldots,Q_m(X))
\]
for component observables \(Q_j:\mathcal{X}\to\mathcal{Y}_j\).
\end{definition}

Examples of component observables include:

\begin{enumerate}[label=(\alph*)]
\item projected endpoint sums,
\[
Q_j(X)=\sum_{i=0}^{T} \langle a_{j,i},x_i\rangle \pmod q;
\]
\item transition energy summaries,
\[
Q_j(X)=\sum_{i=0}^{T-1} \psi_j(x_i,x_{i+1},i) \pmod q;
\]
\item quantized real valued summaries,
\[
Q_j(X)=\left\lfloor \frac{1}{\tau_j}\sum_{i=0}^{T-1} f_j(x_i,x_{i+1},\eps_i,\eta_i) \right\rceil;
\]
\item nonlinear local features,
\[
Q_j(X)=\sum_{i=0}^{T-1} \chi_j(i)\,g_j(x_i,x_{i+1},\eps_i) \pmod q.
\]
\end{enumerate}

The model intentionally permits many choices. The paper does not claim that every observable system yields a difficult recovery problem. Some choices are bad. For example, if \(\Phi\) publishes all states \((x_0,\ldots,x_T)\), state recovery is trivial. If \(\Phi\) publishes too little, exact recovery may be information theoretically impossible. The interesting regime is between these extremes.

\begin{figure}[t]
\centering
\begin{tikzpicture}[scale=0.82, transform shape, node distance=0.65cm, every node/.style={font=\small}]
  \node[draw,rounded corners,fill=blue!5,minimum width=2.1cm,minimum height=0.8cm] (hidden) {hidden object $X$};
  \node[draw,rounded corners,fill=green!5,minimum width=2.1cm,minimum height=0.8cm,right=of hidden] (features) {local features};
  \node[draw,rounded corners,fill=yellow!10,minimum width=2.1cm,minimum height=0.8cm,right=of features] (agg) {aggregation};
  \node[draw,rounded corners,fill=orange!10,minimum width=2.1cm,minimum height=0.8cm,right=of agg] (quant) {quantization};
  \node[draw,rounded corners,fill=red!5,minimum width=2.2cm,minimum height=0.8cm,right=of quant] (Y) {observable vector $Y$};
  \draw[->,thick] (hidden) -- (features);
  \draw[->,thick] (features) -- (agg);
  \draw[->,thick] (agg) -- (quant);
  \draw[->,thick] (quant) -- (Y);
  \node[align=center] at (5.4,-1.1) {A large observable vector is the mathematical public object.\\A short checksum may identify it, but should not replace it.};
\end{tikzpicture}
\caption{Observable generation pipeline. The observables are structured mathematical data, not merely a short digest.}
\label{fig:pipeline}
\end{figure}

\subsection{Linear observable systems}

Linear systems are easy to analyze and useful as baselines. Assume \(q\) is prime and write a microscopic object as a vector \(v\in\F_q^N\). A linear observable system has the form
\[
Y=Av,
\]
where \(A\in\F_q^{m\times N}\). Linear systems are not necessarily suitable for hard recovery, because rank reveals their information content exactly.

\begin{proposition}[Linear fiber size]
\label{prop:linear_fiber}
Let \(A\in\F_q^{m\times N}\) have rank \(r_A\). If \(Y\in\im(A)\), then
the fiber \(\{v\in\F_q^N:Av=Y\}\) has size
\[
q^{N-r_A}.
\]
\end{proposition}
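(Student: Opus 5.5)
The plan is to mirror the proof of Theorem~\ref{thm:projection}: describe the fiber as a coset of the kernel of \(A\), then count the kernel using rank--nullity over the field \(\F_q\).

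\textbf{Step 1 (the fiber is a coset).} Since \(Y\in\im(A)\), I fix one particular solution \(v_0\in\F_q^N\) with \(Av_0=Y\). I then show two inclusions.
\begin{enumerate}[label=(\roman*)]
\item If \(Av=Y\), then \(A(v-v_0)=0\). Hence \(v-v_0\in\Ker(A)\), so \(v\in v_0+\Ker(A)\).
\item If \(z\in\Ker(A)\), then \(A(v_0+z)=Y\), so \(v_0+z\) lies in the fiber.
\end{enumerate}
Together these give \(\{v:Av=Y\}=v_0+\Ker(A)\).

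\textbf{Step 2 (counting the coset).} Translation \(z\mapsto v_0+z\) is a bijection from \(\Ker(A)\) onto the fiber. Therefore the fiber has exactly \(|\Ker(A)|\) elements.

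\textbf{Step 3 (counting the kernel).} Here I invoke the standing assumption that \(q\) is prime, so \(\F_q\) is a field. This is the only point needing care, since rank is not well behaved over \(\Z_q\) for composite \(q\). With \(A\) viewed as a linear map \(\F_q^N\to\F_q^m\), rank--nullity gives
\[
\dim\Ker(A)=N-r_A.
\]
A vector space of dimension \(d\) over \(\F_q\) has a basis, and every element is a unique combination of basis vectors, so it has exactly \(q^d\) elements. Hence \(|\Ker(A)|=q^{N-r_A}\).

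Combining Steps 2 and 3 gives the stated fiber size \(q^{N-r_A}\). I expect no genuine obstacle; the argument is elementary once the field assumption is in place.
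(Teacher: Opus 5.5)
Your proof is correct and takes the same route as the paper: you identify the fiber as the coset $v_0+\Ker(A)$ and count it using rank--nullity, $|\Ker(A)|=q^{N-r_A}$. The only difference is that you write out the two coset inclusions and the basis-counting step, which the paper leaves implicit.
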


\begin{proof}
The solution set is either empty or an affine coset of \(\Ker(A)\). Since \(\dim\Ker(A)=N-r_A\), each nonempty fiber has \(q^{N-r_A}\) elements.
\end{proof}

This proposition shows both the usefulness and weakness of purely linear observables. They are analyzable, but if the hidden object is not constrained by additional nonlinear structure, exact recovery is impossible unless \(r_A=N\). If \(r_A=N\), recovery is linear algebra.

\subsection{Nonlinear and quantized observables}

Nonlinear observables can prevent direct linear inversion, while quantization can destroy fine continuity. A simple class is
\begin{equation}
Q_j(X)=\left\lfloor \frac{F_j(X)+e_j}{\tau_j}\right\rceil,
\label{eq:quantized}
\end{equation}
where \(F_j\) is a real or integer valued feature, \(e_j\) is observation noise, and \(\tau_j\) is a quantization scale.

Quantization has two competing effects. It hides small differences and may increase ambiguity. It can also merge many objects into the same observable bin, which may make exact recovery information theoretically impossible. Thus quantization must be studied as part of the mathematical problem, not merely as an implementation detail.

\begin{remark}
The phrase noisy hidden path should not be interpreted as a license to add arbitrary noise. If too much noise is added to the public observables, the hidden path becomes unrecoverable even for ideal algorithms. The goal is not to destroy all information, but to create a problem where exact microscopic recovery is meaningfully harder than approximate macroscopic reconstruction.
\end{remark}

\section{The exact noisy hidden path recovery problem}

We now define the central problem.

\begin{definition}[Parameter set]
A parameter set is
\[
P=(q,n,T,\mathcal{D},\mathcal{E},D_\sigma,\Phi,\Enc),
\]
where \(q,n,T\) define the state space, \(\mathcal{D}\) defines macro increments, \(\mathcal{E}\) defines micro perturbations, \(D_\sigma\) defines noise, \(\Phi\) defines observables, and \(\Enc\) defines exact encoding.
\end{definition}

\begin{problem}[Exact Noisy Hidden Path Recovery]
\label{prob:ENHPR}
Given public parameters \(P\) and an observable vector
\[
Y=\Phi(X^{\star})
\]
for a hidden microscopic path object \(X^{\star}\in\mathcal{X}_P\), recover \(X^{\star}\) exactly. If an encoded version is chosen as the formal target, recover \(\Enc(X^{\star})\) exactly; this is equivalent to object recovery only when \(\Enc\) is injective on the relevant admissible set.
\end{problem}

Several variants are useful.

\begin{problem}[Exact State Path Recovery]
Given \(Y=\Phi(X^{\star})\), recover the state sequence \(\gamma(X^{\star})\).
\end{problem}

\begin{problem}[Exact Transition Recovery]
Given \(Y=\Phi(X^{\star})\), recover \((\Delta^{\star},\eps^{\star},\eta^{\star})\) and enough boundary information to reconstruct the path.
\end{problem}

\begin{problem}[Fiber Search]
Given \(Y\), find any \(X\in\Phi^{-1}(Y)\).
\end{problem}

The exact hidden object problem is stronger than fiber search when multiple objects share the same observables. If the intended equivalence relation treats all objects in the same fiber as equivalent, then fiber search is the correct problem. If the hidden object itself is the target, then collisions matter.

\begin{definition}[Collision class]
For \(X\in\mathcal{X}\), its collision class under \(\Phi\) is
\[
[X]_{\Phi}=\{X'\in\mathcal{X}:\Phi(X')=\Phi(X)\}.
\]
The observable system is injective on a subset \(S\subseteq\mathcal{X}\) if \([X]_{\Phi}\cap S=\{X\}\) for every \(X\in S\).
\end{definition}

If injectivity holds on the generated subset, exact recovery is information theoretically possible. If not, exact recovery of the specific object is impossible from observables alone unless additional information is available.

\section{Identifiability before computation}

Before discussing algorithms, one must ask whether the hidden object is identifiable from the observable vector at all. This is a purely mathematical issue. Computational hardness is meaningful only after identifiability or an explicit equivalence relation has been specified.

\begin{definition}[Generation support]
For a parameter set \(P\), let \(S_P\subseteq\mathcal{X}_P\) be the set of microscopic objects that can be generated with nonzero probability by the stated sampling procedure. We call \(S_P\) the generation support.
\end{definition}

\begin{theorem}[Identifiability criterion]
\label{thm:identifiability}
Exact recovery of every generated object from \(Y=\Phi(X)\) is information theoretically possible if and only if \(\Phi\) is injective on \(S_P\).
\end{theorem}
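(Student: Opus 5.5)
The plan is to first make ``information theoretically possible'' precise and then prove the two directions separately. I will take it to mean that there exists a function \(R:\mathcal{Y}\to\mathcal{X}_P\), with no restriction on computational cost and with access to the public parameters \(P\), such that \(R(\Phi(X))=X\) for every \(X\in S_P\). In other words, some decoder recovers every generated object from its observable alone. With this reading the theorem is a statement about left inverses of \(\Phi\) restricted to \(S_P\). The collision class notation \([X]_\Phi\) from the preceding definition is the natural language for both directions.

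For sufficiency, suppose \(\Phi\) is injective on \(S_P\), so that \([X]_\Phi\cap S_P=\{X\}\) for every \(X\in S_P\). I will define \(R\) on \(\Phi(S_P)\) by letting \(R(Y)\) be the unique element of \(\Phi^{-1}(Y)\cap S_P\). This set is nonempty because \(Y\in\Phi(S_P)\), and it is a singleton by injectivity. Off \(\Phi(S_P)\), \(R\) can be defined arbitrarily, for instance as a fixed element of \(\mathcal{X}_P\). Then \(R(\Phi(X))=X\) for all \(X\in S_P\). Since \(\mathcal{X}_P\) is finite, \(R\) can even be realized by exhaustive search over \(S_P\). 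This emphasizes that the claim concerns information rather than efficiency.

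For necessity, I argue by contraposition. If \(\Phi\) is not injective on \(S_P\), there are distinct \(X_1,X_2\in S_P\) with \(\Phi(X_1)=\Phi(X_2)=Y\). Any decoder \(R\) outputs a single value \(R(Y)\), and this value cannot equal both \(X_1\) and \(X_2\). So at least one generated object is recovered incorrectly. Because both objects have positive generation probability, this failure is not a measure-zero artifact: the success probability is at most \(1-\min\{\Pr[X_1],\Pr[X_2]\}<1\). The same bound holds for randomized decoders, since the output distribution depends only on \(Y\) and its probabilities on \(X_1\) and \(X_2\) sum to at most one. I would add a sentence noting this randomized case so that the ``only if'' direction covers probabilistic recovery procedures.

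The main obstacle is not technical but definitional. The argument depends on fixing the recovery notion as exact recovery of every element of \(S_P\), or with probability one under the generation distribution, rather than recovery with high probability. Under a ``with high probability'' reading the equivalence fails, since a non-injective \(\Phi\) with rare collisions still permits recovery with high probability. I will therefore state the interpretation explicitly at the start, tying it to the exact recovery definition and to the generation support. Once that is fixed, both directions follow directly from the definitions.
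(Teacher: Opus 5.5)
Your proposal is correct and follows the paper's proof: for sufficiency, recover the unique preimage in \(S_P\) by exhaustive search; for necessity, observe that a colliding pair \(X_1\ne X_2\) in \(S_P\) defeats any decoder that sees only \(Y\). Formalizing recovery as a left inverse of \(\Phi\) on \(S_P\), bounding the success of randomized decoders by \(1-\min\{\Pr[X_1],\Pr[X_2]\}\), and pointing out that a ``with high probability'' reading would break the equivalence are useful sharpenings of what the paper states only informally.
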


\begin{proof}
If \(\Phi\) is injective on \(S_P\), then every observable \(Y\in\Phi(S_P)\) has a unique preimage in \(S_P\). An ideal unbounded procedure can recover that unique preimage by exhaustive search over \(S_P\). Conversely, if \(\Phi\) is not injective on \(S_P\), then there exist distinct \(X_1,X_2\in S_P\) with \(\Phi(X_1)=\Phi(X_2)\). The same public observable is consistent with two different hidden objects, so no procedure using only \(Y\) and public parameters can recover both correctly for all generated objects.
\end{proof}

\begin{theorem}[Best possible success inside a fiber]
\label{thm:fiber_success}
Let \(X\) be uniformly distributed on a finite set \(S\subseteq\mathcal{X}\), and let \(Y=\Phi(X)\). For any recovery algorithm \(A\),
\[
\Prob[A(Y)=X]\le \E\left[\frac{1}{|\Phi^{-1}(Y)\cap S|}\right]
\]
when \(A\) is required to succeed for the exact hidden object and has no side information beyond \(Y\). Equality is achieved by choosing uniformly from the fiber whenever the fiber can be enumerated.
\end{theorem}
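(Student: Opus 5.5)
Let \(F_y=\Phi^{-1}(y)\cap S\) for \(y\in\Phi(S)\). The plan is to condition on the observable value \(Y=y\) and bound the success probability fiber by fiber. Because \(X\) is uniform on \(S\), \(\Prob[Y=y]=|F_y|/|S|\). Conditioned on \(Y=y\), the hidden object is uniform on \(F_y\): for \(x\in F_y\) we have \(\Prob[X=x\mid Y=y]=(1/|S|)/(|F_y|/|S|)=1/|F_y|\).

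We allow \(A\) to be randomized, with internal coins independent of \(X\) given \(Y\). The algorithm sees only \(Y\) and the public parameters, so on input \(y\) it outputs \(A(y)\) with some distribution \(p_y(\cdot)\) that does not depend on which element of \(F_y\) is hidden. The conditional success probability is
\[
\Prob[A(Y)=X\mid Y=y]=\sum_{x\in F_y}p_y(x)\cdot\frac{1}{|F_y|}\le \frac{1}{|F_y|},
\]
since \(\sum_x p_y(x)\le 1\). Averaging over \(y\) with weights \(\Prob[Y=y]\) gives
\[
\Prob[A(Y)=X]\le\sum_{y}\Prob[Y=y]\,\frac{1}{|F_y|}=\E\left[\frac{1}{|\Phi^{-1}(Y)\cap S|}\right].
\]
The same sum also evaluates to \(|\Phi(S)|/|S|\), which is worth recording as a remark.

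For equality, take the algorithm that, on input \(y\), outputs a uniform element of \(F_y\); this requires that the fiber can be enumerated or sampled. Then \(p_y(x)=1/|F_y|\) on \(F_y\), so the conditional success probability is \(\sum_{x\in F_y}(1/|F_y|)^2=1/|F_y|\). Any deterministic selector that outputs some fixed element of \(F_y\) also attains \(1/|F_y|\) by the same calculation.

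The main obstacle is justifying that \(p_y\) is independent of the hidden \(X\) given \(Y=y\). This is where the hypothesis of no side information beyond \(Y\) is used: the algorithm's randomness must be independent of \(X\) conditional on \(Y\). I will state this modeling assumption explicitly, and note that public parameters are fixed and independent of \(X\). The rest is a routine computation.
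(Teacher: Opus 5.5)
Your proof is correct and follows the same route as the paper. You condition on $Y=y$, note that the posterior is uniform on $\Phi^{-1}(y)\cap S$, so any rule succeeds with probability at most $1/|\Phi^{-1}(y)\cap S|$, average over $Y$, and attain the bound by selecting from the fiber. Your explicit handling of randomized algorithms and the observation that the bound equals $|\Phi(S)|/|S|$ (the paper's Corollary~\ref{cor:uniform_planted_success}) add useful detail but are not a different argument.
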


\begin{proof}
Condition on the event \(Y=y\). Under the uniform prior on \(S\), the conditional distribution of \(X\) is uniform on \(\Phi^{-1}(y)\cap S\). The best possible guess succeeds with probability \(1/|\Phi^{-1}(y)\cap S|\). Averaging over \(Y\) gives the bound, and uniform selection from the fiber achieves it.
\end{proof}

\begin{corollary}[Optimal success for uniform planted recovery]
\label{cor:uniform_planted_success}
Let \(X\) be uniform on a finite support \(S\) and let \(Y=\Phi(X)\). The optimal probability of recovering the planted object exactly from \(Y\) is
\[
\frac{|\Phi(S)|}{|S|}.
\]
In particular, large fibers reduce planted exact recovery even when every fiber can be enumerated perfectly.
\end{corollary}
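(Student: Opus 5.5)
The plan is to start from Theorem~\ref{thm:fiber_success} and evaluate its right-hand side exactly under the uniform prior on \(S\). That theorem already shows that \(\E[1/|\Phi^{-1}(Y)\cap S|]\) is an upper bound on the success probability of any algorithm. It also shows that this value is attained by choosing a uniform element of the fiber, or indeed by any deterministic choice of one element per fiber. So the optimal success probability equals this expectation, and the only remaining task is to compute it.

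For the computation, I would write \(F_y=\Phi^{-1}(y)\cap S\) for \(y\in\Phi(S)\). Under the uniform prior, \(\Prob[Y=y]=|F_y|/|S|\). Hence
\[
\E\left[\frac{1}{|F_Y|}\right]=\sum_{y\in\Phi(S)}\frac{|F_y|}{|S|}\cdot\frac{1}{|F_y|}=\sum_{y\in\Phi(S)}\frac{1}{|S|}=\frac{|\Phi(S)|}{|S|}.
\]
The sum runs only over \(y\in\Phi(S)\), because every other observable value has probability zero. Each such fiber is nonempty, so no division by zero occurs.

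There is a cross-check that needs no appeal to the theorem's equality clause. A deterministic algorithm \(A\) succeeds on \(X\) exactly when \(X=A(\Phi(X))\), so it succeeds on at most one element of each fiber, giving at most \(|\Phi(S)|\) successful objects in total. A randomized algorithm is a mixture of deterministic ones, so the same bound holds for it. Conversely, selecting one representative per fiber achieves exactly \(|\Phi(S)|\) successes. Dividing by \(|S|\) gives the claimed value.

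The final sentence of the corollary then follows directly. Since \(|S|=\sum_y|F_y|\), we get \(|\Phi(S)|/|S|=1/\bar f\), where \(\bar f\) is the average fiber size over \(\Phi(S)\). Large fibers therefore force a small optimal success probability, regardless of enumeration power. No step here looks hard; the only point needing care is restricting the sum to the image \(\Phi(S)\).
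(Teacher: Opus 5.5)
Your proposal is correct and follows the paper's proof: invoke Theorem~\ref{thm:fiber_success} and evaluate \(\E[1/|\Phi^{-1}(Y)\cap S|]\) as \(\sum_{y\in\Phi(S)}\frac{k_y}{|S|}\cdot\frac{1}{k_y}=|\Phi(S)|/|S|\). Your counting cross-check (a deterministic rule succeeds on at most one element per fiber, and randomized rules are mixtures) is a sound self-contained extra that does not rely on the theorem's equality clause, but the main argument does not need it.
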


\begin{proof}
By Theorem~\ref{thm:fiber_success}, the optimal success probability is
\[
\E\left[\frac{1}{|\Phi^{-1}(Y)\cap S|}\right].
\]
Writing \(k_y=|\Phi^{-1}(y)\cap S|\), this equals
\[
\sum_{y\in\Phi(S)} \frac{k_y}{|S|}\frac{1}{k_y}=\frac{|\Phi(S)|}{|S|}.
\]
\end{proof}

\begin{proposition}[Bayes optimal exact recovery]
\label{prop:bayes_recovery}
Let \(X\) be any random hidden object on a finite set and let \(Y=\Phi(X)\). Among all recovery rules that observe only \(Y\), the maximum success probability is
\[
\sum_y \Prob[Y=y]\max_x \Prob[X=x\mid Y=y].
\]
In particular, if \(X\) is uniform on each conditional fiber, this reduces to the fiber expression in Theorem~\ref{thm:fiber_success}.
\end{proposition}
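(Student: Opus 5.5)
The plan is the standard MAP (maximum a posteriori) argument. First I formalize an arbitrary recovery rule as a possibly randomized map $A$ that takes $y$ to a distribution $\Prob[A(y)=x]$ over candidates. Its randomness is independent of $X$ given $Y$, since the rule sees only $Y$ and its own internal coins. Conditioning on $Y=y$ (restricting to $y$ with $\Prob[Y=y]>0$, a finite set because the object space is finite), the success probability splits as
\[
\Prob[A(Y)=X]=\sum_y \Prob[Y=y]\sum_x \Prob[A(y)=x]\,\Prob[X=x\mid Y=y].
\]

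Next I bound the inner sum. It is a convex combination of the values $\Prob[X=x\mid Y=y]$, so it is at most $\max_x \Prob[X=x\mid Y=y]$. Summing over $y$ gives the upper bound. For achievability, I take the deterministic rule that outputs, for each $y$, some maximizer $x_y\in\arg\max_x \Prob[X=x\mid Y=y]$. A maximizer exists because the set is finite, and ties may be broken arbitrarily. This rule attains equality term by term, so the maximum in the statement is achieved.

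For the specialization, suppose $X$ is uniform on each conditional fiber. Then $\Prob[X=x\mid Y=y]=1/|\Phi^{-1}(y)\cap S|$ for every $x$ in the fiber, where $S$ is the support. The maximum therefore equals this value, and the sum becomes $\E\bigl[1/|\Phi^{-1}(Y)\cap S|\bigr]$. This matches Theorem~\ref{thm:fiber_success}. In the fully uniform case it also recovers Corollary~\ref{cor:uniform_planted_success}.

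I do not expect any step to be a real obstacle. The only care needed is to state clearly that the rule's internal randomness is conditionally independent of $X$ given $Y$, which justifies the factorization above, and to ignore $y$ of zero probability.
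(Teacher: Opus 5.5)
Your proposal is correct and follows the paper's own MAP argument: condition on $Y=y$, bound the conditional success by the posterior maximum, which a posterior maximizer attains, and average over $y$, with the uniform-fiber case giving $1/k_y$. Your explicit treatment of randomized rules (the convex-combination bound, with the rule's internal coins conditionally independent of $X$ given $Y$) is a small extra bit of care that the paper leaves implicit, since its proof only considers a deterministic output $a(y)$.
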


\begin{proof}
Condition on \(Y=y\). Any recovery rule must output some candidate \(a(y)\). Its conditional success probability is \(\Prob[X=a(y)\mid Y=y]\), which is maximized by choosing a value of \(x\) with maximum posterior probability. Averaging the optimal conditional success probability over all \(y\) gives the formula. If the posterior distribution is uniform on a fiber of size \(k_y\), the maximum is \(1/k_y\), giving Theorem~\ref{thm:fiber_success}.
\end{proof}

\begin{remark}
The theorem is not about efficient algorithms. It says that if the observable map creates large fibers and the target is the exact hidden object, then the obstacle is not merely computational. The object is not identified by the data. In such a regime, one must either publish more observables, change the target to fiber recovery, or add an equivalence relation.
\end{remark}

\section{Information content and the no short digest principle}

The size of the hidden path space does not by itself determine the difficulty of recovery. The observable vector determines how much information is exposed. This section states basic bounds.

\begin{theorem}[Average fiber lower bound]
\label{thm:fiber_lower}
Let \(\Phi:\mathcal{X}\to\mathcal{Y}\) be any map between finite sets. Then there exists an observable \(Y\in\mathcal{Y}\) such that
\[
|\Phi^{-1}(Y)|\ge \frac{|\mathcal{X}|}{|\mathcal{Y}|}.
\]
Moreover, if \(X\) is uniform on \(\mathcal{X}\), the average fiber size seen by the image distribution is
\[
\sum_{Y\in\mathcal{Y}} \Prob[\Phi(X)=Y]\ |\Phi^{-1}(Y)|
=\frac{1}{|\mathcal{X}|}\sum_{Y\in\mathcal{Y}} |\Phi^{-1}(Y)|^2.
\]
\end{theorem}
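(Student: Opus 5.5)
The plan has two independent parts: a pigeonhole argument for the existence claim, and a direct computation of the image distribution for the average-fiber identity.

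\emph{Existence of a large fiber.} The fibers \(\Phi^{-1}(Y)\), for \(Y\in\mathcal{Y}\), partition \(\mathcal{X}\). This is because every \(X\in\mathcal{X}\) lies in exactly one fiber, namely \(\Phi^{-1}(\Phi(X))\). Hence
\[
\sum_{Y\in\mathcal{Y}}|\Phi^{-1}(Y)|=|\mathcal{X}|.
\]
The left side is a sum of \(|\mathcal{Y}|\) nonnegative integers. If every term were strictly less than \(|\mathcal{X}|/|\mathcal{Y}|\), the sum would be strictly less than \(|\mathcal{X}|\), a contradiction. So some \(Y\) satisfies \(|\Phi^{-1}(Y)|\ge |\mathcal{X}|/|\mathcal{Y}|\). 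Equivalently, the maximum fiber size is at least the average fiber size over \(\mathcal{Y}\). We need \(\mathcal{Y}\) nonempty, which holds whenever \(\mathcal{X}\) is nonempty since \(\Phi\) maps into \(\mathcal{Y}\). If \(\mathcal{X}\) is empty the bound is \(0\) and is trivial, provided \(\mathcal{Y}\) is nonempty.

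\emph{The size-biased average.} Take \(X\) uniform on \(\mathcal{X}\). For each \(Y\), the event \(\Phi(X)=Y\) is exactly the event \(X\in\Phi^{-1}(Y)\), so
\[
\Prob[\Phi(X)=Y]=\frac{|\Phi^{-1}(Y)|}{|\mathcal{X}|}.
\]
Multiplying by \(|\Phi^{-1}(Y)|\) and summing over \(Y\in\mathcal{Y}\) gives the stated expression \(\frac{1}{|\mathcal{X}|}\sum_Y|\Phi^{-1}(Y)|^2\). Observables outside the image \(\Phi(\mathcal{X})\) contribute zero on both sides, so no case split is needed.

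Neither part has a real obstacle. The only points that need care are the nonemptiness convention and the remark that the formula computes the fiber size of the observed \(Y\), which is a size-biased average. As an optional remark, Cauchy--Schwarz then gives \(\frac{1}{|\mathcal{X}|}\sum_Y|\Phi^{-1}(Y)|^2\ge |\mathcal{X}|/|\Phi(\mathcal{X})|\ge|\mathcal{X}|/|\mathcal{Y}|\). This shows that the typical observed fiber is also at least this large, which strengthens the connection to the no short digest principle.
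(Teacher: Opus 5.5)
Your proposal is correct and follows the paper's proof: the partition identity \(|\mathcal{X}|=\sum_{Y\in\mathcal{Y}}|\Phi^{-1}(Y)|\) with pigeonhole gives the first claim, and \(\Prob[\Phi(X)=Y]=|\Phi^{-1}(Y)|/|\mathcal{X}|\) gives the second. Your Cauchy--Schwarz remark, showing that the size-biased average is itself at least \(|\mathcal{X}|/|\mathcal{Y}|\), is a valid optional strengthening that the paper does not include.
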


\begin{proof}
The first statement follows from
\[
|\mathcal{X}|=\sum_{Y\in\mathcal{Y}} |\Phi^{-1}(Y)|
\]
and the pigeonhole principle. The second statement follows because \(\Prob[\Phi(X)=Y]=|\Phi^{-1}(Y)|/|\mathcal{X}|\).
\end{proof}

\begin{proposition}[Collision probability and visible fiber size]
\label{prop:collision_fiber_identity}
Let \(S\) be a finite set with \(|S|=N\), let \(X,X'\) be independent uniform random variables on \(S\), and let \(Y=\Phi(X)\). Then
\[
\Prob[\Phi(X)=\Phi(X')]
=\frac{1}{N^2}\sum_y |\Phi^{-1}(y)\cap S|^2,
\]
and
\[
\E[|\Phi^{-1}(Y)\cap S|]
= N\,\Prob[\Phi(X)=\Phi(X')].
\]
Thus collision probability is exactly the normalized average fiber size seen by a random public value.
\end{proposition}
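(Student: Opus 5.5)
The plan is to prove both identities by partitioning \(S\) into the fibers \(S_y=\Phi^{-1}(y)\cap S\) for \(y\in\Phi(S)\), and to write \(k_y=|S_y|\). These sets are pairwise disjoint and cover \(S\), so \(\sum_y k_y=N\). Because \(X\) is uniform on \(S\), we have \(\Prob[Y=y]=k_y/N\). Everything below is a computation in the numbers \(k_y\), in the same spirit as the proofs of Theorem~\ref{thm:fiber_lower} and Corollary~\ref{cor:uniform_planted_success}.

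For the first identity, the collision event \(\Phi(X)=\Phi(X')\) is the disjoint union over \(y\) of the events \(\{X\in S_y,\ X'\in S_y\}\). Independence of \(X\) and \(X'\) gives \(\Prob[X\in S_y,\ X'\in S_y]=(k_y/N)^2\). Summing over \(y\) then yields \(\frac{1}{N^2}\sum_y k_y^2\). Equivalently, one can count the ordered pairs \((X,X')\in S\times S\) with equal images, which number \(\sum_y k_y^2\), and divide by \(N^2\).

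For the second identity, the random variable \(|\Phi^{-1}(Y)\cap S|\) takes the value \(k_y\) on the event \(Y=y\). Hence
\[
\E[|\Phi^{-1}(Y)\cap S|]=\sum_y \frac{k_y}{N}\,k_y=\frac{1}{N}\sum_y k_y^2 ,
\]
which is \(N\) times the expression from the first part. This also matches the second formula of Theorem~\ref{thm:fiber_lower} when \(S=\mathcal{X}\). The closing sentence of the statement is then just a rewording of \(\Prob[\Phi(X)=\Phi(X')]=\E[|\Phi^{-1}(Y)\cap S|]/N\).

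There is no real obstacle here. The only point needing care is that the sums range over \(y\in\Phi(S)\), so that every term is well defined and the fibers are intersected with \(S\) rather than taken in all of \(\mathcal{X}\). Values \(y\) outside \(\Phi(S)\) have \(k_y=0\) and contribute nothing to either sum.
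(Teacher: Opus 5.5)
Your proof is correct and follows essentially the same route as the paper's: with \(k_y=|\Phi^{-1}(y)\cap S|\), independence gives \(\sum_y (k_y/N)^2\) for the collision probability, and \(\E[|\Phi^{-1}(Y)\cap S|]=\sum_y (k_y/N)\,k_y=\frac{1}{N}\sum_y k_y^2\). Together these give the stated identity.
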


\begin{proof}
Write \(k_y=|\Phi^{-1}(y)\cap S|\). Since \(X\) and \(X'\) are independent and uniform on \(S\),
\[
\Prob[\Phi(X)=\Phi(X')]
=\sum_y \Prob[\Phi(X)=y]\Prob[\Phi(X')=y]
=\sum_y \left(\frac{k_y}{N}\right)^2.
\]
Also, because \(Y=\Phi(X)\),
\[
\E[|\Phi^{-1}(Y)\cap S|]
=\sum_y \Prob[Y=y] k_y
=\sum_y \frac{k_y}{N}k_y
=\frac{1}{N}\sum_y k_y^2.
\]
Combining the two identities gives the result.
\end{proof}

\begin{corollary}[Short digest limitation]
\label{cor:digest}
If \(|\mathcal{X}|=2^N\) and \(\Phi:\mathcal{X}\to\{0,1\}^\ell\), then some fiber has size at least \(2^{N-\ell}\). In particular, if \(N=4096\) and \(\ell=256\), some fiber has size at least \(2^{3840}\).
\end{corollary}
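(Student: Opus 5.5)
This is a direct specialization of the first part of Theorem~\ref{thm:fiber_lower}, so I plan to invoke that result rather than redo any counting. Set \(\mathcal{Y}=\{0,1\}^\ell\), so that \(|\mathcal{Y}|=2^\ell\), and recall the hypothesis \(|\mathcal{X}|=2^N\). Both sets are finite, so Theorem~\ref{thm:fiber_lower} applies to \(\Phi\) and gives some \(Y\in\{0,1\}^\ell\) with
\[
|\Phi^{-1}(Y)|\ge \frac{|\mathcal{X}|}{|\mathcal{Y}|}=\frac{2^N}{2^\ell}=2^{N-\ell}.
\]
For completeness I will also note the underlying pigeonhole argument. The fibers partition \(\mathcal{X}\), so \(\sum_{Y}|\Phi^{-1}(Y)|=2^N\) over \(2^\ell\) values of \(Y\). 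Hence the largest summand is at least the average \(2^{N-\ell}\).

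The numerical instance comes from substituting \(N=4096\) and \(\ell=256\), which gives \(2^{4096-256}=2^{3840}\). This matches the count \((br)^T=2^{4096}\) from the example following Theorem~\ref{thm:growth}, where \(b=r=T=256\).

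I expect no real obstacle. The one point to check is that the bound is an existence statement over the whole codomain, not a claim that every fiber is this large. This is because \(\Phi\) need not be surjective, and some fibers may be empty. The statement as written claims only that some fiber is this large, so no surjectivity or balance assumption is needed. If \(\ell\ge N\), the bound \(2^{N-\ell}\le 1\) is still true but trivial, and no case split is required.
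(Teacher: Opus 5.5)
Your proposal is correct and matches the paper's proof, which simply applies Theorem~\ref{thm:fiber_lower} with \(|\mathcal{Y}|=2^\ell\). Your added pigeonhole remark just unpacks that theorem's own proof.
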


\begin{proof}
Apply Theorem~\ref{thm:fiber_lower} with \(|\mathcal{Y}|=2^\ell\).
\end{proof}

\begin{proposition}[Post processing cannot improve identifiability]
\label{prop:postprocessing_fibers}
Let \(\Phi:S\to\mathcal{Y}\) be an observation map and let \(h:\mathcal{Y}\to\mathcal{Z}\) be any deterministic post processing map. Define \(\Psi=h\circ\Phi\). For every \(z\in\mathcal{Z}\),
\[
\Psi^{-1}(z)=\bigcup_{y\in h^{-1}(z)}\Phi^{-1}(y).
\]
In particular, post processing can only merge observable fibers; it cannot split them. If \(h\) is not injective on \(\Phi(S)\), then at least one visible class of \(\Psi\) is the union of two or more distinct \(\Phi\)-fibers.
\end{proposition}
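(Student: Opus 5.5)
The plan is to prove the set identity by double inclusion, directly from the definition \(\Psi=h\circ\Phi\). The "merge, not split" conclusion and the non-injectivity claim will then be read off from that identity.

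\emph{The identity.} For \(x\in S\), we have \(x\in\Psi^{-1}(z)\) iff \(h(\Phi(x))=z\). Setting \(y=\Phi(x)\), this says \(h(y)=z\), so \(y\in h^{-1}(z)\), and \(x\in\Phi^{-1}(y)\). Conversely, if \(x\in\Phi^{-1}(y)\) for some \(y\in h^{-1}(z)\), then \(\Psi(x)=h(y)=z\). Here fibers are taken inside \(S\), as \(\Phi\) is defined on \(S\).

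\emph{Merge, not split.} The fibers \(\Phi^{-1}(y)\) for distinct \(y\) are pairwise disjoint, since \(\Phi\) is a function. So the identity expresses each \(\Psi\)-fiber as a disjoint union of whole \(\Phi\)-fibers. In other words, the \(\Phi\)-partition of \(S\) refines the \(\Psi\)-partition. Every \(\Phi\)-fiber lies entirely inside the single \(\Psi\)-fiber over \(h(y)\), and no \(\Phi\)-fiber is divided between two \(\Psi\)-fibers. As a consequence, \(|\Psi^{-1}(h(\Phi(x)))|\ge|\Phi^{-1}(\Phi(x))|\) for every \(x\in S\). Combined with Theorem~\ref{thm:identifiability} and Theorem~\ref{thm:fiber_success}, this gives that identifiability and optimal success can only decrease; I would mention this only as a remark.

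\emph{Non-injectivity.} Suppose \(h\) is not injective on \(\Phi(S)\). Then there are distinct \(y_1,y_2\in\Phi(S)\) with \(h(y_1)=h(y_2)=z\). Both fibers \(\Phi^{-1}(y_1)\) and \(\Phi^{-1}(y_2)\) are nonempty, because \(y_1,y_2\) lie in the image, and they are disjoint. By the identity, \(\Psi^{-1}(z)\) contains their union.

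There is no real obstacle here. The only point that needs care is restricting to \(\Phi(S)\), so that the merged fibers are genuinely nonempty; values of \(\mathcal{Y}\) outside the image contribute empty fibers.
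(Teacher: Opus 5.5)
Your proposal is correct and follows the paper's proof: both reduce \(\Psi(x)=z\) to \(\Phi(x)\in h^{-1}(z)\) to get the union identity, and then read off the merging and non-injectivity claims. Your explicit remarks that distinct \(\Phi\)-fibers are disjoint, and that restricting to \(\Phi(S)\) makes the merged fibers nonempty, just spell out what the paper calls immediate.
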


\begin{proof}
For \(x\in S\), the equality \(\Psi(x)=z\) is equivalent to \(h(\Phi(x))=z\), which is equivalent to \(\Phi(x)\in h^{-1}(z)\). This proves the displayed union identity. The remaining statements follow immediately from the union identity. Deterministic post processing discards information unless it is injective on the actually occurring public values.
\end{proof}

\begin{proposition}[Random observable collision estimate]
\label{prop:random_collision}
Let \(|\mathcal{X}|=N\) and \(|\mathcal{Y}|=M\). If \(\Phi:\mathcal{X}\to\mathcal{Y}\) is chosen uniformly at random among all functions, then the expected number of unordered colliding pairs is
\[
\E\left[\#\{\{x,x'\}:x\ne x',\ \Phi(x)=\Phi(x')\}\right]=\binom{N}{2}\frac{1}{M}.
\]
\end{proposition}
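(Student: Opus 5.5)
The plan is to write the collision count as a sum of indicator variables and use linearity of expectation. For each unordered pair \(\{x,x'\}\) of distinct elements of \(\mathcal{X}\), I would introduce the indicator
\[
I_{\{x,x'\}}=\mathbf{1}[\Phi(x)=\Phi(x')].
\]
The number of unordered colliding pairs is then exactly \(\sum_{\{x,x'\}} I_{\{x,x'\}}\), where the sum runs over the \(\binom{N}{2}\) unordered pairs. Taking expectations gives \(\sum_{\{x,x'\}}\Prob[\Phi(x)=\Phi(x')]\), and no independence between different pairs is needed.

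The key step is computing \(\Prob[\Phi(x)=\Phi(x')]\) for a fixed pair with \(x\ne x'\). A function \(\Phi\) drawn uniformly from all \(M^N\) functions \(\mathcal{X}\to\mathcal{Y}\) is the same as assigning to each point of \(\mathcal{X}\) an independent uniform value in \(\mathcal{Y}\), since the product structure of \(\mathcal{Y}^{\mathcal{X}}\) makes the uniform measure a product measure. So \(\Phi(x)\) and \(\Phi(x')\) are independent and uniform on \(\mathcal{Y}\). Conditioning on \(\Phi(x)=y\), the probability that \(\Phi(x')=y\) is \(1/M\), so \(\Prob[\Phi(x)=\Phi(x')]=\sum_y M^{-2}=1/M\).

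Summing this over all unordered pairs gives \(\binom{N}{2}/M\), as claimed. The only real obstacle is justifying the product-measure description of a uniformly random function. I would make it explicit by counting: the number of functions with \(\Phi(x)=y\) and \(\Phi(x')=y'\) is \(M^{N-2}\), so the probability is \(M^{N-2}/M^N=1/M^2\). This gives independence and uniformity of the pair directly.

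As an optional sanity check, I could remark that this matches Proposition~\ref{prop:collision_fiber_identity} in expectation. Averaging \(\sum_y k_y^2\) over random \(\Phi\) gives \(N+N(N-1)/M\), and subtracting the \(N\) diagonal terms and halving recovers \(\binom{N}{2}/M\).
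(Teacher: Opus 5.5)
Your proposal is correct and follows the paper's own proof: indicator variables for each unordered pair, linearity of expectation, and the fact that \(\Phi(x)\) and \(\Phi(x')\) are independent and uniform, so each collision has probability \(1/M\). Your explicit count \(M^{N-2}/M^N=1/M^2\) justifies the independence that the paper only asserts. The optional check via \(\sum_y k_y^2\) is also correct.
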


\begin{proof}
For each unordered pair \(\{x,x'\}\), define an indicator variable \(I_{x,x'}\) that equals one when \(\Phi(x)=\Phi(x')\). Since the two images are independent uniform elements of \(\mathcal{Y}\), \(\Prob[I_{x,x'}=1]=1/M\). Linearity of expectation over the \(\binom{N}{2}\) pairs gives the result.
\end{proof}

Corollary~\ref{cor:digest} is one of the central lessons of the framework. A huge internal path space should not be collapsed into a short output if the paper aims to study recovery from structured observables. A short digest may be useful as an identifier or checksum, but it should not replace the observable vector.

\begin{proposition}[Information upper bound]
Let \(X\) be any random hidden object and let \(Y=\Phi(X)\) take values in a finite observable space \(\mathcal{Y}\). Then
\[
I(X;Y)\le H(Y)\le \log_2 |\mathcal{Y}|.
\]
If \(Y\) is represented by \(L\) bits, then \(I(X;Y)\le L\).
\end{proposition}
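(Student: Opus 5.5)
The plan is to prove the chain of inequalities one link at a time, using only the definition of mutual information and two standard entropy facts. Throughout, all random variables take finitely many values, so every entropy is finite and \(\log_2\) is the base. First we write
\[
I(X;Y)=H(Y)-H(Y\mid X).
\]
Since \(Y=\Phi(X)\) is a deterministic function of \(X\), the conditional distribution of \(Y\) given \(X=x\) is the point mass at \(\Phi(x)\). Hence \(H(Y\mid X=x)=0\) for every \(x\) in the support, and so \(H(Y\mid X)=0\). This gives \(I(X;Y)=H(Y)\), which is the first inequality with equality. If one prefers not to rely on determinism, the inequality \(I(X;Y)\le H(Y)\) follows directly from \(H(Y\mid X)\ge 0\), because conditional entropy of a discrete variable is a nonnegative average of nonnegative entropies.

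The second step is the bound \(H(Y)\le\log_2|\mathcal{Y}|\). We will prove it by Jensen's inequality applied to the concave function \(\log_2\). Restricting to the support \(\Supp(Y)\subseteq\mathcal{Y}\), we have
\[
H(Y)=\sum_{y}\Prob[Y=y]\log_2\frac{1}{\Prob[Y=y]}\le\log_2\sum_{y\in\Supp(Y)}1=\log_2|\Supp(Y)|\le\log_2|\mathcal{Y}|.
\]
Equivalently, one can use the nonnegativity of the Kullback--Leibler divergence between the law of \(Y\) and the uniform distribution on \(\mathcal{Y}\). Restricting to the support is the only point requiring care, since terms with zero probability must be dropped before applying Jensen.

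Finally, for the \(L\)-bit statement we note that if \(Y\) is represented by \(L\) bits, then there is an injective encoding of the occurring values into \(\{0,1\}^L\). The effective observable space therefore has at most \(2^L\) elements. Applying the previous bound gives \(I(X;Y)\le\log_2 2^L=L\). Since entropy is invariant under injective relabeling, nothing is lost by passing to the encoding. No step is a genuine obstacle; the only subtle points are the convention \(0\log 0=0\) and the restriction to the support in the Jensen step. It is also worth remarking that this bound combines with Corollary~\ref{cor:digest}: a short \(\ell\)-bit digest reveals at most \(\ell\) bits about a hidden object of much larger entropy.
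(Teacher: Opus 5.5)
Your proof is correct and follows the same route as the paper. The paper simply cites the standard inequalities \(I(X;Y)\le H(Y)\) (from \(H(Y\mid X)\ge 0\)) and \(H(Y)\le\log_2|\mathcal{Y}|\), then uses \(|\mathcal{Y}|\le 2^L\); you supply the routine proofs of these facts and also note that determinism of \(\Phi\) gives the equality \(I(X;Y)=H(Y)\), and your assumption that all variables are finitely valued is harmless, since only the finiteness of \(Y\) is actually used.
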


\begin{proof}
This is the standard information theoretic inequality \(I(X;Y)\le H(Y)\), together with \(H(Y)\le \log_2|\mathcal{Y}|\). If \(Y\) is represented by \(L\) bits, then \(|\mathcal{Y}|\le 2^L\).
\end{proof}

Thus the mathematical public object should have enough information content to make the inverse problem meaningful. Large public observables are not automatically bad. They may be necessary to avoid reducing the problem to a short digest preimage question.

\begin{proposition}[Conditional entropy after a deterministic observation]
Let \(X\) be uniform on a finite set \(\mathcal{X}\), and let \(Y=\Phi(X)\). Then
\[
H(X\mid Y)=\sum_{y\in\Phi(\mathcal{X})}\frac{|\Phi^{-1}(y)|}{|\mathcal{X}|}\log_2 |\Phi^{-1}(y)|.
\]
In particular, if all nonempty fibers have size at least \(M\), then \(H(X\mid Y)\ge \log_2 M\).
\end{proposition}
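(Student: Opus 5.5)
The plan is to compute \(H(X\mid Y)\) directly from the definition of conditional entropy, using the fact that \(Y\) is a deterministic function of \(X\). Write
\[
H(X\mid Y)=\sum_{y\in\Phi(\mathcal{X})}\Prob[Y=y]\,H(X\mid Y=y),
\]
where the sum only runs over the image \(\Phi(\mathcal{X})\), since other values of \(y\) have probability zero. As in the proof of Theorem~\ref{thm:fiber_success} and Proposition~\ref{prop:collision_fiber_identity}, uniformity of \(X\) gives \(\Prob[Y=y]=|\Phi^{-1}(y)|/|\mathcal{X}|\).

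The key step is identifying the conditional law of \(X\) given \(Y=y\). For \(x\in\Phi^{-1}(y)\) we have \(\Prob[X=x\mid Y=y]=\Prob[X=x]/\Prob[Y=y]=1/|\Phi^{-1}(y)|\), and the conditional probability is zero off the fiber. Hence the conditional distribution is uniform on the fiber, and \(H(X\mid Y=y)=\log_2|\Phi^{-1}(y)|\). Substituting this into the displayed sum yields the claimed formula.

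For the lower bound, if every nonempty fiber satisfies \(|\Phi^{-1}(y)|\ge M\), then each logarithm in the formula is at least \(\log_2 M\). The weights \(|\Phi^{-1}(y)|/|\mathcal{X}|\) sum to one over \(\Phi(\mathcal{X})\), because the fibers partition \(\mathcal{X}\). Therefore \(H(X\mid Y)\ge\log_2 M\).

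There is no real obstacle here. The only points needing care are restricting the sum to the image, so that no \(\log 0\) terms or undefined conditionings arise, and noting that the fibers partition \(\mathcal{X}\), so the weights form a probability distribution.
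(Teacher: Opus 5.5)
Your proposal is correct and follows the paper's own proof. The paper also observes that, conditional on \(Y=y\), \(X\) is uniform on \(\Phi^{-1}(y)\), so \(H(X\mid Y=y)=\log_2|\Phi^{-1}(y)|\), and then averages over \(Y\). Your explicit checks that the sum runs over the image and that the weights \(|\Phi^{-1}(y)|/|\mathcal{X}|\) sum to one just spell out what the paper treats as immediate for the lower bound.
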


\begin{proof}
For a fixed observable \(y\), the conditional distribution of \(X\) given \(Y=y\) is uniform over \(\Phi^{-1}(y)\). Therefore
\[
H(X\mid Y=y)=\log_2 |\Phi^{-1}(y)|.
\]
Averaging this identity over \(Y\) gives the formula. The lower bound follows immediately if every nonempty fiber has size at least \(M\).
\end{proof}

\begin{theorem}[Fano type obstruction to exact recovery]
\label{thm:fano_obstruction}
This standard information theoretic bound is included in a form adapted to exact recovery \cite{cover2006}. Let \(X\) be uniform on a finite set \(S\) with \(|S|=N\ge 2\), let \(Y\) be any observation of \(X\), and let \(\widehat X=A(Y)\) be any recovery rule. If
\[
P_e=\Prob[\widehat X\ne X],
\]
then
\[
H(X\mid Y)\le h_2(P_e)+P_e\log_2(N-1),
\]
where \(h_2(p)=-p\log_2 p-(1-p)\log_2(1-p)\) is the binary entropy function, with the convention \(0\log 0=0\). Consequently, the weaker but often useful bound
\[
P_e\ge \frac{H(X\mid Y)-1}{\log_2 N}
\]
holds whenever the numerator is positive.
\end{theorem}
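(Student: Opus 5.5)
The proof is the standard Fano argument. Introduce the error indicator \(E=\mathbf{1}[\widehat X\ne X]\) and expand \(H(E,X\mid Y)\) via the chain rule in two ways.

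First, since \(\widehat X=A(Y)\) is a function of \(Y\), the indicator \(E\) is determined by \((X,Y)\). Hence \(H(E\mid X,Y)=0\) and
\[
H(E,X\mid Y)=H(X\mid Y).
\]
If \(A\) is randomized, condition additionally on its internal randomness \(R\), which is independent of \(X\) given \(Y\). Then \(H(X\mid Y)=H(X\mid Y,R)\), so the deterministic case suffices.

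Second, write
\[
H(E,X\mid Y)=H(E\mid Y)+H(X\mid E,Y).
\]
Conditioning cannot increase entropy, so \(H(E\mid Y)\le H(E)=h_2(P_e)\). Split the remaining term as
\[
H(X\mid E,Y)=(1-P_e)\,H(X\mid E=0,Y)+P_e\,H(X\mid E=1,Y).
\]
On \(E=0\), \(X=\widehat X\) is a function of \(Y\), so the first conditional entropy is zero. On \(E=1\), \(X\) lies in \(S\setminus\{\widehat X\}\), a set of size \(N-1\) determined by \(Y\), so the second conditional entropy is at most \(\log_2(N-1)\). Combining gives the main inequality. Uniformity of \(X\) is not actually needed for this step; it only matters for interpreting \(H(X\mid Y)\) via the earlier conditional entropy proposition.

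The only care point is the case \(\widehat X\notin S\). Then the event \(E=1\) allows all \(N\) values of \(X\), which would break the \(\log_2(N-1)\) bound. Handle this by replacing \(A\) with the rule that maps any output outside \(S\) to a fixed element of \(S\). This can only decrease \(P_e\), and since the bound is increasing in \(P_e\) on the relevant range, proving it for the modified rule suffices. More simply, assume without loss of generality that \(A\) outputs elements of \(S\).

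For the weaker bound, use \(h_2(P_e)\le 1\) and \(\log_2(N-1)\le\log_2 N\) to get
\[
H(X\mid Y)\le 1+P_e\log_2 N.
\]
Rearranging, and using \(N\ge2\) so that \(\log_2 N>0\), gives the stated \(P_e\) lower bound. The bound is vacuous when the numerator is not positive.

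I expect no real obstacle: the result is textbook (Cover and Thomas). The only delicate points are the randomized-algorithm and outputs-outside-\(S\) reductions noted above.
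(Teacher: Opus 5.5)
Your proof is the paper's proof: the error indicator $E$, the chain rule for $H(E,X\mid Y)$, $H(E\mid Y)\le h_2(P_e)$, zero conditional entropy on $\{E=0\}$, at most $\log_2(N-1)$ on $\{E=1\}$, and then $h_2\le 1$ and $\log_2(N-1)\le\log_2 N$ for the weak form. Writing $H(E,X\mid Y)=H(X\mid Y)$ where the paper uses an inequality makes no difference, and your reduction for randomized rules is correct.

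One side remark needs correcting: your patch for outputs outside $S$ does not work.

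\begin{itemize}
\item The map $p\mapsto h_2(p)+p\log_2(N-1)$ increases only on $[0,1-1/N]$ and decreases on $[1-1/N,1]$. So lowering $P_e$ can raise the bound rather than lower it.
\item In fact the first inequality is false for such rules. Take $Y$ constant and let $A$ always output a fixed point outside $S$. Then $P_e=1$ and $H(X\mid Y)=\log_2 N>\log_2(N-1)=h_2(1)+1\cdot\log_2(N-1)$.
\item Your modified rule does not rescue this example. It has $P_e'=1-1/N$ and bound exactly $\log_2 N$, which holds but does not transfer back to the original rule.
\end{itemize}

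So $\widehat X\in S$ is a genuine hypothesis, not a without-loss-of-generality step. The paper's proof also uses it, implicitly, in the ``at most $N-1$ values'' step.

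The weaker bound $P_e\ge (H(X\mid Y)-1)/\log_2 N$ does hold for arbitrary rules. On $\{E=1\}$ you can bound $H(X\mid E=1,Y)\le\log_2 N$ directly, since $X\in S$ regardless of where $\widehat X$ lands.
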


\begin{proof}
Let \(E\) be the error indicator, equal to one when \(\widehat X\ne X\). Since \(\widehat X\) is a function of \(Y\), conditioning on \(Y\) also reveals \(\widehat X\). We use
\[
H(X\mid Y)\le H(E,X\mid Y)=H(E\mid Y)+H(X\mid E,Y).
\]
The first term satisfies \(H(E\mid Y)\le H(E)=h_2(P_e)\). If \(E=0\), then \(X=\widehat X\) is determined by \(Y\), so \(H(X\mid E=0,Y)=0\). If \(E=1\), then \(X\) can take at most \(N-1\) values different from \(\widehat X\), so \(H(X\mid E=1,Y)\le \log_2(N-1)\). Averaging over \(E\) gives Fano's inequality. Since \(h_2(P_e)\le 1\) and \(\log_2(N-1)\le \log_2 N\), we obtain
\[
H(X\mid Y)\le 1+P_e\log_2 N,
\]
which rearranges to the displayed lower bound.
\end{proof}

This theorem is useful for the present framework because it turns large conditional uncertainty into a lower bound on exact recovery error. It is not a computational hardness theorem. It is an information theoretic obstruction that applies even to unbounded recovery rules.

\begin{corollary}[Necessary observable length for injectivity]
If \(\Phi:S\to\{0,1\}^L\) is injective on a finite set \(S\), then
\[
L\ge \lceil\log_2 |S|\rceil.
\]
\end{corollary}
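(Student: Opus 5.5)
The plan is a pigeonhole argument on cardinalities, with an integrality step at the end. Since \(\Phi\) is injective on \(S\), its restriction to \(S\) is a bijection onto the image \(\Phi(S)\). Hence \(|\Phi(S)|=|S|\).

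The image is contained in the codomain \(\{0,1\}^L\), which has exactly \(2^L\) elements. This gives
\[
|S|=|\Phi(S)|\le 2^L,
\]
and taking base-two logarithms yields \(L\ge\log_2|S|\).

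Equivalently, one could invoke the first part of Theorem~\ref{thm:fiber_lower}, applied to \(\Phi\) restricted to \(S\) with \(|\mathcal{Y}|=2^L\). Injectivity forces every fiber to have size at most one, so \(1\ge |S|/2^L\). This is the same inequality, phrased the way Corollary~\ref{cor:digest} was derived.

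To reach the ceiling, observe that \(L\) is an integer and that \(\lceil\cdot\rceil\) gives the smallest integer at least its argument. An integer bounded below by \(\log_2|S|\) is therefore bounded below by \(\lceil\log_2|S|\rceil\).

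For the degenerate case \(S=\emptyset\), \(\log_2 0\) is undefined. I would either assume \(S\) is nonempty, as is implicit in the recovery setting, or note that the bound is vacuous there.

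None of these steps is a real obstacle. The only point needing care is this integrality step: the bound \(L\ge\log_2|S|\) becomes the rounded-up version only because \(L\) counts bits and is therefore an integer.
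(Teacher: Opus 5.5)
Your proposal is correct and is essentially the paper's own argument: injectivity gives \(|S|\le 2^L\), and taking base-two logarithms yields the bound. You also spell out two points the paper leaves implicit: the integrality of \(L\), which justifies the ceiling, and the degenerate case \(S=\varnothing\).
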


\begin{proof}
Injectivity gives \(|S|\le 2^L\). Taking base two logarithms gives the claim.
\end{proof}

\subsection{Observable length table}

Table~\ref{tab:observable_size} gives simple examples. The numbers are not recommended parameters. They illustrate the relationship between observable count, bits per observable, and public information length.

\begin{table}[t]
\centering
\begin{tabular}{rrrr}
\toprule
observables \(m\) & bits each \(\ell\) & total bits & total bytes \\
\midrule
256 & 16 & 4096 & 512 \\
512 & 16 & 8192 & 1024 \\
1024 & 16 & 16384 & 2048 \\
1024 & 32 & 32768 & 4096 \\
4096 & 16 & 65536 & 8192 \\
8192 & 16 & 131072 & 16384 \\
\bottomrule
\end{tabular}
\caption{Large observable vectors can carry far more information than a short digest while still being practical for mathematical experiments.}
\label{tab:observable_size}
\end{table}

\section{Exact recovery versus approximate reconstruction}

\subsection{Metrics on path objects}

To compare exact and approximate recovery, define distances on paths. For state paths, a natural Hamming type distance is
\[
d_H(\gamma,\gamma')=|\{i:x_i\ne x'_i\}|.
\]
For microscopic objects, one can combine state, increment, perturbation, and noise discrepancies:
\[
d_{\mathcal{X}}(X,X')=d_H(\gamma,\gamma')+d_H(\Delta,\Delta')+d_H(\eps,\eps')+d_H(\eta,\eta').
\]
Exact recovery is the case \(d_{\mathcal{X}}(X,X')=0\).

\begin{definition}[Approximation radius]
For \(r\ge 0\), the ball of radius \(r\) around \(X\) is
\[
B_r(X)=\{X'\in\mathcal{X}:d_{\mathcal{X}}(X,X')\le r\}.
\]
A recovery algorithm is an \(r\) approximate reconstruction algorithm if it outputs \(X'\in B_r(X^{\star})\).
\end{definition}

Approximate reconstruction can be meaningful for estimation, visualization, or physical interpretation. It is not the same as exact recovery.

\begin{proposition}[Near is not exact]
For any discrete metric \(d\) on \(\mathcal{X}\), exact recovery is equivalent to recovery within radius \(0\). If an exact verification rule accepts only \(X=X^{\star}\), then every \(X'\ne X^{\star}\), regardless of how small \(d(X,X')\) is, fails exact verification.
\end{proposition}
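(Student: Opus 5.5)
The statement has two parts, and I will prove both directly from the metric axioms and the definition of exact recovery given earlier in the paper. No counting or information argument is needed. The plan is to unpack what ``recovery within radius \(0\)'' means using the ball \(B_r(X)\) from the Approximation radius definition. Then I check that the second assertion is just the contrapositive reading of the verification rule.

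For the first part, I fix the hidden object \(X^{\star}\) and a candidate \(X'\). I will show that
\[
B_0(X^{\star})=\{X'\in\mathcal{X}:d(X^{\star},X')\le 0\}=\{X^{\star}\}.
\]
Since \(d\) is a metric, it is nonnegative, so \(d(X^{\star},X')\le 0\) forces \(d(X^{\star},X')=0\). The identity of indiscernibles then forces \(X'=X^{\star}\). Conversely, \(d(X^{\star},X^{\star})=0\) puts \(X^{\star}\) in the ball. Hence a \(0\) approximate reconstruction algorithm, which by definition outputs an element of \(B_0(X^{\star})\), outputs exactly \(X^{\star}\). This is the exact object recovery of the earlier Exact recovery definition, and the reverse implication is immediate. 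I will note in passing that the same holds for the composite metric \(d_{\mathcal{X}}\), since each summand is a Hamming distance and so a metric. If the intended meaning of ``discrete metric'' is the \(0/1\) metric, the argument is the same.

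For the second part, I suppose the verification rule accepts a candidate iff it equals \(X^{\star}\). Take any \(X'\ne X^{\star}\). By definition the rule rejects \(X'\). The value of \(d(X^{\star},X')\) plays no role, because it is positive by the metric axioms however small it is. For a discrete finite space I can add that it is bounded below by \(\min_{X\ne X''}d(X,X'')>0\). So no sequence of ever-closer non-equal candidates ever passes, which is the content of ``near is not exact.''

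The only possible obstacle is making sure \(d\) really separates points. If someone used a pseudometric, for example distance on states only while the target is the full microscopic tuple, then radius-\(0\) recovery would be quotient recovery, as in the earlier remark on non-injective \(\Enc\). I will handle this by explicitly invoking the identity of indiscernibles and observing that \(d_{\mathcal{X}}\) includes all of the components \(\Delta,\eps,\eta\). Its zero set is therefore the diagonal.
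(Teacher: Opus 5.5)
Your proposal is correct and is the same argument as the paper's, which cites the identity of indiscernibles, \(d(X,X^{\star})=0\) iff \(X=X^{\star}\), and observes that exact verification accepts exactly this case; your computation \(B_0(X^{\star})=\{X^{\star}\}\) via nonnegativity just makes that step explicit. One small precision on your side remark: each Hamming summand of \(d_{\mathcal{X}}\) is only a pseudometric on \(\mathcal{X}\), so \(d_{\mathcal{X}}\) is a metric because its zero set is the diagonal (with \(x_0\) covered by the \(\gamma\) term), as your last paragraph says, and not because each summand is itself a metric on \(\mathcal{X}\).
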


\begin{proof}
By the definition of a metric, \(d(X,X^{\star})=0\) if and only if \(X=X^{\star}\). Exact verification accepts exactly this case.
\end{proof}

\subsection{Why saddlepoint intuition does not solve exact recovery}

A saddlepoint method usually identifies stationary or dominant configurations of an action. Suppose a discrete action \(S\) is defined on \(\mathcal{X}\), and a method finds
\[
\widehat X\in\operatorname*{argmin}_{X\in\mathcal{X}} S(X)
\]
or a nearby approximate minimizer. If the hidden path \(X^{\star}\) is not uniquely determined by being a global minimizer, then the saddlepoint does not recover \(X^{\star}\). Even if \(X^{\star}\) lies near a dominant region, its microscopic perturbations may remain hidden.

\begin{theorem}[Dominant region ambiguity]
\label{thm:dominant_region}
Let \(\mathcal{X}\) be finite and let \(R\subseteq\mathcal{X}\) be a dominant region identified by an approximation method. If \(|R|>1\) and the hidden object \(X^{\star}\) is only known to lie in \(R\), then exact recovery of \(X^{\star}\) from membership in \(R\) alone is impossible.
\end{theorem}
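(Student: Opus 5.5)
The plan is to reduce the statement to the identifiability criterion, Theorem~\ref{thm:identifiability}, or equivalently to the fiber bound of Theorem~\ref{thm:fiber_success}. The key observation is that ``membership in \(R\)'' is itself an observation map.

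First I formalize what it means to know only that \(X^{\star}\in R\). I define the indicator observable \(\Phi_R:\mathcal{X}\to\{0,1\}\) by \(\Phi_R(X)=1\) if \(X\in R\) and \(0\) otherwise. Any recovery procedure that uses only this information is a function \(A\) of \(\Phi_R(X^{\star})\) (together with the public description of \(R\)). On the event \(X^{\star}\in R\), the input to \(A\) is the constant value \(1\), so \(A\) outputs one fixed candidate \(\widehat X\), or a random candidate drawn from a fixed distribution if randomness is allowed.

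Second, I use \(|R|>1\) to pick two distinct elements \(X_1\ne X_2\) in \(R\). Both yield the same observable \(\Phi_R(X_1)=\Phi_R(X_2)=1\), so \(\Phi_R\) is not injective on \(R\). By Theorem~\ref{thm:identifiability}, applied with generation support \(R\), no procedure recovers every hidden object in \(R\) correctly. Concretely, the fixed output \(\widehat X\) differs from at least one of \(X_1,X_2\). In the randomized case, \(\Prob[A=X_1]+\Prob[A=X_2]\le 1\), so \(A\) fails with probability at least \(1/2\) on one of them.

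As an optional quantitative strengthening, suppose \(X^{\star}\) is uniform on \(R\). Then Theorem~\ref{thm:fiber_success} (or Corollary~\ref{cor:uniform_planted_success}, since the fiber of \(1\) is all of \(R\)) bounds the optimal success probability by \(1/|R|<1\).

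I expect the only real obstacle to be interpretive rather than technical. ``Impossible'' must be read in the worst-case sense, meaning no procedure succeeds for all \(X^{\star}\in R\), or in the average sense as success probability strictly below one. I must also be explicit that the algorithm has no side information beyond membership in \(R\). Once this is fixed via \(\Phi_R\), the argument is immediate.
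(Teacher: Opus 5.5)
Your proposal is correct and follows essentially the same argument as the paper: pick distinct $X_1,X_2\in R$, observe that the information ``$X^{\star}\in R$'' is consistent with both, and conclude that no deterministic or randomized procedure can succeed with certainty on both. Encoding membership as the indicator observable $\Phi_R$ and invoking Theorem~\ref{thm:identifiability} only repackages that direct argument. The explicit bounds $\Prob[A=X_1]+\Prob[A=X_2]\le 1$ and $1/|R|$ under a uniform prior add a quantitative refinement the paper does not state here.
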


\begin{proof}
If \(|R|>1\), there exist distinct \(X_1,X_2\in R\). The information that the hidden object lies in \(R\) is consistent with both \(X_1\) and \(X_2\). No deterministic or randomized procedure using only this information can identify the hidden object with probability \(1\) for all possible hidden objects in \(R\).
\end{proof}

Theorem~\ref{thm:dominant_region} is elementary, but it captures the central idea. Approximation can reduce a huge space to a smaller region, yet exact recovery remains unresolved if the region still contains many microscopic variants.

\begin{corollary}[Residual uncertainty after localization]
Suppose an approximation method outputs a region \(R(Y)\subseteq\mathcal{X}\) known to contain \(X^{\star}\), and suppose that conditional on this information the hidden object is uniform over \(R(Y)\). Then the optimal exact recovery probability from this localized information alone is
\[
\frac{1}{|R(Y)|}.
\]
Thus reducing a search space from \(|\mathcal{X}|\) to \(|R(Y)|\) is useful, but it is not exact recovery unless \(|R(Y)|=1\).
\end{corollary}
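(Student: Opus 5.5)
The plan is to reduce the corollary to the Bayes optimal recovery formula of Proposition~\ref{prop:bayes_recovery}, or equivalently to the fiber argument of Theorem~\ref{thm:fiber_success}. The localized information plays the role of the observation. Concretely, I will condition on the value of the localization output, that is, on the event that the method returns a particular region \(R=R(Y)\). By hypothesis, the posterior of \(X^{\star}\) given this information is uniform on \(R\). Hence \(\Prob[X^{\star}=x\mid R(Y)=R]=1/|R|\) for every \(x\in R\), and it equals \(0\) outside \(R\).

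The first step is to take any recovery rule that sees only the localized information, possibly randomized, and write its output as a distribution over candidates. Its conditional success probability is then \(\sum_x \Prob[\text{output}=x]\,\Prob[X^{\star}=x\mid R(Y)=R]\). This is at most \(\max_x \Prob[X^{\star}=x\mid R(Y)=R]=1/|R|\), exactly as in the proof of Proposition~\ref{prop:bayes_recovery}. The second step is achievability: outputting any fixed element of \(R\), or a uniformly chosen one, succeeds with probability exactly \(1/|R|\). So conditional on the localized information, the optimum is \(1/|R(Y)|\). If one wants an unconditional statement, averaging over \(Y\) gives \(\E[1/|R(Y)|]\), which matches Theorem~\ref{thm:fiber_success}. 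I will state the conditional form as the claim.

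The final remark follows at once from this formula. The quantity \(1/|R(Y)|\) equals \(1\) exactly when \(|R(Y)|=1\). Otherwise there is a positive failure probability, which is consistent with Theorem~\ref{thm:dominant_region}.

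I do not expect a genuine obstacle. The only delicate point is interpretive: I must make precise that ``from this localized information alone'' means the rule is measurable with respect to \(R(Y)\), rather than all of \(Y\). Only then does the uniform posterior hypothesis apply. I will state this explicitly and treat randomized rules through the convex combination bound above.
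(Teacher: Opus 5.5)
Your proposal is correct and follows the paper's argument: under the uniform posterior on \(R(Y)\), the best single guess succeeds with probability \(1/|R(Y)|\), and no rule using only the region information can do better. You simply make explicit what the paper leaves implicit, namely the convex-combination bound for randomized rules, achievability, and the requirement that the rule depend on \(R(Y)\) alone.
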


\begin{proof}
Under the stated conditional uniform distribution, every element of \(R(Y)\) is equally likely. The best possible exact guess has success probability \(1/|R(Y)|\), and no procedure using only the region membership information can do better.
\end{proof}

\begin{figure}[t]
\centering
\begin{tikzpicture}[scale=1.0, every node/.style={font=\small}]
  \draw[rounded corners,thick,fill=gray!8] (-3,-2) rectangle (3,2);
  \node at (0,1.75) {path object space};
  \draw[fill=blue!8,draw=blue!50!black,thick] (0,0) ellipse (1.8 and 0.95);
  \node[blue!50!black] at (0,1.15) {dominant region};
  \foreach \x/\y in {-1.2/0.2,-0.9/-0.3,-0.4/0.5,0.0/-0.1,0.6/0.35,1.0/-0.35,1.3/0.1} {
    \filldraw[black] (\x,\y) circle (1.4pt);
  }
  \filldraw[red!75!black] (0.6,0.35) circle (3pt) node[below right] {$X^\star$};
  \node[align=center] at (0,-2.45) {Approximation may locate a region. Exact recovery must identify one microscopic point.};
\end{tikzpicture}
\caption{A dominant region may contain many microscopic path objects. Locating the region does not imply exact recovery.}
\label{fig:region}
\end{figure}

\section{Counting with dimension, branching, and noise}

This section collects more explicit growth estimates. Let \(b=|\mathcal{D}|\), \(r=|\mathcal{E}|\), and let \(s=|\Supp(D_\sigma)|\) be the support size of a truncated noise distribution. If the noise is not truncated, one works with effective support or entropy instead.

\begin{proposition}[Microscopic history count]
\label{prop:micro_count}
Assume that each step independently chooses \(\Delta_i\in\mathcal{D}\), \(\eps_i\in\mathcal{E}\), and \(\eta_i\in\mathcal{N}\), with sizes \(b,r,s\). Then the number of microscopic histories of length \(T\) is
\[
(brs)^T.
\]
If \(x_0\) is also variable over \(\Z_q^n\), the number is
\[
q^n(brs)^T.
\]
\end{proposition}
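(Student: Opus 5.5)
The plan is a direct product-rule count, in the same style as the proof of Theorem~\ref{thm:growth}. I will first fix what a "microscopic history of length \(T\)" means. By the Definition of a complete microscopic path object, it is a sequence of per-step triples \((\Delta_i,\eps_i,\eta_i)_{i=0}^{T-1}\). Equivalently, it is an element of the Cartesian product \((\mathcal{D}\times\mathcal{E}\times\mathcal{N})^T\), and once \(x_0\) is included it is an element of \(\Z_q^n\times\mathcal{D}^T\times\mathcal{E}^T\times\mathcal{N}^T\). The count then reduces to computing the cardinality of a finite Cartesian product.

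For the first claim, I will observe that at each step \(i\) the triple ranges over \(\mathcal{D}\times\mathcal{E}\times\mathcal{N}\). This set has exactly \(brs\) elements, since the three coordinates are chosen independently with no compatibility constraint. There are \(T\) steps and the choices at different steps are independent, so the multiplication rule gives \((brs)^T\). This is exactly the argument of Theorem~\ref{thm:growth}, extended from two factors per step to three.

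For the second claim, I will adjoin the free initial state \(x_0\in\Z_q^n\). This contributes an independent factor \(|\Z_q^n|=q^n\), which gives \(q^n(brs)^T\).

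The only point needing care, which I expect to be the main (mild) obstacle, is to make clear that we are counting \emph{histories}, i.e.\ tuples, and not the resulting state paths. Distinct tuples can produce the same state sequence, as Proposition~\ref{prop:transition_decomposition_multiplicity} shows. The map \(X\mapsto(x_0,\Delta,\eps,\eta)\) is the identity on tuples, so no collisions reduce the count, and no endpoint constraint is imposed, so Theorem~\ref{thm:character_count} is not needed. I will also note that \(s\) refers to the support of a truncated noise alphabet \(\mathcal{N}\), in line with the remark on untruncated discrete Gaussians.
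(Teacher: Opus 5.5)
Your proposal is correct and follows the paper's proof: both count \(brs\) independent choices per step, multiply over the \(T\) steps to get \((brs)^T\), and multiply by \(q^n\) when the initial state \(x_0\) is free. Your remark that the count is over tuples rather than resulting state paths is a sensible clarification, though the product-rule argument does not depend on it.
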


\begin{proof}
For each step there are \(b r s\) choices. Multiplying over \(T\) steps gives \((brs)^T\). If \(x_0\) is variable, multiply by \(q^n\).
\end{proof}

When noise has a distribution rather than a finite uniform support, entropy is a better measure.

\begin{definition}[Min entropy]
For a discrete random variable \(Z\), the min entropy is
\[
\Hmin(Z)=-\log_2\max_z \Prob[Z=z].
\]
\end{definition}

\begin{proposition}[Min entropy of independent choices]
If \(Z_1,
\ldots,Z_T\) are independent, then
\[
\Hmin(Z_1,\ldots,Z_T)=\sum_{i=1}^T \Hmin(Z_i).
\]
In particular, if each step contributes at least \(h\) bits of min entropy, the sequence contributes at least \(Th\) bits.
\end{proposition}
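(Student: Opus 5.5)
The plan is to reduce the claim to a statement about the maximal probability of a product distribution. Write \(p_i(z)=\Prob[Z_i=z]\). By independence, the joint law of \((Z_1,\ldots,Z_T)\) factors as
\[
\Prob[(Z_1,\ldots,Z_T)=(z_1,\ldots,z_T)]=\prod_{i=1}^T p_i(z_i).
\]
The first step is therefore to show that
\[
\max_{(z_1,\ldots,z_T)}\prod_{i=1}^T p_i(z_i)=\prod_{i=1}^T \max_{z_i} p_i(z_i).
\]
Once this holds, applying \(-\log_2\) turns the product into a sum. That gives \(\Hmin(Z_1,\ldots,Z_T)=\sum_i \Hmin(Z_i)\) directly from the definition of min entropy.

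For the key identity, I prove the two inequalities separately.

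\emph{Upper bound.} Each factor satisfies \(0\le p_i(z_i)\le \max_z p_i(z)\). A product of nonnegative numbers is monotone in each factor, so every joint probability is at most \(\prod_i\max_z p_i(z)\).

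\emph{Lower bound.} Choose, for each \(i\), a maximizer \(z_i^\ast\) of \(p_i\). The tuple \((z_1^\ast,\ldots,z_T^\ast)\) attains the product of the maxima, so the joint maximum is at least that value.

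The only point needing care is that the maxima exist. In this paper's setting the variables take finitely many values, so there is no issue. For a general discrete variable, the probabilities are summable and tend to zero, so the supremum of a nonzero probability mass function is attained. With independence given, this existence issue is the only possible obstacle, and it is mild.

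For the ``in particular'' clause, suppose \(\Hmin(Z_i)\ge h\) for each \(i\). Summing the \(T\) terms in the equality gives \(\Hmin(Z_1,\ldots,Z_T)\ge Th\).

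One consequence is worth noting after the proof. Let \(X\) be the full microscopic history from the noisy micro path distribution, with the \((\Delta_i,\eps_i,\eta_i)\) sampled independently. Then every fixed guess of \(X\) succeeds with probability at most \(2^{-Th}\). This is the min entropy analogue of Proposition~\ref{prop:micro_count}, and it handles untruncated Gaussian noise.
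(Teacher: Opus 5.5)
Your proof is correct and follows the paper's argument: factor the joint law by independence, identify the maximum of the product with the product of the maxima, and take $-\log_2$. Your explicit two-sided justification of that identity and your remark that the maxima are attained only fill in details the paper leaves implicit.
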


\begin{proof}
Independence gives
\[
\max_{z_1,\ldots,z_T}\Prob[Z_1=z_1,\ldots,Z_T=z_T]
=\prod_{i=1}^T \max_{z_i}\Prob[Z_i=z_i].
\]
Taking \(-\log_2\) gives the result.
\end{proof}

\begin{table}[t]
\centering
\begin{tabular}{rrrrr}
\toprule
branch \(b\) & micro \(r\) & noise \(s\) & length \(T\) & histories \\
\midrule
256 & 1 & 1 & 128 & \(2^{1024}\) \\
256 & 1 & 1 & 256 & \(2^{2048}\) \\
256 & 256 & 1 & 256 & \(2^{4096}\) \\
256 & 256 & 16 & 256 & \(2^{5120}\) \\
1024 & 256 & 16 & 256 & \(2^{5632}\) \\
\bottomrule
\end{tabular}
\caption{Illustrative microscopic history counts. These numbers measure search space size, not proven hardness.}
\label{tab:history_counts}
\end{table}

\subsection{State space versus history space}

The number of histories can exceed the number of possible state paths if different transition decompositions lead to the same states. Conversely, the state path space can be huge even when the allowed transitions are restricted. It is therefore useful to distinguish:

\begin{enumerate}[label=(\alph*)]
\item state path space: possible sequences \((x_0,\ldots,x_T)\);
\item macro history space: possible increment sequences \(\Delta\);
\item microscopic history space: possible \((\Delta,\eps,\eta)\);
\item observable image: possible values \(Y=\Phi(X)\).
\end{enumerate}

The exact recovery problem should specify which of these is the target. A paper that mixes them risks overstating the result.

\section{Quantization and Gaussian ambiguity}

\subsection{Quantized observations}

Let \(F:\mathcal{X}\to\R^m\) be a feature map. A quantized observable has the form
\[
Y_j=\left\lfloor \frac{F_j(X)}{\tau_j}\right\rceil.
\]
The quantization cell of \(Y\) is
\[
C_Y=\{z\in\R^m: \lfloor z_j/\tau_j\rceil=Y_j\text{ for all }j\}.
\]
Exact recovery from \(Y\) requires distinguishing all hidden objects whose feature vectors lie in the same cell.

\begin{proposition}[Quantization fiber]
Let \(F:\mathcal{X}\to\R^m\) and let \(Q\) be a componentwise quantizer. The fiber of the quantized observable \(Y=Q(F(X))\) is
\[
(Q\circ F)^{-1}(Y)=\{X\in\mathcal{X}:F(X)\in C_Y\}.
\]
Thus quantization merges all hidden objects whose feature vectors lie in the same cell.
\end{proposition}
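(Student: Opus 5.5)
The proof is a direct unwinding of definitions, in the same style as Proposition~\ref{prop:postprocessing_fibers}. I will fix an observable value \(Y\in\Z^m\), viewed as a value of the componentwise quantizer \(Q(z)_j=\lfloor z_j/\tau_j\rceil\), and prove the set equality by a chain of equivalences for an arbitrary \(X\in\mathcal{X}\). First, by the definition of the fiber of a map, \(X\in(Q\circ F)^{-1}(Y)\) if and only if \(Q(F(X))=Y\). Second, since \(Q\) acts coordinatewise, this holds if and only if \(\lfloor F_j(X)/\tau_j\rceil=Y_j\) for every \(j=1,\ldots,m\). Third, by the definition of the quantization cell \(C_Y\), with \(z=F(X)\), this last condition is exactly \(F(X)\in C_Y\). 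Chaining the three equivalences gives \((Q\circ F)^{-1}(Y)=\{X\in\mathcal{X}:F(X)\in C_Y\}\).

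For the interpretive sentence ("quantization merges all hidden objects whose feature vectors lie in the same cell"), I will observe that \(C_Y=Q^{-1}(Y)\). The displayed identity then reads \((Q\circ F)^{-1}(Y)=F^{-1}(Q^{-1}(Y))\), which equals \(\bigcup_{z\in C_Y\cap F(\mathcal{X})}F^{-1}(z)\). This is the union identity of Proposition~\ref{prop:postprocessing_fibers} with \(h=Q\), except that \(h\) is now applied to a real-valued feature map rather than to a finite observable space. So any two objects \(X_1,X_2\) with \(F(X_1),F(X_2)\in C_Y\) receive the same observable, even when \(F(X_1)\ne F(X_2)\).

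I do not expect a genuine obstacle. The only care needed is to make the rounding convention \(\lfloor\cdot\rceil\) at half-integers fixed and deterministic, so that \(Q\) is a well-defined function and the cells \(C_Y\) partition \(\R^m\). Any fixed tie-breaking rule works, since the argument only uses that \(Q\) is a function.
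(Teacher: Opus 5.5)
Your proof is correct and takes essentially the same approach as the paper, which just notes that the identity follows from the definition of preimage under a composition. Your chain of equivalences and the observation \(C_Y=Q^{-1}(Y)\) spell that one-line argument out explicitly.
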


\begin{proof}
This follows directly from the definition of preimage under composition.
\end{proof}

Quantization is useful because it destroys fragile real valued precision. It is dangerous because it creates collisions. The mathematical study must quantify both effects.

\subsection{Gaussian noise inside the path}

If Gaussian noise is part of the hidden transition, then the hidden object contains random variables that must be recovered exactly under microscopic recovery. If Gaussian noise is added only to the public observable, then exact recovery may become impossible unless there is redundancy or error correction.

\begin{definition}[Internal noise and observation noise]
Internal noise is noise \(\eta_i\) used in the transition rule \eqref{eq:transition} and included in the hidden object. Observation noise is noise \(e_j\) added after features are computed, for example
\[
Y_j=Q_j(X)+e_j.
\]
\end{definition}

The two forms should not be confused. Internal noise increases the hidden microscopic description. Observation noise corrupts public measurements. In a later cryptographic construction, observation noise may require reconciliation. In this mathematical paper, it is sufficient to distinguish the two roles.

\begin{proposition}[Observation noise can destroy exact identifiability]
\label{prop:obs_noise_identifiability}
Let \(F:S\to A\) be a deterministic feature map on a finite hidden object set \(S\), and let the public observation be
\[
Y=F(X)+E
\]
in an abelian group \(A\), where \(E\) is independent observation noise with support \(\mathcal{B}\). If there exist distinct \(X_1,X_2\in S\) such that
\[
(F(X_1)+\mathcal{B})\cap(F(X_2)+\mathcal{B})\ne\varnothing,
\]
then exact recovery of \(X\) from \(Y\) is not identifiable on \(S\) without recovering or otherwise accounting for \(E\).
\end{proposition}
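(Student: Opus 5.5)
The plan is to exhibit an explicit public value that is consistent with two distinct hidden objects, and then read off non-identifiability in the same way as in the converse direction of Theorem~\ref{thm:identifiability}. From the hypothesis I pick a common element $y\in(F(X_1)+\mathcal{B})\cap(F(X_2)+\mathcal{B})$. Unpacking membership in each translate gives noise values $e_1,e_2\in\mathcal{B}$ with
\[
y=F(X_1)+e_1=F(X_2)+e_2 .
\]
Since $\mathcal{B}$ is the support of $E$ and $E$ is independent of $X$, the event $\{X=X_1,E=e_1\}$ has positive probability whenever $X_1$ is in the generation support. The same holds for $\{X=X_2,E=e_2\}$. So $y$ is an observable value that actually occurs under both hidden objects.

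Next I will formalize ``not identifiable.'' Consider any recovery rule $A$ that sees only $Y$ and the public parameters, possibly randomized by internal coins independent of $(X,E)$. On input $y$, the output distribution of $A(y)$ is the same whether the hidden object was $X_1$ or $X_2$. Hence $\Prob[A(y)=X_1]+\Prob[A(y)=X_2]\le 1$, so $A$ cannot succeed with probability one in both cases. This rules out exact recovery for every object of $S$, matching the ``for all generated objects'' reading of Theorem~\ref{thm:identifiability}. If a quantitative version is wanted, I would add that Proposition~\ref{prop:bayes_recovery}, applied to the pair $(X,Y)$, caps the success probability strictly below one. The reason is that the posterior given $Y=y$ puts positive mass on both $X_1$ and $X_2$.

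Finally I will address the clause ``without recovering or otherwise accounting for $E$.'' The joint objects $(X_1,e_1)$ and $(X_2,e_2)$ are distinct, so the ambiguity sits in the map $X\mapsto Y$ rather than in the joint map $(X,E)\mapsto F(X)+E$. Side information on $E$ can break the tie: for example, revealing $e_1$ rules out $X_2$ when $e_1\ne e_2$. I would phrase this as a remark and not as part of the proof.

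The main obstacle is interpretive rather than technical. I need $X_1,X_2$ to lie in the generation support, or at least to have positive prior probability, so that $y$ is not a null event. I will state that standing assumption explicitly, taking $S$ to be the set of hidden objects that occur with positive probability. The remainder of the argument is just the collision-class reasoning already used for Theorem~\ref{thm:identifiability}.
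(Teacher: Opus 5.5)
Your proposal is correct and is essentially the paper's argument: extract $e_1,e_2\in\mathcal{B}$ with $F(X_1)+e_1=F(X_2)+e_2=y$ and observe that $y$ is consistent with two distinct hidden objects, with your positive-probability assumption and randomized-rule bound making explicit what the paper leaves implicit. One small citation fix: Proposition~\ref{prop:bayes_recovery} is stated for a deterministic observation $Y=\Phi(X)$, so for the noisy $Y=F(X)+E$ cite Proposition~\ref{prop:average_guessing_probability}, which covers an arbitrary joint distribution.
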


\begin{proof}
The intersection condition means that there exist \(e_1,e_2\in\mathcal{B}\) such that \(F(X_1)+e_1=F(X_2)+e_2=y\). Thus the same public observation \(y\) can arise from two different hidden objects with admissible observation noise values. Hence \(Y=y\) alone does not determine whether the hidden object was \(X_1\) or \(X_2\).
\end{proof}

\begin{remark}
This is why the present model prefers internal noise when discussing exact microscopic recovery. If noise is added only after observation, exact recovery of the original object requires redundancy, side information, an error correction mechanism, or a weaker recovery target.
\end{remark}

\begin{figure}[t]
\centering
\begin{tikzpicture}[node distance=1.05cm, every node/.style={font=\small}]
  \node[draw,rounded corners,fill=blue!5] (macro) {$\Delta$};
  \node[draw,rounded corners,fill=green!5,right=of macro] (micro) {$\eps$};
  \node[draw,rounded corners,fill=purple!5,right=of micro] (noise) {$\eta$};
  \node[draw,rounded corners,fill=red!5,right=of noise] (path) {$\gamma$};
  \node[draw,rounded corners,fill=yellow!10,right=of path] (obs) {$Y$};
  \draw[->,thick] (macro) -- (micro);
  \draw[->,thick] (micro) -- (noise);
  \draw[->,thick] (noise) -- (path);
  \draw[->,thick] (path) -- (obs);
  \node[align=center] at (3.8,-1.0) {Internal noise belongs to the hidden microscopic path.\\Observation noise corrupts the public vector.};
\end{tikzpicture}
\caption{Two roles of noise. The present framework mainly uses internal noise to increase microscopic ambiguity.}
\label{fig:noise_roles}
\end{figure}

\section{Observable design and structural leakage}

A large path space is not enough. Observable design can accidentally reveal the path. This section identifies simple leakage patterns.

\subsection{Telescoping leakage}

Some aggregate observables collapse to endpoint information. For example,
\[
\sum_{i=0}^{T-1}(x_{i+1}-x_i)=x_T-x_0.
\]
If an observable is only a telescoping sum, it carries little microscopic information.

\begin{proposition}[Telescoping loss]
Let \(Q(\gamma)=\sum_{i=0}^{T-1}(x_{i+1}-x_i)\) over an abelian group. Then
\[
Q(\gamma)=x_T-x_0.
\]
Consequently, \(Q\) is independent of all intermediate states.
\end{proposition}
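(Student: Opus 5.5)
The plan is to prove the identity by the standard telescoping argument in an abelian group, written as an induction on \(T\) so that it is fully rigorous without ``\(\cdots\)'' manipulations. For \(T=1\) the sum has the single term \(x_1-x_0\), which is already \(x_T-x_0\). For the inductive step, we split off the last term:
\[
\sum_{i=0}^{T}(x_{i+1}-x_i)=\left(\sum_{i=0}^{T-1}(x_{i+1}-x_i)\right)+(x_{T+1}-x_T).
\]
The inductive hypothesis turns the bracket into \(x_T-x_0\). Commutativity and associativity in the abelian group then give \(x_T-x_0+x_{T+1}-x_T=x_{T+1}-x_0\), because \(-x_T+x_T=0\).

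Alternatively, one can reindex directly. The sum splits as \(\sum_{i=1}^{T}x_i-\sum_{i=0}^{T-1}x_i\), the terms \(x_1,\ldots,x_{T-1}\) cancel pairwise, and what remains is \(x_T-x_0\). We will present whichever version is shorter and mention that the only group axioms used are commutativity, associativity, and additive inverses. These hold in \(\Z_q^n\), which covers the setting of the paper.

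The consequence then follows immediately. Since \(Q(\gamma)=x_T-x_0\) depends only on the endpoints, any two paths with the same \(x_0\) and \(x_T\) have the same value of \(Q\), whatever their intermediate states \(x_1,\ldots,x_{T-1}\) are. To connect this with the earlier fiber language, we can add one remark. Under fixed endpoints \(a,b\), every element of \(\Omega_{n,T,q}(a,b)\) lies in a single fiber of \(Q\), and that fiber has \(q^{n(T-1)}\) elements (as counted after the definition of the path space). So \(Q\) provides no information for distinguishing intermediate states.

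There is no real obstacle here. The only point needing care is to state precisely what ``independent of all intermediate states'' means, namely that \(Q\) is constant on sets of paths sharing endpoints. Informally one could read it as a probabilistic independence claim, which is not what the proposition asserts.
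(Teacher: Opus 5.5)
Your proof is correct and uses the same telescoping cancellation as the paper, which writes out $(x_1-x_0)+(x_2-x_1)+\cdots+(x_T-x_{T-1})$ and cancels the intermediate terms; your induction or reindexing just makes the ``$\cdots$'' step rigorous. One small precision in your optional fiber remark: $q^{n(T-1)}$ is the size of the fiber taken inside $\Omega_{n,T,q}(a,b)$, whereas the full fiber $Q^{-1}(b-a)$ in $\Omega_{n,T,q}$ has $q^{nT}$ elements.
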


\begin{proof}
The sum expands as
\[
(x_1-x_0)+(x_2-x_1)+\cdots+(x_T-x_{T-1})=x_T-x_0.
\]
All intermediate terms cancel.
\end{proof}

This is a warning. Aggregates should be checked for hidden cancellations.

\subsection{Linear leakage}

If all observables are linear in a vectorized hidden object, then the problem reduces to linear algebra plus any constraints on the object. This may be acceptable for a baseline, but it should not be mistaken for a nonlinear recovery problem.

\subsection{Overdetermined leakage}

If \(\Phi\) publishes enough independent constraints to become injective and efficiently invertible, recovery is easy. Injectivity is not automatically bad for a mathematical inverse problem, but efficient invertibility is fatal to any future hardness interpretation.

\subsection{Underdetermined collapse}

If \(\Phi\) publishes too little information, many hidden paths remain consistent with \(Y\). Then exact recovery of the specific hidden object is information theoretically impossible. This is distinct from computational difficulty.

\begin{remark}[The middle regime]
The target regime is not maximum secrecy and not full revelation. It is a structured middle regime where public observables are large enough to define a meaningful exact inverse problem, but not so direct that recovery becomes trivial.
\end{remark}

\section{A hierarchy of recovery notions}

Because exact recovery can be too strong in the presence of collisions, it is helpful to define a hierarchy.

\begin{definition}[Recovery levels]
Let \(X^{\star}\) be hidden and \(Y=\Phi(X^{\star})\).
\begin{enumerate}[label=(\roman*)]
\item Coarse recovery outputs a macroscopic descriptor \(M(X^{\star})\).
\item Approximate recovery outputs \(\widetilde X\) with \(d(\widetilde X,X^{\star})\le r\) for some \(r>0\).
\item Fiber recovery outputs some \(X\in\Phi^{-1}(Y)\).
\item Exact state recovery outputs \(\gamma(X^{\star})\).
\item Exact microscopic recovery outputs \(X^{\star}\).
\end{enumerate}
\end{definition}

The model in this paper focuses on exact microscopic recovery, while acknowledging that some applications may only need weaker notions.

\begin{theorem}[Implication chain]
Assume the hidden microscopic object determines the state path and the state path determines every coarse descriptor. Then exact microscopic recovery implies exact state recovery, and exact state recovery implies coarse recovery. The converses do not hold in general.
\end{theorem}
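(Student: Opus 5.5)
The plan is to prove the two implications by composing maps, and then to refute the converses with small explicit counterexamples.

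For the implications, I will make the hypotheses precise as functions. The state path is $\gamma=\gamma(X)$, obtained by iterating \eqref{eq:transition}, and each coarse descriptor factors as $M=\mu\circ\gamma$ for some map $\mu$. If an algorithm outputs $X^{\star}$ exactly, then applying the deterministic map $\gamma$ to its output yields $\gamma(X^{\star})$, which is exact state recovery. Likewise, from $\gamma(X^{\star})$ one applies $\mu$ to obtain $M(X^{\star})$. The success events are therefore nested: the event that the microscopic output is correct is contained in the event that the state output is correct, which is contained in the event that the coarse output is correct. Any success probability bound transfers in the same direction.

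For the failure of the converses, it suffices to exhibit one parameter setting where each reverse implication breaks. For \emph{state $\not\Rightarrow$ microscopic}, I will invoke Proposition~\ref{prop:transition_decomposition_multiplicity}. Choose alphabets with some realized effective increment $u$ satisfying $m(u)\ge 2$; for instance, take $\mathcal{E}=\{0,e\}$ and $\mathcal{D}\ni d,\,d+e$, so that $(d+e,0,\eta)$ and $(d,e,\eta)$ give the same $u$. Then two distinct microscopic objects share the same state path. Take $\Phi$ to be the map that publishes the full state path. Exact state recovery is trivial, yet by Theorem~\ref{thm:identifiability} exact microscopic recovery is impossible, and by Theorem~\ref{thm:fiber_success} the best success probability is at most $1/2$ on that pair.

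For \emph{coarse $\not\Rightarrow$ state}, take $M$ to be the endpoint difference $x_T-x_0$ (via the telescoping proposition), or simply a constant descriptor. Let $\Phi=M$. Take two distinct state paths with the same endpoints; these exist whenever $T\ge 2$ by Theorem~\ref{thm:character_count}, for example with $\mathcal{D}=\F_q^n$, which gives $q^{n(T-1)}\ge 2$ such paths. Coarse recovery is then trivial, while Theorem~\ref{thm:dominant_region}, applied to this fiber as the region $R$, shows that exact state recovery is impossible.

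The main obstacle is interpretive rather than technical: "does not hold in general" has to be read as the existence of parameter sets and observable maps where one recovery succeeds with certainty and the stronger one cannot. The only care needed is to ensure that the chosen alphabets genuinely produce distinct objects in the generation support $S_P$. I will verify this by exhibiting the two colliding tuples explicitly.
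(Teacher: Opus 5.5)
Your proposal is correct and takes the same approach as the paper. The implications come from applying the deterministic maps $X\mapsto\gamma(X)$ and $\gamma\mapsto M(\gamma)$ to the recovered output, and the converses fail exactly when two microscopic objects share a state path or two distinct state paths share a coarse descriptor. Your explicit instances, namely $(d+e,0,\eta)$ versus $(d,e,\eta)$ with $e\ne 0$, and distinct paths with equal endpoints for $T\ge 2$, simply make concrete the collision conditions that the paper states abstractly.
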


\begin{proof}
If \(X^{\star}\) is known, then \(
\gamma(X^{\star})\) can be computed, so exact microscopic recovery implies exact state recovery. If the state path determines a coarse descriptor, then exact state recovery implies coarse recovery. The converses fail whenever two microscopic objects generate the same state path, or two distinct state paths share the same coarse descriptor.
\end{proof}

\begin{figure}[t]
\centering
\begin{tikzpicture}[node distance=0.9cm, every node/.style={font=\small}]
  \node[draw,rounded corners,fill=red!5,minimum width=3.2cm] (micro) {exact microscopic recovery};
  \node[draw,rounded corners,fill=orange!10,minimum width=3.2cm,below=of micro] (state) {exact state recovery};
  \node[draw,rounded corners,fill=yellow!15,minimum width=3.2cm,below=of state] (fiber) {fiber recovery};
  \node[draw,rounded corners,fill=green!8,minimum width=3.2cm,below=of fiber] (approx) {approximate recovery};
  \node[draw,rounded corners,fill=blue!6,minimum width=3.2cm,below=of approx] (coarse) {coarse descriptor};
  \draw[->,thick] (micro) -- (state);
  \draw[->,thick] (state) -- (coarse);
  \draw[dashed,->,thick] (fiber) -- (coarse);
  \draw[dashed,->,thick] (approx) -- (coarse);
  \node[align=center] at (5.0,-2.0) {The arrows depend on the chosen model.\\Exact microscopic recovery is the strongest notion here.};
\end{tikzpicture}
\caption{A hierarchy of recovery notions. Approximate reconstruction and exact microscopic recovery should not be conflated.}
\label{fig:hierarchy}
\end{figure}

\section{Toy examples}

Toy examples clarify the difference between global summaries and exact microscopic paths.

\subsection{A one dimensional example}

Let \(q=101\), \(n=1\), \(T=4\), \(\mathcal{D}=\{-1,1\}\), and \(\mathcal{E}=\{-1,0,1\}\). Ignore noise. A microscopic history has \((2\cdot 3)^4=1296\) possible transition descriptions. If the only observable is the endpoint \(x_4\), then many histories collide.

Even if an approximate method correctly infers that the path tends upward, it does not determine the exact sequence of signs and micro perturbations. The path is short enough to enumerate, but it illustrates the conceptual gap.

\subsection{A high dimensional example}

Let \(q=257\), \(n=128\), \(T=256\), \(|\mathcal{D}|=256\), \(|\mathcal{E}|=256\), and suppose internal noise has effective support \(s=16\) per step. The microscopic history count is
\[
(256\cdot256\cdot16)^{256}=2^{5120}.
\]
If the public observable vector has \(8192\) bits, the observable space is at most \(2^{8192}\), which is large enough that the pigeonhole lower bound alone does not force collisions. But this does not imply recovery is easy or hard. It means that information content is not obviously insufficient. The actual difficulty depends on \(\Phi\).

\subsection{A bad observable and a better observable}

A bad observable is
\[
Y=x_T-x_0.
\]
It ignores the internal path. A better experimental observable could combine multiple local nonlinear summaries:
\[
Y_j=\sum_{i=0}^{T-1}\left(\langle a_{j,i},x_i\rangle^2+c_{j,i}\langle b_{j,i},x_{i+1}\rangle+\langle d_{j,i},\eps_i\rangle\right)\pmod q.
\]
This is not proposed as a final hard problem. It is a useful research object because it avoids immediate telescoping and is not purely a single endpoint measurement.

\section{Experimental mathematical methodology}

Although this paper is theoretical, the framework naturally suggests experiments. A research prototype can generate hidden paths, compute observables, and test recovery attempts. The following methodology is mathematical rather than cryptographic.

\begin{enumerate}[label=(\roman*)]
\item Generate path objects with fixed \((q,n,T,b,r,\sigma)\).
\item Compute multiple observable families \(Y=\Phi(X)\).
\item Measure observable collision rates on sampled subsets.
\item Attempt simple recovery methods such as linear solving, local search, dynamic programming, integer programming, and gradient heuristics when applicable.
\item Measure whether methods recover coarse structure, approximate paths, fibers, or exact microscopic objects.
\item Increase dimension, length, perturbation entropy, and observable size to study scaling.
\end{enumerate}

\begin{lstlisting}[caption={A minimal experimental loop in pseudocode.}]
for params in parameter_grid:
    X = sample_hidden_path(params)
    Y = compute_observables(X, params)
    for solver in solvers:
        X_guess = solver(Y, params)
        report coarse_score(X_guess, X)
        report distance(X_guess, X)
        report exact_success(X_guess == X)
        report fiber_success(Phi(X_guess) == Y)
\end{lstlisting}

The important point is to report different success notions separately. A solver that finds a visually similar path has not solved exact microscopic recovery.

\section{Potential asymptotic questions}

The framework suggests several asymptotic questions. Let a family of parameter sets \(P_\lambda\) be indexed by a size parameter \(\lambda\). Let \(\mathcal{X}_\lambda\) be the hidden object space and \(\Phi_\lambda\) the observable map.

\begin{problem}[Asymptotic exact recovery]
For which families \((P_\lambda)
\) does there exist an algorithm that recovers \(X^{\star}\) from \(Y=\Phi_\lambda(X^{\star})\) in time polynomial in \(\lambda\), with high probability over the generation of \(X^{\star}\)?
\end{problem}

\begin{problem}[Average fiber growth]
How does the expected size of \([X^{\star}]_{\Phi_\lambda}\) behave as \(\lambda\to\infty\)? Is it typically \(1\), polynomial, exponential, or concentrated around another scale?
\end{problem}

\begin{problem}[Approximation gap]
Can one construct families where coarse descriptors are recoverable in polynomial time, but exact microscopic recovery appears to require exponential time?
\end{problem}

The last problem captures the informal goal of separating saddlepoint style approximation from exact recovery.

\section{Cryptographic interpretation}

The previous sections define a mathematical inverse problem. This section reformulates it as a candidate cryptographic assumption. The purpose is not to present a complete public key encryption scheme. The purpose is to specify a search relation, a public instance distribution, and basic security games that can be attacked in later work.

A cryptographic interpretation requires more discipline than a mathematical metaphor. It is not enough to say that there are many paths. One must specify how instances are generated, what the public information is, what the secret witness is, what it means to break the system, and which algorithms are allowed to attack it.

\subsection{Public parameters, public keys, and witnesses}

Let \(\lambda\) be a security parameter. A parameter generator produces
\[
\pp_\lambda=(q,n,T,\mathcal{D},\mathcal{E},D_\sigma,\Phi,\tau,\mathcal{A}),
\]
where \(q\) is the modulus, \(n\) is the dimension, \(T\) is the path length, \(\mathcal{D}\) is the macro increment alphabet, \(\mathcal{E}\) is the micro perturbation alphabet, \(D_\sigma\) is the noise distribution, \(\Phi\) is the observable map, \(\tau\) is a quantization parameter if needed, and \(\mathcal{A}\) contains auxiliary public sampling rules.

A secret microscopic object is
\[
X^{\star}=(\gamma^{\star},\Delta^{\star},\eps^{\star},\eta^{\star}),
\]
where
\[
\gamma^{\star}=(x_0,\ldots,x_T),\qquad x_i\in\Z_q^n.
\]
The public object is
\[
\pk=Y=\Phi(X^{\star}).
\]
Thus the public key is not a short digest. It is a structured observable vector. In a later implementation it may be serialized, compressed losslessly, or accompanied by a checksum, but the mathematical object is the observable vector itself.

\begin{definition}[Witness relation]
For fixed public parameters \(\pp\), define the relation
\[
\mathcal{R}_{\pp}=\{(Y,X):\Phi(X)=Y\}.
\]
An element \(X\) is a witness for \(Y\) if \((Y,X)\in\mathcal{R}_{\pp}\).
\end{definition}

\begin{definition}[Exact witness]
If the instance generator samples \(X^{\star}\) and publishes \(Y=\Phi(X^{\star})\), then \(X^{\star}\) is the planted exact witness. Exact recovery asks for \(X^{\star}\), not merely for an arbitrary member of \(\Phi^{-1}(Y)\), unless the problem is explicitly quotient by an equivalence relation.
\end{definition}

This distinction matters. If many witnesses are accepted as equivalent, then the cryptographic problem is not planted recovery but relation finding. Relation finding may still be useful, but it must be named differently.

\begin{proposition}[Public relations do not select planted witnesses]
\label{prop:public_relation_no_selector}
Fix public parameters \(\pp\) and a public value \(Y\). Suppose the fiber
\[
F_Y=\{X:\Phi(X)=Y\}
\]
has size greater than one. Any test whose input is only \((\pp,Y,X)\) and whose acceptance condition is the public relation \(\Phi(X)=Y\) accepts every element of \(F_Y\). Therefore planted exact recovery from a non singleton fiber requires additional structure beyond public relation verification, such as injectivity on the generation support, a canonical representative rule, a private selector, or a key derivation rule that does not make arbitrary witnesses useful.
\end{proposition}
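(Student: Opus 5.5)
The argument splits into two parts. The first is a direct unfolding of what the test sees. The second converts acceptance of the whole fiber into an impossibility statement about planted recovery, reusing the fiber analysis of Theorem~\ref{thm:identifiability} and Theorem~\ref{thm:fiber_success}.

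First, I would formalize a ``public relation test'' as any procedure $V(\pp,Y,X)$ whose acceptance predicate is exactly $\Phi(X)=Y$, with $\Phi$ determined by $\pp$. For every $X\in F_Y$ we have $\Phi(X)=Y$ by the definition of $F_Y$, so $V$ accepts $X$. Conversely, $V$ rejects every $X\notin F_Y$. The accepted set is therefore exactly $F_Y$, independent of which element was planted. Since $|F_Y|>1$, there are distinct $X_1,X_2\in F_Y$ on which $V$ behaves identically. The key observation is that $V$'s view of $(\pp,Y,X_1)$ versus $(\pp,Y,X_2)$ differs only in the candidate argument, and the predicate does not separate the two. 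Hence no verification based solely on the relation can certify that a candidate equals the planted $X^{\star}$.

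Second, I would derive the ``therefore'' clause by contraposition. Suppose planted exact recovery were possible using only $(\pp,Y)$ and relation verification. Then the output would have to be a function of $Y$ alone, possibly randomized. Both $X_1$ and $X_2$ could be the planted witness with the same public value $Y$, provided both lie in the generation support $S_P$. By the converse direction of Theorem~\ref{thm:identifiability}, no procedure can be correct for both. If a prior is present, Proposition~\ref{prop:bayes_recovery} bounds the success probability by the maximal posterior probability in the fiber, which is strictly less than one whenever two fiber elements have positive mass.

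Success therefore requires breaking this symmetry by some extra ingredient:
\begin{itemize}
\item injectivity on $S_P$, which makes the generated part of the fiber a singleton;
\item a canonical representative rule, which turns the target into a quotient recovery problem;
\item a private selector, which is information beyond $(\pp,Y)$;
\item a key derivation rule invariant on $F_Y$, so that arbitrary witnesses suffice and planted recovery is not needed.
\end{itemize}

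The main obstacle is not a calculation but a precise modeling point. The fiber $F_Y$ is taken over all of $\mathcal{X}$, whereas the impossibility argument needs two fiber elements inside the generation support. I would address this explicitly by splitting into two cases. If $|F_Y\cap S_P|\ge 2$, the argument above applies. If $F_Y\cap S_P$ is a singleton while $F_Y$ is larger, then the relation test still accepts non-generated witnesses. In that case knowledge of $S_P$ acts as the additional structure, precisely the ``injectivity on the generation support'' option. The list of required structures should then be read as the exhaustive set of symmetry-breaking devices, and I would phrase it as sufficient examples rather than a proven classification.
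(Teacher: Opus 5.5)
Your first part is exactly the paper's proof. The paper observes that every $X\in F_Y$ satisfies $\Phi(X)=Y$ by definition, so a relation-only verifier accepts the whole fiber and cannot tell the planted element from another valid one. It then simply asserts that the listed alternatives ``change the mathematical target or add information'' not contained in the bare relation test.

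Your second part goes beyond this. It actually derives the ``therefore'' clause from the converse of Theorem~\ref{thm:identifiability} and from Proposition~\ref{prop:bayes_recovery}. Your case split on $|F_Y\cap S_P|$ also repairs something the paper glosses over: $F_Y$ is a fiber over all of $\mathcal{X}$, while impossibility of planted recovery needs two fiber elements inside the generation support. When $F_Y\cap S_P=\{X^{\star}\}$, recovery is possible in principle, and what makes it so is precisely injectivity on the support at this $Y$. That refinement is correct and worth keeping.

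One correction: your fourth bullet reverses the statement. The proposition lists ``a key derivation rule that does \emph{not} make arbitrary witnesses useful.'' You instead describe a rule invariant on $F_Y$, i.e.\ one that factors through $\Phi$ as in Proposition~\ref{prop:kdf_factors}. Under such a rule every element of the fiber derives the same secret, so arbitrary witnesses are exactly as useful as the planted one. That is really your canonical/quotient option again, where the target becomes fiber or class recovery.

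The item the paper intends is the third strategy of its witness-collision discussion. There the derived secret depends on planted-specific data not determined by $Y$, so an adversary holding some other $X\in F_Y$ gains nothing. This makes planted recovery, rather than relation finding, the security-relevant target.

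Finally, the statement says ``such as,'' so the list is illustrative. Drop the phrase calling it ``the exhaustive set,'' which contradicts your own remark that it is not a proven classification.
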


\begin{proof}
For every \(X\in F_Y\), the equality \(\Phi(X)=Y\) holds by definition. Hence a verifier that checks only the public relation accepts all elements of the fiber. If \(|F_Y|>1\), this test cannot distinguish the planted element from another valid element. The listed alternatives are precisely ways to change the mathematical target or add information that is not contained in the bare relation test.
\end{proof}

\subsection{The Exact Noisy Hidden Path one way assumption}

We now define a cryptographic search problem associated with the mathematical framework.

\begin{problem}[Exact Noisy Hidden Path Search, cryptographic form]
Let \(\mathcal{I}_\lambda\) be an instance distribution generated as follows:
\[
\pp_\lambda\leftarrow \Setup(1^\lambda),
\]
\[
X^{\star}\leftarrow \Sample(\pp_\lambda),
\]
\[
Y=\Phi_{\pp_\lambda}(X^{\star}).
\]
Given \((\pp_\lambda,Y)\), output the planted witness \(X^{\star}\).
\end{problem}

\begin{definition}[One way recovery game]
Let \(\mathcal{A}\) be an adversary. The one way recovery experiment is:
\[
\pp_\lambda\leftarrow \Setup(1^\lambda),\qquad X^{\star}\leftarrow \Sample(\pp_\lambda),\qquad Y=\Phi_{\pp_\lambda}(X^{\star}),
\]
\[
\widehat{X}\leftarrow \mathcal{A}(\pp_\lambda,Y).
\]
The adversary wins if
\[
\widehat{X}=X^{\star}.
\]
The advantage is
\[
\Adv^{\mathrm{ow}}_{\mathcal{A}}(\lambda)=
\Prob[\mathcal{A}(\pp_\lambda,Y)=X^{\star}].
\]
The family is one way for planted exact recovery if for every probabilistic polynomial time adversary \(\mathcal{A}\),
\[
\Adv^{\mathrm{ow}}_{\mathcal{A}}(\lambda)\le \negl(\lambda).
\]
\end{definition}

\begin{remark}[Why this is stronger than finding a collision]
If \(\Phi^{-1}(Y)\) is large, an adversary may find some \(X\ne X^{\star}\) with \(\Phi(X)=Y\). This does not win the planted recovery game. However, for many cryptographic applications, any valid witness may be enough to break the construction. Therefore future work must decide whether the desired assumption is planted exact recovery, arbitrary witness finding, or recovery modulo equivalence.
\end{remark}

\begin{definition}[Arbitrary witness game]
The arbitrary witness experiment is identical, except that the adversary wins if
\[
\Phi_{\pp_\lambda}(\widehat{X})=Y.
\]
Its advantage is denoted by \(\Adv^{\mathrm{rel}}_{\mathcal{A}}(\lambda)\).
\end{definition}

The arbitrary witness game is usually the more conservative cryptographic target. If a primitive is broken by any witness in the fiber, then the hardness of planted recovery is insufficient. A secure design should either make fibers essentially singleton on the generated support, define a canonical quotient whose representatives all derive the same intended value, or make alternative witnesses useless for deriving the secret quantity. Proposition~\ref{prop:public_relation_no_selector} is the formal reason this choice cannot be postponed.

\subsection{Identification, equivalence, and canonical witnesses}

The previous point leads to a necessary formal choice.

\begin{definition}[Recovery equivalence]
Let \(\sim\) be an equivalence relation on a generation support \(S\). Exact recovery modulo \(\sim\) asks for any \(\widehat{X}\) satisfying \(\widehat{X}\sim X^{\star}\). There are two distinct compatibility notions:
\begin{enumerate}[label=(\alph*)]
\item \(\sim\) is observable compatible if \(X\sim X'\) implies \(\Phi(X)=\Phi(X')\). Then \(\Phi\) is constant on equivalence classes.
\item \(\Phi\) is class identifying if \(\Phi(X)=\Phi(X')\) implies \(X\sim X'\). Then each observable fiber lies inside one equivalence class.
\end{enumerate}
Modulo recovery from \(Y=\Phi(X^{\star})\) requires the second condition. If both conditions hold, the nonempty observable fibers are exactly the equivalence classes.
\end{definition}

\begin{definition}[Canonical recovery]
A canonical recovery rule is a deterministic function
\[
\mathsf{Can}:\mathcal{X}\to\mathcal{X}
\]
such that \(\mathsf{Can}(X)=\mathsf{Can}(X')\) whenever \(X\sim X'\). The canonical recovery problem asks for \(\mathsf{Can}(X^{\star})\).
\end{definition}

This is important for cryptography because a key derivation function cannot depend on arbitrary irrelevant differences if the public observable intentionally creates equivalent witnesses. One may derive a secret from a canonical representative, from the planted witness, or from hidden perturbation data. Each choice produces a different security problem.

\begin{proposition}[Quotient identifiability]
\label{prop:quotient_identifiability}
Let \(S\subseteq\mathcal{X}\) be the generation support and let \(\sim\) be an equivalence relation on \(S\). Recovery modulo \(\sim\) from \(Y=\Phi(X)\) is information theoretically possible for every generated object if and only if
\[
\Phi(X)=\Phi(X')\quad\Longrightarrow\quad X\sim X'
\]
for all \(X,X'\in S\). Equivalently, the assignment
\[
\psi:\Phi(S)\to S/{\sim},
\qquad
\psi(y)=[X]
\]
where \(X\) is any element of \(S\) with \(\Phi(X)=y\), is well defined. If, in addition, \(\sim\) is observable compatible, then the nonempty observable fibers are exactly the equivalence classes and \(\Phi\) induces a bijection between \(S/{\sim}\) and \(\Phi(S)\).
\end{proposition}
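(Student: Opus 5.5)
The plan is to mirror the proof of Theorem~\ref{thm:identifiability}, with the singleton target replaced by the equivalence class. I will prove the ``if and only if'' in two directions, then show the well-definedness reformulation, and finally add the bijection under observable compatibility.

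\emph{Sufficiency.} Assume $\Phi(X)=\Phi(X')$ implies $X\sim X'$ on $S$. Given $y\in\Phi(S)$, an unbounded procedure enumerates $S$ and outputs any $\widehat X\in S$ with $\Phi(\widehat X)=y$. If the hidden object is $X^{\star}$ with $\Phi(X^{\star})=y$, the hypothesis gives $\widehat X\sim X^{\star}$, so recovery modulo $\sim$ succeeds for every generated object.

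\emph{Necessity.} Suppose there exist $X_1,X_2\in S$ with $\Phi(X_1)=\Phi(X_2)=y$ but $X_1\not\sim X_2$. Any rule sees only $y$ (and $\pp$), so it outputs the same candidate $\widehat X$, or the same output distribution, in both cases. Because $\sim$ is an equivalence relation, $\widehat X\sim X_1$ and $\widehat X\sim X_2$ together would force $X_1\sim X_2$ by symmetry and transitivity. Hence $\widehat X$ fails on at least one of $X_1,X_2$. For a randomized rule, the events $\{\widehat X\in[X_1]\}$ and $\{\widehat X\in[X_2]\}$ are disjoint, so the two success probabilities sum to at most $1$, and the rule cannot succeed with certainty on both.

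\emph{Reformulation and bijection.} The map $\psi(y)=[X]$ is well defined exactly when every two preimages of $y$ in $S$ lie in the same class, which is a restatement of the implication. If $\sim$ is also observable compatible, I will show $\Phi^{-1}(y)\cap S=[X]$ for any preimage $X$. Class identification gives $\subseteq$, and compatibility gives $\supseteq$. Then $\bar\Phi:S/{\sim}\to\Phi(S)$, $[X]\mapsto\Phi(X)$, is well defined by compatibility, injective by class identification, and surjective by definition of $\Phi(S)$, with inverse $\psi$.

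The only point needing care is the necessity direction. There I must state precisely what ``information theoretically possible'' means, namely success with certainty for every generated object, and invoke transitivity of $\sim$ rather than treating $\sim$ as equality.
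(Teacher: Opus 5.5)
Your proof is correct and takes the same approach as the paper: exhaustive search over $S$ for sufficiency, the argument that two inequivalent objects with the same observation cannot both be recovered for necessity, and two-sided containment of fibers and classes under observable compatibility. Your use of transitivity, the disjointness bound for randomized rules, and the explicit inverse $\psi$ of $\bar\Phi$ spell out steps the paper leaves implicit.
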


\begin{proof}
If the displayed implication holds, then every nonempty fiber of \(\Phi\) contained in \(S\) lies inside a single equivalence class. Therefore \(\psi(y)\) is independent of the representative \(X\) chosen from the fiber over \(y\), so the public value \(Y\) determines the equivalence class of \(X\). An unbounded procedure can recover that class by exhaustive search over \(S\). Conversely, if there are \(X,X'\in S\) with \(\Phi(X)=\Phi(X')\) but \(X\not\sim X'\), then the same public observation is consistent with two different equivalence classes. No procedure using only \(Y\) and public parameters can output the correct class for both hidden objects. Finally, if observable compatibility also holds, each equivalence class is contained in one fiber, while the implication above says each fiber is contained in one equivalence class. Hence the nonempty fibers and equivalence classes coincide.
\end{proof}

\subsection{Entropy and list size requirements for cryptographic hardness}

Let \(X^{\star}\) be the secret microscopic object and let \(Y=\Phi(X^{\star})\). Entropy is a necessary diagnostic, but it is not a security proof. The most useful quantities for the present framework are the conditional support size, the size of high probability conditional lists, and the guessing probability after observing \(Y\).

\begin{definition}[Conditional support and high mass lists]
For an observed value \(y\), define
\[
S_y=\{x:\Prob[X^{\star}=x\mid Y=y]>0\}.
\]
For \(0\le \delta<1\), define the conditional list size
\[
L_\delta(y)=\min\left\{|L|:L\subseteq S_y,\ \Prob[X^{\star}\in L\mid Y=y]\ge 1-\delta\right\}.
\]
The worst case list size over the support of \(Y\) is
\[
L_\delta^{\max}=\max_{y:\Prob[Y=y]>0}L_\delta(y).
\]
\end{definition}

\begin{proposition}[What a small conditional list does and does not give]
\label{prop:small_list_caution}
Suppose that for every observed \(y\) an adversary can compute a list \(L_y\subseteq S_y\) with \(|L_y|\le 2^s\) and
\[
\Prob[X^{\star}\in L_y\mid Y=y]\ge 1-\delta.
\]
Then the following statements hold.
\begin{enumerate}[label=(\alph*)]
\item If the adversary is given an additional selector \(\mathsf{Sel}_y\) that accepts exactly the planted witness inside \(L_y\), exhaustive testing over \(L_y\) recovers \(X^{\star}\) with probability at least \(1-\delta\) using at most \(2^s\) selector calls.
\item Without such a selector, the list alone does not imply planted exact recovery. Conditional on \(X^{\star}\in L_y\), the optimal success probability using only the posterior distribution on \(L_y\) is
\[
\max_{x\in L_y}\Prob[X^{\star}=x\mid Y=y,\ X^{\star}\in L_y].
\]
\item If the goal is the arbitrary witness game and \(L_y\subseteq\Phi^{-1}(y)\), then any element of \(L_y\) is a valid witness, but this solves a different problem from planted exact recovery.
\end{enumerate}
\end{proposition}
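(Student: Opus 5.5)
The plan is to treat the three parts separately. Each reduces to conditioning on the event \(Y=y\) and then on \(X^{\star}\in L_y\), followed by the Bayes-optimality argument of Proposition~\ref{prop:bayes_recovery}.

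For part (a), I would fix \(y\) and let the adversary enumerate \(L_y\), querying \(\mathsf{Sel}_y\) on each element and outputting the first accepted one. On the event \(X^{\star}\in L_y\), the selector accepts exactly \(X^{\star}\), so the output is correct. The number of calls is at most \(|L_y|\le 2^s\). The conditional success probability given \(Y=y\) is therefore at least \(\Prob[X^{\star}\in L_y\mid Y=y]\ge 1-\delta\). Averaging over \(y\) with weights \(\Prob[Y=y]\) gives overall success at least \(1-\delta\).

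For part (b), I would condition on the joint event \(Y=y\) and \(X^{\star}\in L_y\), which has positive probability whenever \(\delta<1\). I would then repeat the proof of Proposition~\ref{prop:bayes_recovery} with this event in place of \(Y=y\). A rule that sees only \(y\), together with the list and posterior computable from it, must output some \(a\). Its conditional success probability is \(\Prob[X^{\star}=a\mid Y=y,\,X^{\star}\in L_y]\). This is maximized by a posterior mode, and outputs outside \(L_y\) have conditional probability zero, so the maximum is over \(x\in L_y\). This gives the displayed expression. To justify the claim that "the list alone does not imply recovery", I would note that this maximum is strictly less than \(1\) whenever the conditional posterior is supported on at least two elements of \(L_y\). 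This parallels Theorem~\ref{thm:dominant_region} and the residual-uncertainty corollary. In the uniform case it equals \(1/|L_y\cap S_y|\), which can be as small as \(2^{-s}\).

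For part (c), I would argue directly from the arbitrary witness game. If \(L_y\subseteq\Phi^{-1}(y)\), every \(x\in L_y\) satisfies \(\Phi(x)=y\), so outputting any element wins. By Proposition~\ref{prop:public_relation_no_selector}, relation verification cannot distinguish these elements from the planted one. Hence this success does not transfer to the planted game unless the fiber is a singleton.

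The main obstacle is part (b): stating precisely what "using only the posterior on \(L_y\)" means, so that the maximum is a genuine upper bound. In particular, randomized rules must be handled. I would settle this by noting that a randomized output distribution succeeds with probability equal to a convex combination of the point probabilities, which is at most their maximum.
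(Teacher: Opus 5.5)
Your proposal is correct and follows essentially the same argument as the paper. Both use exhaustive selector testing over \(L_y\) for (a), condition on \(Y=y\) and \(X^{\star}\in L_y\) and take a posterior maximizer for (b), and observe for (c) that the arbitrary witness game accepts any element of \(L_y\subseteq\Phi^{-1}(y)\). Your explicit averaging over \(y\) in (a) and your convex-combination treatment of randomized rules in (b) are small refinements that the paper leaves implicit.
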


\begin{proof}
For (a), enumerate the at most \(2^s\) elements of \(L_y\) and apply the selector. When the planted object is in the list, the selector identifies it uniquely by assumption. The stated success probability follows from the conditional list guarantee. For (b), once the only available information is \(Y=y\) and membership in \(L_y\), any recovery rule must choose some element of \(L_y\). The Bayes optimal choice is a posterior maximizer, giving exactly the displayed probability. Public verification of \(\Phi(x)=y\) cannot by itself distinguish among multiple list elements in the same fiber. For (c), the arbitrary witness game accepts any \(x\) satisfying \(\Phi(x)=y\), so any listed element is sufficient, but the planted object need not be recovered.
\end{proof}

\begin{definition}[Conditional guessing probability]
For an observed value \(y\), define
\[
p_{\max}(y)=\max_x\Prob[X^{\star}=x\mid Y=y].
\]
The worst case conditional guessing probability is
\[
p_{\max}^{\mathrm{wc}}=\max_{y:\Prob[Y=y]>0}p_{\max}(y).
\]
The corresponding worst case conditional min entropy is
\[
H_{\infty}^{\mathrm{wc}}(X^{\star}\mid Y)=-\log_2 p_{\max}^{\mathrm{wc}}.
\]
\end{definition}

\begin{proposition}[Average guessing probability]
\label{prop:average_guessing_probability}
Let \(X\) and \(Y\) be discrete random variables on finite sets. The maximum success probability of any recovery rule that observes only \(Y\) is
\[
P_{\mathrm{guess}}(X\mid Y)=\sum_y \Prob[Y=y]\max_x\Prob[X=x\mid Y=y].
\]
Equivalently,
\[
P_{\mathrm{guess}}(X\mid Y)=\mathbb{E}_Y\left[\max_x\Prob[X=x\mid Y]\right].
\]
\end{proposition}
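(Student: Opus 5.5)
This is the same computation as Proposition~\ref{prop:bayes_recovery}, now without assuming $X$ is a deterministic function of the observation. I will first fix what a recovery rule is. It is a function $a$ from the range of $Y$ to the range of $X$, possibly randomized through independent internal coins $R$. In the randomized case, conditioning on $R=\rho$ gives a deterministic rule $a_\rho$. The success probability is then the average over $\rho$ of the deterministic success probabilities, so it is at most the best deterministic value. It is therefore enough to treat deterministic rules.

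For a deterministic rule $a$, I will decompose by the value of $Y$ using the law of total probability:
\[
\Prob[a(Y)=X]=\sum_{y:\Prob[Y=y]>0}\Prob[Y=y]\,\Prob[X=a(y)\mid Y=y].
\]
Each conditional term is at most $\max_x\Prob[X=x\mid Y=y]$. That maximum exists because $X$ takes values in a finite set. This gives the upper bound $\sum_y\Prob[Y=y]\max_x\Prob[X=x\mid Y=y]$.

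For the matching lower bound, I define the MAP rule by letting $a^{\star}(y)$ be any posterior maximizer for each $y$ with $\Prob[Y=y]>0$, with ties broken arbitrarily. For values of $y$ with zero probability, $a^\star(y)$ can be set arbitrarily, since those values contribute nothing. This rule attains the bound term by term, so the supremum over rules is achieved and equals $P_{\mathrm{guess}}(X\mid Y)$. The second displayed form is just the definition of expectation applied to the function $y\mapsto\max_x\Prob[X=x\mid Y=y]$ of $Y$.

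There is no real obstacle here. The only points needing care are the reduction from randomized to deterministic rules and the handling of zero-probability values of $y$. Both are handled as above.
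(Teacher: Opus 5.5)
Your proof is correct and follows the paper's argument: take a posterior maximizer for each $y$, note that this attains $\max_x\Prob[X=x\mid Y=y]$, and average over $Y$; you also handle randomized rules and zero-probability values of $y$, which the paper leaves implicit. One small slip in your opening remark, which does not affect the argument: in Proposition~\ref{prop:bayes_recovery} the observation is a deterministic function of the hidden object ($Y=\Phi(X)$), not the other way around.
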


\begin{proof}
For each observed value \(y\), the best rule outputs a posterior maximizer. Its conditional success probability is \(\max_x\Prob[X=x\mid Y=y]\). Averaging over \(y\) gives the displayed formula.
\end{proof}

\begin{proposition}[Min entropy is only a one guess bound]
If \(H_{\infty}^{\mathrm{wc}}(X^{\star}\mid Y)<s\), then for some public value \(y\) there exists a single candidate with conditional probability greater than \(2^{-s}\). This does not imply that the entire conditional support has size below \(2^s\).
\end{proposition}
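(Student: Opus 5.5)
The plan is to unwind the definition of worst case conditional min entropy and then give an explicit example for the negative part. For the first claim, I would start from \(H_{\infty}^{\mathrm{wc}}(X^{\star}\mid Y)=-\log_2 p_{\max}^{\mathrm{wc}}<s\). Since \(-\log_2\) is strictly decreasing, this is equivalent to \(p_{\max}^{\mathrm{wc}}>2^{-s}\). The maximum defining \(p_{\max}^{\mathrm{wc}}\) ranges over the finitely many \(y\) with \(\Prob[Y=y]>0\), so it is attained at some public value \(y\). The inner maximum over \(x\) is also attained, because the conditional distribution has finite support. This produces a candidate \(x\) with \(\Prob[X^{\star}=x\mid Y=y]>2^{-s}\), which is the first statement. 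I would also note that this candidate is exactly what a single-guess adversary would output, so the bound says nothing beyond one guess.

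For the second claim, I would show that small min entropy is compatible with a conditional support \(S_y\) of size at least \(2^s\). The construction is a skewed conditional distribution on a large fiber. Take \(s\ge 1\), a support \(S\) of size \(K\ge 2^s+1\), and a distinguished element \(x_0\). Give \(x_0\) probability \(p=2^{-s+1}\) and spread the remaining mass \(1-p\) evenly over the other \(K-1\) elements. Let \(\Phi\) be constant on \(S\), so that \(Y\) is a single public value and \(S_y=S\). I need to check three things: \(p\le 1\); every other element has conditional probability \((1-p)/(K-1)<p\), so \(p_{\max}=p>2^{-s}\) and \(H_{\infty}^{\mathrm{wc}}=s-1<s\); and \(|S_y|=K\ge 2^s\). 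Since \(K\) can be taken arbitrarily large, the support can even be much larger than \(2^s\).

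The main obstacle is interpretive rather than technical. The phrase ``does not imply'' has to be read as a non-implication, which is established by a single counterexample. The counterexample must fit the planted model, meaning it must arise from an actual joint distribution of \(X^{\star}\) and \(Y=\Phi(X^{\star})\). Choosing \(\Phi\) constant and a nonuniform sampling distribution on \(S\) achieves this. To stay closer to the paper's setting, I would add a remark relating this to Proposition~\ref{prop:small_list_caution}: a single heavy point controls the guessing probability, while the list size \(L_\delta(y)\) for small \(\delta\) can still be close to \(K\).
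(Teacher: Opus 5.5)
Your first half matches the paper's proof. Both unwind \(H_{\infty}^{\mathrm{wc}}(X^{\star}\mid Y)=-\log_2 p_{\max}^{\mathrm{wc}}<s\) into \(p_{\max}^{\mathrm{wc}}>2^{-s}\) and read off a heavy posterior point; your note that the maxima are attained on finite sets is a harmless extra.

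The second half is where you take a different route. The paper builds nothing: it only remarks that the hypothesis concerns the single best guess and ``says nothing by itself about the number of other candidates of smaller probability.'' You instead prove the non-implication with an explicit planted instance. It uses a constant \(\Phi\), one point of mass \(p\), and the remaining mass spread uniformly over \(K-1\) further points. This gives an actual witness that \(|S_y|\), and even \(L_\delta(y)\) for small \(\delta\), can be arbitrarily large while \(H_{\infty}^{\mathrm{wc}}<s\). The paper's argument is shorter, but it leaves the negative claim as an informal remark rather than a proved statement.

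One repair is needed. With \(p=2^{-s+1}\), your construction fails at \(s=1\), the endpoint of your stated range \(s\ge 1\). There \(p=1\), the tail gets mass \(0\), and \(S_y=\{x_0\}\), so the claim \(|S_y|=K\) is false.

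You can fix this in either of two ways:
\begin{enumerate}
\item Restrict to \(s>1\); a single counterexample already suffices for a non-implication.
\item Take any \(p\) with \(2^{-s}<p<1\), for instance \(p=2^{-s/2}\). Since \(K-1\ge 2^s\), each tail point has mass \((1-p)/(K-1)\le (1-p)2^{-s}<2^{-s}<p\), so the example works for every \(s>0\).
\end{enumerate}
For \(s\le 0\) the hypothesis can never hold, so the statement implicitly presupposes \(s>0\).
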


\begin{proof}
The inequality \(H_{\infty}^{\mathrm{wc}}(X^{\star}\mid Y)<s\) is equivalent to \(p_{\max}^{\mathrm{wc}}>2^{-s}\). Therefore there is some observed value \(y\) and some candidate \(x\) with \(\Prob[X^{\star}=x\mid Y=y]>2^{-s}\). The statement concerns the best single guess. It says nothing by itself about the number of other candidates of smaller probability, so it cannot be used as a support size theorem.
\end{proof}

\begin{remark}[What must be measured]
For implementation research, the relevant empirical quantities are not only \(H_{\infty}\), but also estimated list sizes, collision rates, conditional candidate counts for reduced parameter sets, and the success probability of concrete attacks. Entropy can be high while a structural algorithm is efficient, and entropy can be low without giving a simple enumeration algorithm if the high probability list is hard to compute. Both failures must be tested.
\end{remark}

\begin{proposition}[Short public digest limitation]
Let \(h:\mathcal{Y}\to\{0,1\}^\ell\) be the only public value published from \(Y\). Then any recovery problem based only on \(h(Y)\) has at most \(2^\ell\) distinct public transcripts.
\end{proposition}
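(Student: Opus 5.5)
The plan is a direct counting argument: I would show that the set of possible public transcripts is the image of $h$, which sits inside a set of size $2^\ell$. Concretely, a public transcript for a recovery problem based only on $h(Y)$ is, by hypothesis, the value $z=h(Y)$ together with the public parameters. The parameters are fixed by the instance family and do not vary with the hidden object, so they contribute no additional distinct transcripts. The set of attainable transcripts is therefore in bijection with $h(\Phi(S_P))\subseteq h(\mathcal{Y})\subseteq\{0,1\}^\ell$, and its cardinality is at most $|\{0,1\}^\ell|=2^\ell$.

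I would then add two consequences, mirroring the style of the surrounding results. First, apply Proposition~\ref{prop:postprocessing_fibers} with $\Psi=h\circ\Phi$: every fiber of $\Psi$ is a union of $\Phi$-fibers, so publishing $h(Y)$ instead of $Y$ never splits a fiber. Second, combine the count with Theorem~\ref{thm:fiber_lower} (equivalently Corollary~\ref{cor:digest}) applied to $\Psi$: some transcript has a fiber of size at least $|S_P|/2^\ell$. Under a uniform prior, Corollary~\ref{cor:uniform_planted_success} then bounds the optimal planted exact recovery probability by $|\Psi(S_P)|/|S_P|\le 2^\ell/|S_P|$.

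The only step needing care is the first one: making precise what ``public transcript'' means. The statement has to rest on the convention that the adversary's entire view is $(\pp,h(Y))$ with $\pp$ fixed. If $\pp$ were also sampled, the honest count would be at most $2^\ell$ transcripts per parameter set. I would state this convention explicitly and note that the bound is then immediate from $|h(\mathcal{Y})|\le 2^\ell$.
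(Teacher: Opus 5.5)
Your argument is correct and is essentially the paper's: the attainable transcripts form the image of $h$, a subset of $\{0,1\}^\ell$, so there are at most $2^\ell$ of them, and the hidden objects are partitioned into at most $2^\ell$ visible classes. Your explicit convention that the public parameters are fixed is a fair clarification, and your added consequences via Proposition~\ref{prop:postprocessing_fibers}, Theorem~\ref{thm:fiber_lower}, and Corollary~\ref{cor:uniform_planted_success} are sound but go beyond what the statement requires.
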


\begin{proof}
The image of \(h\) has cardinality at most \(2^\ell\). Therefore the public transcript partitions all microscopic objects into at most \(2^\ell\) visible classes. No claim about a larger number of distinguishable public instances can be made from the transcript alone.
\end{proof}

For this reason, the cryptographic version of the framework treats \(Y\) itself as the public key. A hash may be used as an identifier, checksum, or final key derivation step, but it must not replace the observable vector when the intended hardness depends on the structure and size of \(Y\). A 256 bit digest can be appropriate for deriving a 256 bit session key from a recovered witness; it is not appropriate as the only public object if the mathematical claim concerns a much larger exact recovery landscape.

\subsection{Structural attacks}

The main cryptographic risk is not exhaustive search. The main risk is that \(\Phi\) accidentally creates a simpler problem. A useful observable family must avoid trivial telescoping, excessive linearity, low rank structure, small local decompositions, separable halves, and smooth relaxations that make exact rounding easy. The next section states these attacks more formally and gives corresponding mitigation principles.

\begin{figure}[t]
\centering
\begin{tikzpicture}[node distance=1.4cm, every node/.style={font=\small}]
\node[draw, rounded corners, fill=blue!8, align=center, minimum width=3.0cm] (gen) {Sample hidden path\\\(X^{\star}\)};
\node[draw, rounded corners, fill=green!8, align=center, right=of gen, minimum width=3.1cm] (obs) {Observable map\\\(Y=\Phi(X^{\star})\)};
\node[draw, rounded corners, fill=orange!10, align=center, right=of obs, minimum width=2.7cm] (pub) {Public key\\\(\pk=Y\)};
\node[draw, rounded corners, fill=red!8, align=center, below=of obs, minimum width=3.6cm] (adv) {Adversary tries exact\\ recovery of \(X^{\star}\)};
\draw[-{Latex[length=2mm]}, thick] (gen) -- (obs);
\draw[-{Latex[length=2mm]}, thick] (obs) -- (pub);
\draw[-{Latex[length=2mm]}, thick] (pub) |- (adv);
\draw[-{Latex[length=2mm]}, thick, dashed] (gen) -- node[left]{secret witness} (adv);
\end{tikzpicture}
\caption{Cryptographic one way interpretation. The public object is a large observable vector, while the secret witness is the exact microscopic path object.}
\label{fig:crypto_relation}
\end{figure}

\section{Attack taxonomy and mitigation principles}

This section is deliberately hostile. The purpose is to describe how the proposed relation could fail and what design restrictions should be imposed before any cryptographic claim is attempted. None of the mitigation principles below is a proof of security. They are filters that a candidate observable family must survive.

\subsection{Linearization attacks}

The first attack is to forget the path language and ask whether the public observables are just a disguised linear system.

\begin{theorem}[Affine observable collapse]
Assume \(q\) is prime. Let the microscopic object be encoded as \(v\in\F_q^N\), and suppose that the public observable has the affine form
\[
\Phi(v)=Av+c,
\]
where \(A\in\F_q^{m\times N}\) and \(c\in\F_q^m\) are public. Let \(r=\rank(A)\). If \(r=N\), then the unique witness is recoverable in polynomial time by Gaussian elimination. If \(r<N\) and the support is the full space \(\F_q^N\), then an arbitrary witness in the fiber is recoverable in polynomial time, and the fiber has exactly \(q^{N-r}\) elements.
\end{theorem}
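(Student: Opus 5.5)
The plan is to reduce both claims to Proposition~\ref{prop:linear_fiber} after translating by the public constant \(c\). Given a public value \(Y\), the witness condition \(\Phi(v)=Y\) is equivalent to the linear system
\[
Av=Y-c,
\]
and \(Y-c\) is computable from public data with \(O(m)\) field operations. The object of study is therefore the fiber of the linear map \(v\mapsto Av\) over \(Y-c\). Because \(Y\) was produced as \(\Phi(v^\star)\) for the planted \(v^\star\), we have \(Y-c=Av^\star\in\im(A)\), so this fiber is nonempty.

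\emph{Full rank case.} If \(r=N\), then \(\Ker(A)=\{0\}\) and the map \(v\mapsto Av\) is injective on \(\F_q^N\). By Proposition~\ref{prop:linear_fiber}, the nonempty fiber has \(q^{N-N}=1\) element, namely \(v^\star\). To compute it, I would run Gaussian elimination over the field \(\F_q\), which uses that \(q\) is prime, on the augmented matrix \([A\mid Y-c]\). This takes \(O(mN\min(m,N))\) arithmetic operations on elements of \(\log_2 q\) bits, so it is polynomial in \(m\), \(N\) and \(\log q\). The elimination yields \(N\) pivot columns and back substitution returns \(v^\star\). This is exactly the planted witness, so the one way recovery game is won with probability one.

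\emph{Rank deficient case.} If \(r<N\), the same elimination reduces \([A\mid Y-c]\) to row echelon form with \(r\) pivots. Consistency holds because \(Y-c\in\im(A)\). Setting the \(N-r\) free variables to zero and back substituting gives a particular solution \(v_0\) with \(Av_0=Y-c\), hence \(\Phi(v_0)=Y\). Since the support is the full space \(\F_q^N\), every such \(v_0\) is an admissible witness, so this wins the arbitrary witness game in polynomial time. The fiber is \(v_0+\Ker(A)\), and \(\dim\Ker(A)=N-r\) by rank--nullity, so Proposition~\ref{prop:linear_fiber} gives exactly \(q^{N-r}\) elements. The elimination also yields a basis of \(\Ker(A)\), which describes the whole fiber explicitly.

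No step here is a genuine obstacle. The only point needing care is the role of the full support hypothesis in the second case: without it, the output \(v_0\) might lie outside the generation support, and the fiber count would then be a count over \(\F_q^N\) rather than over \(S\). I would state this explicitly. I would also note that when \(r<N\), recovering the planted \(v^\star\) itself is not achieved. Under a uniform prior, Theorem~\ref{thm:fiber_success} bounds the success probability of any planted recovery rule by \(q^{-(N-r)}\).
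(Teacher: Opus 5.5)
Your proposal is correct and follows essentially the same route as the paper's proof. You translate by \(c\) to the system \(Av=Y-c\), use injectivity and Gaussian elimination when \(r=N\), and for \(r<N\) take a particular solution \(v_0\) and count \(v_0+\Ker(A)\) by rank--nullity; the extras you supply (nonemptiness of the fiber from \(Y=\Phi(v^\star)\), the elimination cost, the role of full support, and the \(q^{-(N-r)}\) planted-recovery bound via Theorem~\ref{thm:fiber_success}) are all valid refinements rather than a different argument.
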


\begin{proof}
The equation \(\Phi(v)=Y\) is equivalent to \(Av=Y-c\). If \(r=N\), the linear map is injective, and Gaussian elimination recovers the unique solution. If \(r<N\) and \(Y-c\in\im(A)\), Gaussian elimination gives one particular solution \(v_0\) and a basis of \(\Ker(A)\). The full solution set is \(v_0+\Ker(A)\), whose size is \(|\Ker(A)|=q^{N-r}\). Therefore an arbitrary valid witness is obtained efficiently.
\end{proof}

\textbf{Mitigation.} Every proposed observable family should be tested for affine collapse over \(\F_q\), over lifted integer representations, and over low degree linearizations of its coordinates. Nonlinearity should be introduced before aggregation, not only after a large linear map has already leaked most of the structure. A practical mitigation rule is to reject any parameter set for which a linear or affine surrogate recovers either the planted witness, an arbitrary witness, or a small conditional list.

\subsection{Approximate linearity and lattice reduction}

Gaussian noise is useful only if it does not turn the system into a standard approximate linear problem with weak parameters. If the observables admit a lift of the form
\[
Y=A v+e
\]
over the integers, with \(e\) small, the instance may resemble bounded distance decoding, integer least squares, or learning with errors \cite{regev2005,micciancio2009}. This is not automatically bad, but it means that lattice reduction becomes the natural attack.

\begin{remark}[Two safe interpretations]
There are two coherent choices. One may intentionally reduce the system to a known lattice style assumption and select parameters using lattice estimators. Or one may try to avoid approximate linear structure altogether. What is dangerous is to accidentally create an LWE like instance while choosing parameters as if no lattice attack existed.
\end{remark}

\textbf{Mitigation.} For every observable coordinate, compute linear and affine fits over several lifts, estimate residual widths, and run lattice reduction experiments on reduced parameters. If the residual is small and structured, treat the construction as a lattice problem and parameterize it accordingly. If the goal is not to rely on lattices, add nonlinear local mixing, cross time couplings, and quantization rules that prevent a stable small error linear model.

\subsection{Dynamic programming attacks}

A path problem can look enormous while still being solvable if the observable decomposes locally in time.

\begin{theorem}[Dynamic programming collapse for small additive state]
Let \(\mathcal{S}\) be a finite state space, let \(a,b\in\mathcal{S}\) be fixed endpoints, and let \(\mathcal{A}\) be a finite abelian group of observable accumulators. Suppose paths are constrained by local transitions and the observable has the form
\[
\Phi(x_0,\ldots,x_T)=\sum_{i=0}^{T-1}\phi_i(x_i,x_{i+1})\in\mathcal{A}.
\]
If \(|\mathcal{S}|=V\) and \(|\mathcal{A}|=M\), then the existence of a path from \(a\) to \(b\) with observable \(Y\) can be decided in time \(O(TV^2M)\) and space \(O(TVM)\) by dynamic programming. With parent pointers, one valid path can also be recovered within the same asymptotic time.
\end{theorem}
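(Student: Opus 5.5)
The plan is a layered reachability argument on the product space \(\mathcal{S}\times\mathcal{A}\). For each time index \(i\in\{0,\ldots,T\}\), define the Boolean table
\[
R_i(x,s)=1 \iff \text{there exist admissible } x_0=a,x_1,\ldots,x_i=x \text{ with } \sum_{j=0}^{i-1}\phi_j(x_j,x_{j+1})=s.
\]
The proof rests on the claim that \(R_T(b,Y)=1\) if and only if a path from \(a\) to \(b\) with observable \(Y\) exists. This follows from the definition once the recursion is justified. The initialization is \(R_0(a,0)=1\) and \(R_0(x,s)=0\) otherwise.

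The recursion, which I would prove by induction on \(i\), is
\[
R_{i+1}(x',s')=1 \iff \exists\,(x,s):\ R_i(x,s)=1,\ (x,x') \text{ an allowed transition},\ s'=s+\phi_i(x,x').
\]
The forward direction truncates a witness path of length \(i+1\) to its prefix of length \(i\). The converse extends a witness for \((x,s)\) by the step \(x\to x'\), which is valid because the accumulator is additive in the group \(\mathcal{A}\).

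To implement this, I would compute \(R_{i+1}\) by a push formulation. For every \((x,s)\) with \(R_i(x,s)=1\) and every \(x'\in\mathcal{S}\) with \((x,x')\) allowed, set \(R_{i+1}(x',s+\phi_i(x,x'))=1\). Each layer then costs \(V\cdot M\cdot V\) unit operations, assuming transition tests, evaluations of \(\phi_i\), and group additions in \(\mathcal{A}\) are \(O(1)\) (or are precomputed as tables). Over \(T\) layers this gives \(O(TV^2M)\) time. Storing all \(T+1\) tables of size \(VM\) gives \(O(TVM)\) space.

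For recovery, whenever a cell \((x',s')\) in layer \(i+1\) is first set to \(1\), I would record a parent pointer to the \((x,s)\) that set it. This adds only constant work per update and \(O(TVM)\) storage. If \(R_T(b,Y)=1\), following the pointers back from \((T,b,Y)\) yields states \(x_T=b,x_{T-1},\ldots,x_0=a\). The invariant that a parent of a reachable cell is itself reachable, with accumulators differing by exactly \(\phi_i\), shows that these states form a valid path with observable \(Y\). The backtrack takes \(O(T)\) time.

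The main obstacle is not the algorithm but pinning down the cost model: the bound needs the primitive operations on \(\mathcal{A}\) and the local maps \(\phi_i\) to be unit cost. I would state this explicitly as part of the proof. If the allowed transitions have out-degree \(d\), the same argument gives the sharper bound \(O(TVdM)\).
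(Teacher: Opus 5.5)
Your proposal is correct and matches the paper's proof: a layered reachability table over \(\mathcal{S}\times\mathcal{A}\) initialized at \((0,a,0)\), with \(T\) layers, \(V^2\) transitions per layer and \(M\) accumulator values, plus parent pointers for recovering a path. Your induction for the recursion and your explicit unit-cost assumption on evaluating \(\phi_i\) and on operations in \(\mathcal{A}\) spell out what the paper's proof leaves implicit.
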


\begin{proof}
For each time \(i\), state \(s\in\mathcal{S}\), and accumulator value \(u\in\mathcal{A}\), maintain whether there exists a partial path ending at \(s\) with accumulated observable \(u\). Initialize the table at \((0,a,0)\). For each transition \((s,t)\), update \(u\) to \(u+\phi_i(s,t)\). There are \(T\) layers, at most \(V^2\) transitions per layer, and \(M\) accumulator values. The table has \(O(TVM)\) entries, and storing a parent pointer for each reachable entry permits reconstruction of one path. At the end, check whether \((T,b,Y)\) is reachable.
\end{proof}

\textbf{Mitigation.} Avoid observables that are purely sums of local transition costs into a small accumulator. Add global couplings involving distant time indices, randomized index permutations, block interactions, and observables that depend on nonlocal patterns of the path. If a dynamic program can keep a small sufficient state, the candidate should be rejected.

\subsection{Meet in the middle attacks}

Even when a full dynamic program is too large, separability can give a square root attack.

\begin{proposition}[Meet in the middle for separable observables]
Suppose the microscopic object splits as \(X=(X_L,X_R)\) and the observable satisfies
\[
\Phi(X)=\Phi_L(X_L)+\Phi_R(X_R)
\]
in a public finite group. If the left and right search spaces have sizes \(N_L\) and \(N_R\), then an arbitrary witness can be found using \(O(N_L+N_R)\) group evaluations and \(O(N_L)\) stored table entries, assuming constant expected time table lookup and ignoring output verification cost.
\end{proposition}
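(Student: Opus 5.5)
The plan is the standard two-list meet in the middle: tabulate one half of the group equation and scan the other half against it. Write the public finite group additively (it is the accumulator group of the observable), and let $\mathcal{X}_L,\mathcal{X}_R$ be the left and right search spaces, so the full search space is $\mathcal{X}_L\times\mathcal{X}_R$. The witness condition $\Phi(X_L,X_R)=Y$ is rewritten as
\[
\Phi_L(X_L)=Y-\Phi_R(X_R).
\]
This is the only place separability is used. It turns a search over $N_LN_R$ pairs into a collision search between two lists.

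The algorithm has two phases.

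\textbf{Table phase.} For each $X_L\in\mathcal{X}_L$, compute $\Phi_L(X_L)$ and insert it into a hash table keyed by that group element, storing one left half $X_L$ per key. If several left halves share a key, one representative suffices, since we only need an arbitrary witness. This costs $N_L$ evaluations and at most $N_L$ entries.

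\textbf{Scan phase.} For each $X_R\in\mathcal{X}_R$, compute $Y-\Phi_R(X_R)$ and look it up in the table. On the first hit with stored $X_L$, output $(X_L,X_R)$. This costs $N_R$ evaluations, $N_R$ subtractions and $N_R$ lookups, each lookup constant expected time by assumption.

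The total is $O(N_L+N_R)$ group operations and lookups, with $O(N_L)$ storage.

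Correctness has two directions. If the algorithm outputs a pair, then $\Phi_L(X_L)=Y-\Phi_R(X_R)$ by construction of the table, so $\Phi(X)=\Phi_L(X_L)+\Phi_R(X_R)=Y$ and the output is a valid witness. Conversely, suppose any witness $(X_L',X_R')$ exists, for instance the planted one. When the scan reaches $X_R'$, the key $Y-\Phi_R(X_R')=\Phi_L(X_L')$ is present in the table, because $X_L'$ was inserted during the table phase. So the algorithm never reports failure when the fiber is nonempty.

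The main subtlety, which I expect to be the only real point needing care, is keeping the claim honest. The output is an arbitrary witness, not the planted one. Following Proposition~\ref{prop:public_relation_no_selector} and the arbitrary witness game, the proposition only claims a solution to the relation problem. I would add a remark that this is the correct win condition for $\Adv^{\mathrm{rel}}$. The cost accounting also needs two caveats: the bounds treat each evaluation of $\Phi_L$ or $\Phi_R$ and each group operation as unit cost, and enumerating all consistent pairs, rather than stopping at the first hit, could cost up to the fiber size. That output-dependent cost is exactly what the statement excludes when it says it ignores output verification cost.
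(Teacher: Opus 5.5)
Your proposal is correct and follows the paper's proof essentially step for step: tabulate $(\Phi_L(x_L),x_L)$ over the left candidates, then for each right candidate look up $Y-\Phi_R(x_R)$, with the cost dominated by the two enumerations and the lookups. Your completeness direction (a hit is guaranteed whenever the fiber is nonempty) and the one-representative-per-key refinement are small additions that the paper leaves implicit. One caveat that does not affect the argument: the phrase about ignoring output verification cost more plausibly refers to rechecking $\Phi(X)=Y$ for the returned pair than to enumerating all matches.
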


\begin{proof}
Compute and store all pairs \((\Phi_L(x_L),x_L)\) for left candidates. For every right candidate \(x_R\), compute \(Y-\Phi_R(x_R)\) and look it up in the table. A match gives \(\Phi_L(x_L)+\Phi_R(x_R)=Y\). The cost is dominated by enumerating both sides and performing table lookups.
\end{proof}

\textbf{Mitigation.} Use observables that couple early and late path segments so that no clean split exists. Dense cross block terms, randomly selected long range interactions, and challenge dependent global mixing reduce the usefulness of meet in the middle enumeration. Reduced parameter experiments should explicitly test split points at many time indices.

\subsection{SAT, SMT, and algebraic solving}

If the alphabet is small and the constraints are sparse, the exact recovery problem may be easier as a constraint satisfaction problem than as a path integral inspired problem.

\begin{proposition}[Constraint encoding warning]
Any finite exact recovery instance with \(N\) variables over \(\Z_q\) and a verification circuit of size \(C\) can be encoded as SAT using \(O(N\lceil\log_2 q\rceil+C)\) Boolean variables and \(O(C+N\lceil\log_2 q\rceil)\) clauses after a standard Tseitin transformation.
\end{proposition}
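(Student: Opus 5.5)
The plan is the standard binary-encoding-plus-Tseitin argument, done in three stages: encode each variable in binary, rewrite the verification circuit over bits, then apply the Tseitin transformation and count. Throughout I read ``verification circuit of size \(C\)'' as a Boolean circuit with fan-in at most two, or any fixed bounded fan-in. Its inputs are the bits of the candidate \(X\) together with hardwired public data \((\pp,Y)\), and it outputs \(1\) exactly when \(X\) is accepted, e.g.\ when \(\Phi(X)=Y\). If the circuit is instead given over \(\Z_q\)-arithmetic gates, the first step is to assume it has been compiled to a Boolean circuit and to let \(C\) denote the size of that compiled circuit.

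\textbf{Step 1: encoding the variables.} Represent each variable \(z_k\in\Z_q\), \(k=1,\ldots,N\), by \(\ell=\lceil\log_2 q\rceil\) Boolean variables giving its binary expansion. When \(q\) is not a power of two, some bit patterns encode integers in \([q,2^\ell)\). I will exclude these with a range constraint ``value \(<q\)'': a lexicographic comparison against the constant \(q-1\), realizable by a comparator of \(O(\ell)\) gates per variable. This stage adds \(N\ell\) input variables and a range-check subcircuit of total size \(O(N\ell)\). The alternative is to let the circuit interpret out-of-range patterns harmlessly, which costs nothing extra. Either way the bit assignments satisfying the range checks are in bijection with \(\Z_q^N\).

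\textbf{Step 2: Tseitin transformation.} Form the conjunction of the verification circuit and the range checks; it has \(O(C+N\ell)\) gates. Introduce one fresh variable per gate, and add the constant number of clauses (at most four per binary gate) expressing that the gate variable equals the gate function of its inputs. Finally add a unit clause asserting the output gate. The variable count is then \(N\ell+O(C+N\ell)=O(N\lceil\log_2 q\rceil+C)\), and the clause count is \(O(C+N\ell)\).

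\textbf{Step 3: correctness.} Given an accepted \(X\), its bits together with the induced gate values satisfy every clause. Conversely, in any satisfying assignment each gate variable is forced to the correct gate value. The output clause therefore certifies acceptance, and the range clauses guarantee that the input bits decode to an element of \(\Z_q^N\). Hence satisfying assignments correspond bijectively, by projection, to accepted witnesses.

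The main obstacle is not the counting but fixing the model. The statement does not say what ``circuit of size \(C\)'' means, and unbounded fan-in or \(\Z_q\)-valued gates would break the linear bounds. I will therefore state the bounded fan-in Boolean convention explicitly, and I will note that the range check is exactly what produces the \(N\lceil\log_2 q\rceil\) term in the clause count.
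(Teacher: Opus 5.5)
Your proposal is correct and follows the same route as the paper. Both arguments encode each $\Z_q$ variable in $\lceil\log_2 q\rceil$ bits with a constraint excluding unused patterns, compile the check $\Phi(X)=Y$ into a Boolean circuit of size $C$, and apply Tseitin with one auxiliary variable and constantly many clauses per gate. Your explicit $O(\lceil\log_2 q\rceil)$-gate range comparator and bounded fan-in convention just make precise what the paper's short proof leaves implicit.
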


\begin{proof}
Encode each \(\Z_q\) variable by \(\lceil\log_2 q\rceil\) Boolean variables and constrain unused binary representations if \(q\) is not a power of two. Translate the arithmetic circuit computing \(\Phi(X)\) and the equality test \(\Phi(X)=Y\) into a Boolean circuit of size \(C\). A standard Tseitin transformation introduces one auxiliary variable per gate and a constant number of clauses per gate, producing an equisatisfiable CNF with the claimed asymptotic size.
\end{proof}

\textbf{Mitigation.} Large dimension alone is not enough. The verification circuit must be dense and globally coupled enough that SAT, SMT, and algebraic solvers do not exploit sparsity. The software should generate small instances, solve them with off the shelf solvers, and extrapolate cautiously. If reduced instances scale too gently, the observable family is not a good candidate.

\subsection{Approximation, rounding, and saddlepoint guided attacks}

The slogan of the paper is that approximation is not exact recovery. This is true only when approximate solutions cannot be rounded into exact witnesses.

\begin{lemma}[Rounding attack condition]
Let \((\mathcal{X},d)\) be a metric space of encoded microscopic objects. Suppose valid witnesses for a given public value are separated by distance at least \(\delta>0\). If an approximation algorithm returns \(\widetilde{X}\) with \(d(\widetilde{X},X^{\star})<\delta/2\), then ideal nearest neighbor rounding over the valid witness set recovers \(X^{\star}\) uniquely.
\end{lemma}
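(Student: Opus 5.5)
The plan is a triangle inequality argument in the metric space \((\mathcal{X},d)\). Write \(W_Y\) for the set of valid witnesses for the public value \(Y\), i.e.\ \(W_Y=\Phi^{-1}(Y)\) intersected with the admissible encoded objects. The planted object \(X^{\star}\) lies in \(W_Y\) by construction. We take the separation hypothesis to mean \(d(W,W')\ge\delta\) for all distinct \(W,W'\in W_Y\). Ideal nearest neighbor rounding means outputting a minimizer of \(W\mapsto d(\widetilde{X},W)\) over \(W_Y\). We must show that this minimizer exists, is unique, and equals \(X^{\star}\).

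The key step is as follows. Let \(W\in W_Y\) with \(W\ne X^{\star}\). By the triangle inequality,
\[
d(\widetilde{X},W)\ge d(X^{\star},W)-d(\widetilde{X},X^{\star})>\delta-\frac{\delta}{2}=\frac{\delta}{2}>d(\widetilde{X},X^{\star}).
\]
So \(X^{\star}\) is strictly closer to \(\widetilde{X}\) than every other valid witness. The infimum over \(W_Y\) is therefore attained at \(X^{\star}\) and only there, even if \(W_Y\) is infinite, since every competitor is strictly farther than \(X^{\star}\). Hence the rounding output is exactly \(X^{\star}\). Equivalently, the open ball \(B(\widetilde{X},\delta/2)\) contains exactly one valid witness, namely \(X^{\star}\), and the balls of radius \(\delta/2\) around distinct witnesses are disjoint.

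The main obstacle is not the calculation but making precise what "ideal" rounding and "valid witness set" mean. I would state explicitly that the rounding oracle is given membership in \(W_Y\), which it can test via the public relation \(\Phi(X)=Y\), and can compute a nearest element of it. The lemma makes no claim about efficiency. I would also note that uniqueness uses the strict inequality \(d(\widetilde{X},X^{\star})<\delta/2\): at equality, ties with another witness at distance exactly \(\delta\) are possible. Finally, I would remark in the spirit of Proposition~\ref{prop:public_relation_no_selector} that the conclusion recovers the planted witness even when the fiber is not a singleton. The approximation \(\widetilde{X}\) itself supplies the selecting side information that bare relation verification lacks.
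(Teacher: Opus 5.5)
Your proof is correct and is essentially the paper's triangle-inequality argument. The paper argues by contradiction that any other valid witness within \(\delta/2\) of \(\widetilde{X}\) would lie within \(\delta\) of \(X^{\star}\); you run the same inequality directly to show every competitor is strictly farther from \(\widetilde{X}\) than \(X^{\star}\), which makes the nearest-neighbor conclusion explicit rather than implicit.
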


\begin{proof}
If there were another valid witness \(X'\ne X^{\star}\) with \(d(\widetilde{X},X')<\delta/2\), then by the triangle inequality
\[
d(X^{\star},X')\le d(X^{\star},\widetilde{X})+d(\widetilde{X},X')<\delta,
\]
contradicting the separation assumption. Hence \(X^{\star}\) is the unique valid witness within radius \(\delta/2\).
\end{proof}

\textbf{Mitigation.} The design should avoid stable basins in which continuous approximation lands close enough for deterministic rounding. Micro perturbations, discrete Gaussian ambiguity, quantization, and many near macroscopic competitors can all increase the gap between coarse localization and exact recovery. This claim must be tested: if saddlepoint or gradient methods followed by rounding recover the witness, the candidate fails.

\subsection{Statistical leakage and multi instance attacks}

A single public instance may look safe while many instances reveal the generator. If macro increments, micro perturbations, projection salts, or Gaussian samples are generated from related seeds, public keys can leak correlations.

\begin{definition}[Multi instance leakage test]
A distribution family passes a basic multi instance leakage test at sample size \(N_s\) if no tested distinguisher can distinguish \(N_s\) public keys from the proposed generator versus an idealized reference distribution with non negligible advantage.
\end{definition}

\textbf{Mitigation.} Use independent domain separated randomness for each source: macro increments, micro perturbations, noise, projection choices, nonlinear mixing constants, and quantization salts. If rejection sampling is used by the path generator, publish and analyze the acceptance predicate as part of the distribution, since rejection criteria can bias the accepted paths.

\subsection{Compression and generator attacks}

A path that can be compressed is not automatically weak, but compression is a warning that the generator has structure. A path generator that mines for irregular looking outputs can also introduce selection bias.

\textbf{Mitigation.} Measure compressibility of the witness encoding, entropy of increment histograms, autocorrelation, Fourier spectra of coordinate sequences, and predictability of \(x_{i+1}\) from previous states. The generator should reject obvious low complexity paths, but it must also avoid selecting such a narrow class that the rejection predicate itself becomes exploitable.

\subsection{Witness collision attacks}

The distinction between planted recovery and arbitrary witness finding is central. If any witness in the fiber \(\Phi^{-1}(Y)\) can be used to derive the same secret or pass the same authentication, then the adversary does not need \(X^{\star}\). It only needs some \(X\) with \(\Phi(X)=Y\).

\textbf{Mitigation.} A future primitive must choose one of three strategies. First, enforce near injectivity on the generated support and treat collisions as failures. Second, define a canonical equivalence class and derive secrets only from canonical data. Third, make the planted witness contain additional secret components that alternative public witnesses do not provide. Each strategy changes the assumption and must be stated explicitly.

\subsection{Quantum attacks beyond Grover}

Grover search gives a generic square root speedup \cite{grover1996}; Shor's algorithm does not directly apply unless the observable family hides an abelian group structure suitable for a hidden subgroup formulation \cite{shor1997}. However, other quantum techniques may exploit walks on the path graph, amplitude amplification of a structural subroutine, or hidden shift like symmetries.

\textbf{Mitigation.} Parameters should include a quantum margin: an intended \(\lambda\) bit quantum security level requires at least about \(2\lambda\) bits of effective generic search space before structural attacks. More importantly, the observable family should be checked for group actions, periodicities, and symmetries that could turn the problem into a known quantum friendly structure.

\subsection{Mitigation matrix}

\begin{longtable}{p{0.20\textwidth}p{0.34\textwidth}p{0.34\textwidth}}
\toprule
Attack surface & Failure mode & Mitigation principle \\
\midrule
Linearization & Gaussian elimination recovers a witness or a small affine list & Reject low rank affine surrogates, mix nonlinearly before aggregation, test all lifted linear models \\
Approximate linearity & Lattice reduction solves a small error system & Either parameterize as a lattice assumption or destroy stable small error linear lifts \\
Dynamic programming & Local additive observables admit a small sufficient state & Add global time couplings and long range interactions \\
Meet in the middle & The observable separates into independent halves & Use cross block terms and randomized long range dependencies \\
SAT or SMT & Sparse constraints over small alphabets solve efficiently & Increase global density, test reduced instances with solvers, avoid overly local constraints \\
Approximation and rounding & Saddlepoint or gradient methods reach a roundable basin & Use micro perturbations, quantization, and competing near macroscopic paths \\
Statistical leakage & Many public keys reveal generator bias & Domain separate randomness and test distributions against distinguishers \\
Compression & Witnesses come from a low complexity subfamily & Measure compressibility and avoid exploitable rejection bias \\
Witness collision & Any fiber element breaks the primitive & Decide between injective support, canonical equivalence, or planted secret data \\
Quantum structure & Hidden group structure gives more than Grover & Search for symmetries and keep a quantum security margin \\
\bottomrule
\end{longtable}

\section{From one way relation to primitive design}

A one way relation is not yet a public key encryption scheme. This section describes what would be required to turn exact hidden paths into a primitive.

\subsection{Hashing the secret witness}

A simple one way function candidate is
\[
f_{\pp}(X)=\Phi_{\pp}(X).
\]
A derived secret may be defined as
\[
K=H(\Enc(X)),
\]
where \(H\) is a cryptographic hash and \(\Enc(X)\) is an exact encoding of the microscopic object. This is useful for commitments, key derivation from a secret witness, or challenge response, but it is not a public key encryption mechanism by itself because a public sender cannot compute \(K\) without \(X\).

\begin{proposition}[When arbitrary witnesses give the same derived value]
\label{prop:kdf_factors}
Let \(K:\mathcal{X}\to\mathcal{K}\) be a deterministic value derived from a witness. The following are equivalent.
\begin{enumerate}[label=(\alph*)]
\item \(K(X)=K(X')\) whenever \(\Phi(X)=\Phi(X')\).
\item There exists a function \(G:\Phi(\mathcal{X})\to\mathcal{K}\) such that \(K=G\circ\Phi\).
\end{enumerate}
Therefore, if a primitive is intended to accept any witness in a fiber as equivalent, the derived value must factor through the public observable or through an explicitly defined quotient. If \(K\) does not factor this way, different witnesses in the same fiber may produce different secrets.
\end{proposition}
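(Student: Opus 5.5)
The plan is to prove the two implications separately, using the standard factorization of a function through a surjection onto its image. Throughout, regard \(\Phi\) as a map from \(\mathcal{X}\) onto its image \(\Phi(\mathcal{X})\), so that every \(y\in\Phi(\mathcal{X})\) has a nonempty fiber \(\Phi^{-1}(y)\).

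For (b) \(\Rightarrow\) (a), which is immediate, suppose \(K=G\circ\Phi\). If \(\Phi(X)=\Phi(X')\), then
\[
K(X)=G(\Phi(X))=G(\Phi(X'))=K(X').
\]

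For (a) \(\Rightarrow\) (b), construct \(G\) fiber by fiber.
\begin{enumerate}[label=(\roman*)]
\item For each \(y\in\Phi(\mathcal{X})\), choose any \(X_y\in\Phi^{-1}(y)\). A choice exists because \(y\) lies in the image, and it is effective since \(\mathcal{X}\) is finite.
\item Set \(G(y)=K(X_y)\).
\item Check that \(G\) is well defined: if \(X_y'\) is another element of the same fiber, then \(\Phi(X_y)=\Phi(X_y')\), so (a) gives \(K(X_y)=K(X_y')\). Hence \(G(y)\) does not depend on the representative.
\item Check the factorization: for an arbitrary \(X\), put \(y=\Phi(X)\). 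Then \(X\) and \(X_y\) lie in the same fiber, so (a) gives \(K(X)=K(X_y)=G(\Phi(X))\). Thus \(K=G\circ\Phi\).
\end{enumerate}
This mirrors the well-definedness argument for \(\psi\) in Proposition~\ref{prop:quotient_identifiability}.

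The concluding remarks then follow.
\begin{itemize}
\item \emph{Factoring through a quotient.} If an equivalence \(\sim\) is given and \(K\) is constant on classes, the same construction yields a function on \(S/{\sim}\). When \(\Phi\) is class identifying, Proposition~\ref{prop:quotient_identifiability} shows that this coincides with factoring through \(\Phi\).
\item \emph{Failure to factor.} If \(K\) does not factor through \(\Phi\), then (b) fails, so (a) fails. That is, there are \(X,X'\) with \(\Phi(X)=\Phi(X')\) but \(K(X)\ne K(X')\): two witnesses in the same fiber produce different secrets.
\end{itemize}

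There is no real obstacle. The only care needed is to restrict \(G\) to the image \(\Phi(\mathcal{X})\), so that every fiber is nonempty, and to show well-definedness before the factorization identity.
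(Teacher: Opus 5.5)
Your proof is correct and is essentially the paper's argument: (b)\(\Rightarrow\)(a) by direct substitution, and (a)\(\Rightarrow\)(b) by picking a representative \(X_y\) of each nonempty fiber, setting \(G(y)=K(X_y)\), and using constancy of \(K\) on fibers for both well-definedness and the identity \(K=G\circ\Phi\). One minor caution on your quotient aside: class identifiability alone only shows that a function constant on \(\sim\)-classes factors through \(\Phi\) (via \(\psi\) in Proposition~\ref{prop:quotient_identifiability}), and the two notions of factoring coincide only when \(\sim\) is also observable compatible.
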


\begin{proof}
If \(K=G\circ\Phi\), then \(\Phi(X)=\Phi(X')\) immediately implies \(K(X)=K(X')\). Conversely, suppose \(K\) is constant on every fiber of \(\Phi\). For \(y\in\Phi(\mathcal{X})\), choose any \(X_y\) with \(\Phi(X_y)=y\) and define \(G(y)=K(X_y)\). This definition is independent of the chosen representative because \(K\) is constant on fibers. Then \(G(\Phi(X))=K(X)\) for every \(X\).
\end{proof}

\subsection{Identification protocol sketch}

A natural first primitive is an identification protocol rather than encryption. The prover knows \(X^{\star}\) such that \(Y=\Phi(X^{\star})\). The verifier knows \(Y\). A protocol would ask the prover to demonstrate knowledge of a valid witness without revealing it.

A simple but insecure template is:
\[
\text{commit to a masked path},
\]
\[
\text{receive a challenge},
\]
\[
\text{open selected consistency checks}.
\]
This resembles the high level structure of many zero knowledge proof systems, but a secure protocol would require careful soundness, zero knowledge simulation, and resistance to witness extraction. The exact hidden path framework supplies a relation; it does not automatically supply a secure proof system.

\subsection{Commitment style interpretation}

One can also view \(Y=\Phi(X)\) as a commitment to a hidden path. Binding means that it is hard to find two distinct openings \(X\ne X'\) such that
\[
\Phi(X)=\Phi(X').
\]
Hiding means that \(Y\) does not reveal useful information about \(X\). These two requirements are in tension. Strong hiding usually implies large fibers, while strong binding demands small or hard to find fibers. This tension is exactly why observable design is central.

\begin{definition}[Collision resistance of the observable map]
An observable family is collision resistant on a distribution \(\mathcal{X}_\lambda\) if no probabilistic polynomial time adversary, given the public parameters for security parameter \(\lambda\), can output \(X\ne X'\) in the admissible domain with
\[
\Phi(X)=\Phi(X')
\]
with non negligible probability in \(\lambda\).
\end{definition}

Collision resistance is stronger than planted recovery hardness in some directions and weaker in others. It rules out finding any two colliding witnesses, but it does not necessarily imply that a planted witness cannot be recovered from its image.

\subsection{Toward key encapsulation}

A future key encapsulation mechanism would need an additional public operation that lets an encapsulator create a ciphertext from \(Y\) while allowing only the holder of \(X^{\star}\) to derive the shared secret. The most conservative statement is therefore:

\begin{quote}
Exact hidden path recovery supplies a candidate hard relation. A KEM requires an additional trapdoor transformation, reconciliation mechanism, or proof based compiler that has not yet been specified in this paper.
\end{quote}

This prevents a common mistake: confusing a one way relation with public key encryption. The present paper belongs at the assumption and relation level. The later software project can generate instances, measure observable leakage, and test whether such a relation is worth building on.

\section{Cryptographic parameter considerations}

Although concrete parameters are premature, the framework suggests several measurable quantities.

\subsection{Public key size}

Let the observable vector contain \(m\) entries, each carrying \(\ell\) bits after quantization. Then the public observable size is
\[
L=m\ell.
\]
Unlike many deployed schemes, the present framework may naturally require large public keys. This is not automatically a flaw at the research stage. A large public key can be mathematically necessary if the goal is to publish enough observable structure for verification while avoiding collapse into a short digest.

If the generated support has size \(|S_\lambda|\) and exact injective recovery on that support is desired, then Corollary~\ref{cor:digest} gives the necessary condition
\[
L\ge \lceil\log_2 |S_\lambda|\rceil.
\]
If the target is not injective recovery but relation finding, then this lower bound is no longer the right requirement; instead one must analyze the number and usefulness of alternative witnesses in each fiber.

\begin{example}[Observable size]
If \(m=4096\) and each observable entry has \(16\) bits, then
\[
L=65536\text{ bits}=8192\text{ bytes}.
\]
This is a large but not absurd public object for experimental cryptographic research.
\end{example}

\subsection{Effective security bits}

A conservative generic estimate should be based on guessing probability rather than on the raw size of the hidden path space. After observing \(\pp\) and \(Y\), define the average optimal guessing probability
\[
P_{\mathrm{guess}}=
\mathbb{E}_{\pp,Y}\left[\max_x \Prob[X^{\star}=x\mid \pp,Y]\right].
\]
The associated guessing entropy is
\[
H_{\mathrm{guess}}(X^{\star}\mid \pp,Y)=-\log_2 P_{\mathrm{guess}}.
\]
For a distribution with nearly flat conditional fibers, this quantity is close to the logarithm of the typical conditional fiber size. In that idealized case, a generic classical exhaustive search has scale roughly
\[
2^{H_{\mathrm{guess}}(X^{\star}\mid \pp,Y)},
\]
and a generic Grover style quantum search has scale roughly
\[
2^{H_{\mathrm{guess}}(X^{\star}\mid \pp,Y)/2}.
\]
These are not hardness proofs. They are only baseline estimates after the observable family has survived structural attacks. If the observable map linearizes, admits a small dynamic program, produces a tractable lattice instance, or has exploitable SAT structure, then the guessing entropy estimate may be irrelevant.

\subsection{Noise width and correctness}

Internal noise can increase microscopic ambiguity, but cryptographic correctness depends on what must be recovered. If the holder of the secret stores \(\eta^{\star}\), then exact verification can include it. If the noise is not stored, a later protocol must tolerate errors or use reconciliation. These are different designs.

\begin{proposition}[Stored noise exactness]
If \(\eta^{\star}\) is sampled during key generation and stored as part of \(\sk\), then exact recomputation of
\[
Y=\Phi(X^{\star})=\Phi(x_0^{\star},\Delta^{\star},\eps^{\star},\eta^{\star})
\]
is deterministic.
\end{proposition}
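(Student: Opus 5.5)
The plan is to reduce the claim to the definition of the microscopic object and the fact that \(\Phi_{\pp}\) is a fixed deterministic map of its arguments. First I will record what \(\sk\) contains: the full tuple \(X^{\star}=(x_0^{\star},\Delta^{\star},\eps^{\star},\eta^{\star})\), including the sampled noise \(\eta^{\star}\). Second, I will observe that the state path \(\gamma(X^{\star})=(x_0,\ldots,x_T)\) is determined by iterating the transition rule \eqref{eq:transition}. This recursion is a deterministic function of \(x_0^{\star}\) and the three sequences, since each step is addition modulo \(q\). By induction on \(i\), every \(x_i\) is uniquely fixed once \(\sk\) is fixed.

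Third, I will invoke the observable system definition: \(\Phi=(Q_1,\ldots,Q_m)\) is a function \(\mathcal{X}\to\mathcal{Y}\) fixed by the public parameters \(\pp\). Applying it to the determined tuple, together with the determined path if the components use states, yields a unique value. Hence any two evaluations of \(\Phi(X^{\star})\) from the same \(\sk\) and \(\pp\) coincide with the published \(Y\).

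The only real obstacle is making explicit what is being assumed, since the claim is close to definitional. I must state that \(\Phi\) uses no fresh randomness at evaluation time. In particular, there is no observation noise \(e_j\) as in \eqref{eq:quantized} or Proposition~\ref{prop:obs_noise_identifiability}, unless that noise is also stored in \(\sk\) or derived from a stored seed. Any randomized observation rules in the auxiliary component \(\mathcal{A}\) of \(\pp\) must likewise be fixed public data rather than resampled. Quantization \(\lfloor\cdot\rceil\) is a deterministic function, provided it uses a fixed tie-breaking convention and exact arithmetic, so it poses no problem.

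Under these standing conventions the proof closes. It composes two deterministic maps: \(\sk\mapsto(X^{\star},\gamma(X^{\star}))\), followed by \(\Phi_{\pp}\).
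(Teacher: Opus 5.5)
Your proposal is correct and takes the same route as the paper: once \(\eta^{\star}\) and the rest of \(X^{\star}\) are fixed in \(\sk\), the fixed map \(\Phi\) receives an identical input and so returns an identical output. Your caveat that \(\Phi\) must not draw fresh observation noise at evaluation time usefully states an assumption that the paper's one-line proof leaves implicit.
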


\begin{proof}
Once \(\eta^{\star}\) is fixed and stored, it is no longer random during verification. The observable map receives the same complete microscopic input object and therefore returns the same output.
\end{proof}

\subsection{Distributional independence}

If many public keys are generated with related randomness, the path distribution may leak. A cryptographic generator must therefore domain separate all random sources used for macro increments, micro perturbations, Gaussian noise, projection choices, and quantization salts. In mathematical notation this means that the components of \(\pp\) and \(X^{\star}\) should not share hidden low dimensional seeds unless the induced correlations are analyzed.

\section{Attack checklist for future implementations}

A future implementation should not only generate visually irregular high dimensional paths. It must attempt to break them. The following checklist is part of the proposed research methodology for the future PathFrog software and for the path generation module tentatively called QuantumMiner.

\begin{enumerate}[label=(\roman*)]
\item Run exact enumeration for tiny parameters and compute true fiber sizes.
\item Run Gaussian elimination on direct, lifted, projected, and approximate linear forms of the observables.
\item Run lattice reduction when an approximate linear model with small residuals is found.
\item Encode small and medium instances as SAT, SMT, and polynomial systems.
\item Test dynamic programming attacks for every decomposition over time, blocks, or coordinates.
\item Test meet in the middle splits at many time indices and coordinate partitions.
\item Apply continuous relaxations, saddlepoint localization, gradient methods, and rounding.
\item Train simple predictors to detect distributional regularities in generated paths.
\item Measure compression of secret paths and reject highly compressible samples.
\item Estimate conditional collision rates by sampling alternative witnesses.
\item Compare approximate reconstruction quality with exact recovery success.
\item Run multi instance distinguishers against public key distributions.
\item Check for symmetry, periodicity, and hidden group actions.
\item Estimate generic quantum cost by halving the effective search exponent after all structural reductions.
\item Document every failed parameter set, since failed parameters are evidence about the boundary of the framework.
\end{enumerate}

A mined path is not good merely because it looks irregular. It is good only if known recovery methods fail, if the generated distribution remains well defined, and if the public observables have enough information for verification without exposing an efficient inverse map.

\section{Proof obligations and what is not claimed}

Because the framework is intended to become a cryptographic research direction, it is useful to state the proof obligations precisely. The elementary theorems in this paper prove counting, identifiability, projection, entropy, and attack collapse statements. They do not prove asymptotic hardness of any concrete observable family. A future hardness claim would require at least the following additional items.

\begin{enumerate}[label=(\roman*)]
\item A fixed asymptotic instance distribution \(\mathcal{I}_\lambda\), not only an informal generator.
\item A public observable family \(\Phi_\lambda\) with efficiently computable verification.
\item A clear target relation: planted recovery, arbitrary witness finding, canonical recovery, or recovery modulo equivalence.
\item Proof or evidence that known polynomial time reductions do not collapse the relation to linear algebra, lattice decoding with weak parameters, dynamic programming, meet in the middle, or sparse constraint solving.
\item Parameter estimates after structural reductions, not before them.
\end{enumerate}

Thus the paper proves that several common shortcuts are invalid, for example replacing a large observable vector by a short digest or treating approximate localization as exact recovery. It does not prove that every proposed observable is hard. This distinction is part of the mathematical formulation.

\section{Cryptographic status of the proposal}

The correct status of the proposal is the following.

\begin{enumerate}[label=(\alph*)]
\item It defines a candidate exact recovery relation.
\item It motivates a post quantum research direction because Shor type attacks do not directly apply.
\item It does not yet define a secure KEM, PKE, signature, or commitment scheme.
\item It requires extensive cryptanalysis of observable families, including the attack classes listed above.
\item It requires explicit mitigation choices for linearization, lattice reduction, dynamic programming, meet in the middle, SAT solving, statistical leakage, approximation plus rounding, and witness collisions.
\item It requires careful parameter selection and implementation testing.
\end{enumerate}

This status is not a weakness. Many cryptographic research programs begin by isolating a relation or assumption before producing efficient primitives. The important requirement is honesty about what has and has not been proved.

\section{Discussion}

The framework is built around a simple but important separation. Path integral methods often tolerate approximation because they seek global quantities. Exact hidden path recovery does not. If the hidden object is a discrete microscopic path, then recovering a nearby or dominant trajectory is not the same as recovering the exact object.

The most important design lessons are:

\begin{enumerate}[label=(\alph*)]
\item Large internal path spaces are easy to define, but do not by themselves prove hardness.
\item Public observables must not be compressed into a short digest if the mathematical goal is to study exact recovery.
\item High dimension and projection create ambiguity, but observable structure determines whether recovery is possible.
\item Micro perturbations and internal Gaussian noise increase the gap between coarse approximation and exact microscopic reconstruction.
\item Linear observables are analyzable but may leak too much or reduce the problem to known algebra.
\item Quantization hides precision but creates collisions.
\end{enumerate}

This paper should be read as a foundation for both mathematics and cryptographic research. It defines a family of exact recovery problems, gives elementary bounds, identifies the main structural hazards, states a candidate one way relation, and now gives an explicit attack and mitigation taxonomy. The advanced work lies in constructing observable systems that are neither trivially invertible nor information theoretically hopeless, then testing them against classical and quantum relevant attack models.

\section{Conclusion}

We introduced Exact Noisy Hidden Path Recovery as a mathematical framework for studying exact reconstruction of micro perturbed paths in noisy high dimensional discrete path spaces. Inspired by path integral thinking, the framework separates approximation of global path sum behavior from exact recovery of one microscopic trajectory.

We defined discrete path spaces over \(\Z_q^n\), transition models with macro increments, micro perturbations, and internal noise, and observable systems based on projection, aggregation, nonlinearity, and quantization. We proved elementary growth, projection, fiber, and information bounds, including the no short digest principle: the effective recovery problem is bounded by the information content of the published observables. A huge hidden path space should not be collapsed into a small hash if the objective is to study mathematical recovery.

The paper does not claim a deployed cryptographic construction. Its purpose is to establish a precise mathematical and cryptographic language for future investigation. The next stage is to build concrete observable families, analyze their fibers, test recovery algorithms, and determine whether there exist parameter regimes where coarse approximation is feasible but exact microscopic recovery remains difficult. Only after such evidence exists should one attempt a complete key encapsulation mechanism, encryption scheme, signature scheme, or proof system.

\appendix

\section{Notation summary}

\begin{longtable}{ll}
\toprule
symbol & meaning \\
\midrule
\(q\) & modulus or finite field size \\
\(n\) & state dimension \\
\(T\) & path length \\
\(x_i\) & state at time \(i\) \\
\(\gamma\) & state path \((x_0,\ldots,x_T)\) \\
\(\Delta_i\) & macro increment \\
\(\eps_i\) & micro perturbation \\
\(\eta_i\) & internal noise \\
\(\mathcal{D}\) & allowed macro increment set \\
\(\mathcal{E}\) & allowed micro perturbation set \\
\(D_\sigma\) & noise distribution, often discrete Gaussian \\
\(X\) & complete microscopic path object \\
\(\Phi\) & observable map \\
\(Y\) & public observable vector \\
\(\Enc\) & exact encoding function \\
\bottomrule
\end{longtable}

\section{Additional proof details}

\begin{lemma}[Pigeonhole form]
Let \(f:A\to B\) be a map of finite sets. If \(|A|>|B|M\), then some fiber of \(f\) has size greater than \(M\).
\end{lemma}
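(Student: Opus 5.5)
The plan is a direct counting argument by contradiction, using the partition of \(A\) into fibers exactly as in the proof of Theorem~\ref{thm:fiber_lower}. Since \(f\) is a function, every element of \(A\) lies in exactly one fiber \(f^{-1}(y)\), \(y\in B\). So the fibers are pairwise disjoint and their union is \(A\), which gives
\[
|A|=\sum_{y\in B}|f^{-1}(y)|.
\]
Fibers over points of \(B\) outside the image are empty and contribute zero. This costs nothing because the sum ranges over all of \(B\), which is the bound we want.

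Next, suppose for contradiction that every fiber satisfies \(|f^{-1}(y)|\le M\). Summing this bound over the \(|B|\) points of \(B\) and substituting into the identity yields \(|A|\le |B|M\), contradicting the hypothesis \(|A|>|B|M\). Hence some \(y\in B\) has \(|f^{-1}(y)|>M\).

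Equivalently, one can argue directly from the average. The average fiber size over \(B\) is \(|A|/|B|>M\), and a maximum is at least an average. For this form I would handle the degenerate case \(B=\varnothing\) separately: then \(A\) must be empty, since a map \(A\to\varnothing\) exists only when \(A=\varnothing\), and \(|A|>0\) fails, so the hypothesis cannot hold. The contradiction form above avoids the division and covers this case automatically, since an empty sum gives \(|A|=0\le 0\).

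There is no real obstacle. The only points needing care are that \(M\) may be any real number, with the argument unchanged, and that the conclusion is the strict inequality \(>M\), which the contradiction step delivers directly. Taking \(M=|\mathcal{X}|/|\mathcal{Y}|-\epsilon\) also recovers Theorem~\ref{thm:fiber_lower} as a sanity check.
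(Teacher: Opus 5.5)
Your proposal is correct and is essentially the paper's proof: both write \(|A|=\sum_{b\in B}|f^{-1}(b)|\), assume every fiber has size at most \(M\) to get \(|A|\le |B|M\), and conclude by contradiction with \(|A|>|B|M\). Your remarks about the averaging variant, the empty codomain, and real-valued \(M\) are accurate but go beyond what the paper's proof needs.
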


\begin{proof}
If every fiber had size at most \(M\), then
\[
|A|=\sum_{b\in B}|f^{-1}(b)|\le |B|M,
\]
contradicting \(|A|>|B|M\).
\end{proof}

\begin{lemma}[Endpoint constraint count for full paths]
Let \(T\ge 1\). The number of paths \((x_0,\ldots,x_T)\in(\Z_q^n)^{T+1}\) with fixed endpoints \(x_0=a\) and \(x_T=b\) is \(q^{n(T-1)}\).
\end{lemma}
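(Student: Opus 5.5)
The plan is a direct bijective count. A path with \(x_0=a\) and \(x_T=b\) is determined entirely by its intermediate states \((x_1,\ldots,x_{T-1})\). Each of these may be chosen freely in \(\Z_q^n\), because the full path space \(\OmegaPaths\) imposes no transition constraint. So I will write down the map
\[
(x_1,\ldots,x_{T-1})\longmapsto (a,x_1,\ldots,x_{T-1},b)
\]
from \((\Z_q^n)^{T-1}\) to the endpoint constrained set \(\Omega_{n,T,q}(a,b)\). Next I will check that it is a bijection: it is injective because the intermediate coordinates are read off directly, and surjective because any path with the prescribed endpoints arises from its own middle segment. Counting the domain then gives \(|\Z_q^n|^{T-1}=q^{n(T-1)}\).

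Two points need care. The first is the degenerate case \(T=1\). Then there are no intermediate states, and the only candidate path is \((a,b)\). This path exists for every choice of \(a,b\), since no compatibility between consecutive states is required, so the count is \(1=q^0\). It is worth stating explicitly that an empty product of state spaces is a singleton. The second is that \(q\) need not be prime, but nothing here uses field structure: \(\Z_q^n\) is only a finite set of size \(q^n\).

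As an alternative check, I can invoke Theorem~\ref{thm:character_count} with \(\mathcal{D}=\F_q^n\) when \(q\) is prime. A full path corresponds bijectively to its increment sequence \(\Delta_i=x_{i+1}-x_i\), and that theorem gives \(N_T^{\mathcal{D}}(a,b)=q^{n(T-1)}\) directly. The paper's statement, however, covers general \(q\), so the main proof will be the direct bijection; the character count serves only as a consistency remark. It also agrees with the endpoint case of Theorem~\ref{thm:projection} when \(k=0\). None of these steps is a real obstacle; the only point needing attention is handling \(T=1\) uniformly.
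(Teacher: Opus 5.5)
Your proposal is correct and follows the same approach as the paper's proof, which simply notes that the $T-1$ intermediate states range freely over $\Z_q^n$ with $q^n$ choices each. Your explicit bijection, your remark that the case $T=1$ is an empty product equal to $1$, and the character-sum cross-check for prime $q$ are all sound additions.
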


\begin{proof}
The intermediate states \(x_1,\ldots,x_{T-1}\) are arbitrary elements of \(\Z_q^n\). There are \(q^n\) choices for each of the \(T-1\) intermediate states.
\end{proof}

\section{A possible research program}

A natural continuation of this work is the construction of explicit observable families with tunable information content. One possible program is:

\begin{enumerate}[label=(\roman*)]
\item Define several observable families, including linear, quadratic, local nonlinear, quantized, and mixed systems.
\item For small parameters, compute exact fiber sizes by enumeration.
\item For medium parameters, estimate collision rates by sampling.
\item Compare recovery algorithms across the hierarchy of recovery notions.
\item Identify observables that avoid telescoping and obvious linearization.
\item Study whether exact recovery appears to scale differently from coarse recovery.
\end{enumerate}

Such a program would help determine whether the framework contains hard inverse problems or whether typical observable choices collapse under known methods. Either outcome is mathematically useful.

\end{document}